\newtheorem{lemma}{Lemma}
\newtheorem{Proposition}{Proposition}
\newtheorem{theorem}{Theorem}
\newtheorem{remark}{Remark}
\begin{document}

\title{Energy-Efficient Hybrid Offloading for Backscatter-Assisted Wirelessly Powered MEC with   Reconfigurable Intelligent Surfaces}

	\author{S. Zargari, C. Tellambura, \textit{Fellow, IEEE}, and S. Herath,~\IEEEmembership{Member,~IEEE}
	\thanks{S. Zargari and C. Tellambura are with the Department of Electrical and Computer Engineering, University of Alberta, Edmonton, Canada, e-mails: \{zargari, ct4g\}@ualberta.ca.}
	\thanks{
	S. Herath is  with the Huawei Canada, 303 Terry Fox Drive, Suite 400, Ottawa, Ontario K2K 3J1 (e-mail:  sanjeewa.herath@huawei.com)}}
 	\maketitle

\begin{abstract}
 We investigate a wireless power transfer (WPT)-based backscatter-mobile edge computing (MEC) network with a reconfigurable intelligent surface (RIS).In this network,   wireless devices (WDs)  offload task bits and harvest energy,  and they can switch between   backscatter communication (BC) and active transmission (AT) modes. We exploit the  RIS to maximize energy efficiency (EE). To this end,  we optimize the time/power allocations, local computing frequencies, execution times, backscattering coefficients, and RIS phase shifts. This goal results in a multi-objective optimization problem (MOOP) with conflicting objectives. Thus,  we simultaneously maximize system throughput and minimize energy consumption via the Tchebycheff method, transforming into two single-objective optimization problems (SOOPs). For throughput maximization, we exploit alternating optimization (AO)  to yield two sub-problems. For the first one, we derive closed-form  resource allocations. For the second one, we design the RIS phase shifts via  semi-definite relaxation, a difference of convex functions programming, majorization minimization  techniques, and a penalty function for enforcing a rank-one solution. For energy minimization, we derive closed-form  resource allocations.  We demonstrate the gains over several benchmarks. For instance, with a  $20$-element RIS,  EE can be as high as 3 (Mbits/Joule),  a  150\%  improvement over the no-RIS case  (achieving only 2 (Mbits/Joule)).
\end{abstract}

\begin{IEEEkeywords}
Computation offloading, {reconfigurable intelligent surface}, mobile edge computing, backscatter communications, wireless powered communication networks, MOOP, SOOP, Pareto Optimality.
\end{IEEEkeywords}

\section{Introduction}\label{sec:introduction}

	\IEEEPARstart{T}{he} internet of things (IoT)  networks are forecasted to connect billions of wireless devices (WDs).  They are typically small-sized physical devices that are limited by throughput,  complexity, and power/processing capabilities.   Computation offloading to a mobile edge computing (MEC)  server and energy harvesting can thus alleviate these issues.  For example, WDs can harvest energy from a power beacon (PB) and execute their tasks bits locally and/or offload them to a MEC server \cite{Letaief,Lu,Bi1}. A real MEC testbed shows that computation offloading reduces latency up to 88\% and energy consumption up to 93\% \cite{Dolezal}. However, using active radio frequency (RF) signals for offloading task bits consumes a vast portion of the harvested energy. Thus, WDs backscatter (reflect) an ambient RF signal to transmit data, saving significant energy compared to active transmission (AT). Thus, backscatter communication (BC) will also be vital for sixth-generation (6G) wireless and beyond fifth-generation (5G) \cite{Saad}.

    However, simply integrating wireless power transfer (WPT), MEC, and BC is not enough. For instance, fading and other impairments (e.g., shadowing, path losses,  interference, and others) can impair the channel between a WD and the MEC server. Consequently, the poor reliability of bit offloading and energy harvesting (EH) efficiency can diminish the overall network performance. Thus, to alleviate these problems,   we suggest the use of the {reconfigurable intelligent surface (RIS)} technology, which improves energy efficiency (EE) \cite{Qingqing}. A  RIS panel comprises low-cost, adaptable elements that can smartly reflect incident signals by altering the phase shifts \cite{Wu}. {RISs} are nearly passive devices,  requiring only a small amount of energy that can be harvested from  RF signals, although this option is not pursued in this paper. As the RIS improves the wireless channel,  WDs can harvest more energy and offload task bits more reliably. Furthermore, RIS integrates well with  WPT to improve energy- and spectral- efficiency by intelligently adjusting the surrounding radio environment \cite{Yongjun_Xu}.

    Thus, there is a prima facie expectation of performance gains. \textbf{However, to the best of our knowledge, no prior work has studied the EE gains of RIS-BC-MEC networks, the focus of this paper.} Before the discussion of our contributions, we highlight some related works below.

	\subsection{Prior Works}
	In WPT-MEC networks, WDs harvest energy and use the harvest-then-transmit (HTT)  protocol to partially or fully offload their computation tasks  \cite{Bi1,Cui1,Chae}. Thus, each WD follows a binary (i.e., local or remote task execution) offloading policy \cite{Bi1}. This work maximizes all the WDs' weighted sum computation rate by jointly optimizing the mode selection (i.e., local computing or offloading) and the time allocation between WPT and task offloading. This yields a combinatorial optimization problem, and  \cite{Bi1} develops two effective solutions. Unlike \textit{binary} offloading, \textit{partial} offloading involves partitioning a task into two parts and offloading only one of them. This approach is considered in  \cite{Cui1} and \cite{Chae} to minimize the energy source and the total energy consumption of the MEC server by jointly optimizing the EH time allocation, energy transmit beamforming, partial computation offloading scheme, and central processing unit (CPU) frequencies.
	
	Offloading task bits to the MEC server is often done with active radios. But local oscillators and other RF components consume high powers and leave little energy available for the WDs' local computation, degrading the performance. To overcome this issue, each WD can also utilize BC instead. However, the rate and range of BC might be low. Thus, a combination of  BC and AT  can help each WD's offloading.  For instance,  for a multiple WD network with a  hybrid access point (HAP),  \cite{Zhong}  maximizes a reward function of MEC offloading by a workload allocation scheme and a price-based distributed time. Reference \cite{Cheng3} minimizes the HAP's energy consumption by optimizing the offloading and BC time allocations. Reference \cite{Lu} maximizes the weighted sum computation bits by jointly optimizing the transmit powers, time allocations, BC reflection coefficients, execution times, and local computing frequencies. 
	
	On the other hand, there are several investigations of RIS-MEC systems {\cite{Zhang,Nallanathan,Shanfeng,Zhiyang_Li,Xiaoyan_Hu,Yang_Liu}}. In \cite{Zhang}, a RIS helps two WDs offload to an access point (AP) connected with an edge cloud, where the sum delay is minimized by optimizing the RIS phase shifts and computation-offloading scheduling of the WDs. Reference \cite{Nallanathan} minimizes the latency by optimizing the edge computing resource allocation, multi-WD detection matrix, RIS phase shifts, and computation offloading volume. RIS-aided edge caching can minimize the network cost by optimizing the beamformers of the base station (BS), RIS phase shifts, and content placement at cache units \cite{Shanfeng}. In \cite{Zhiyang_Li}, the integration of BS and MEC with the aid of RIS is studied in which each WD can compute its task locally or partially/fully offload it to the BS. To support multiple WDs, the non-orthogonal multiple access (NOMA) protocol is considered. In particular,  the sum energy consumption of all WDs is minimized by joint optimization of the size of transmission data, power control,  transmission time/decoding order,  transmission rate, and RIS phase shifts. \cite{Xiaoyan_Hu} maximizes the total completed task-input bits of all WDs by jointly optimizing the receive beamforming vectors at the AP, energy partition strategies (local computing and offloading) of the WDs, and RIS phase shifts subject to limited energy budgets during a given time slot. In particular, three approaches are proposed to address this optimization problem, namely block coordinate descending (BCD) as a classical approach and two deep learning architectures based on the channel state information (CSI) and locations of the WDs by employing the BCD algorithm via supervised learning for training. Reference \cite{Yang_Liu} considers an AP equipped with a MEC to reduce the weighted sum of time and a RIS to improve the connections between WDs and AP. In particular, the earning of the MEC for loading computing is maximized by optimizing RIS phase shifts while guaranteeing a customized information rate for each WD. Furthermore, \cite{Tellambura} and \cite{Park} study the performance of the RIS in BC systems.  Optimizing carrier emitter's beamformers and RIS phase shifts to minimize the transmit power is investigated in \cite{Park}. 
	
	\subsection{Our Contributions}
	We briefly mention some previous contributions. In \cite{Ye}, a BC-aided WPT network is investigated; In \cite{Bi1,Cui1 ,Chae}, WPT MEC networks are studied; In \cite{Lu, Zhong, Cheng3 }, BC-aided WPT MEC networks are investigated. Furthermore, the integration of RIS with MEC or BC is studied in \cite{Zhang, Nallanathan, Shanfeng } and \cite{Tellambura, Park, Niyato}, respectively.  \textbf{Despite these advances, the performance of the BC-aided WPT MEC networks supported by RIS has not been addressed before to the best of our knowledge.}

	This paper fills this gap.  We investigate the deployment of a RIS to improve the performance of a BC-aided WPT MEC network (Fig. 1). The  PB, MEC server, and WDs are all single-antenna radios. Each WD harvests energy from PB's signals to recharge its battery and arbitrarily partitions the computation task into local computing and offloading parts. In particular, unlike \cite{Tellambura,Park,Ye,Niyato,Liang2019}, we consider hybrid WDs that operate in two modes,  namely BC and AT \cite{Lu,Zhong,Cheng3}. Furthermore, we assume that  WPT  and offloading tasks coincide over orthogonal frequency bands. In addition, a time division multiple access (TDMA) protocol enables the coordination of computation offloading, where different WDs offload their respective tasks to the MEC server over orthogonal time slots.    The main contributions and novelty aspects  of this paper are summarized as follows:

	\begin{enumerate}
		\item[$\bullet$]  Previous works  \cite{Bi1,Cui1,Chae,Zhong,Cheng3,Ye,Zargari,Niyato} consider the linear EH model. But it does not accurately capture the non-linear characteristics of practical EH circuits. To overcome this issue,  we consider a non-linear EH process  based on the rational model \cite{Chen}. 
		
		\item[$\bullet$] In \cite{Lu,Bi1} weighted sum computation rate of all the WDs is maximized whereas, in \cite{Cui1,Chae,Cheng3}, the total energy consumption is minimized. These formulations lead to different optimization problems. However, both the total throughput and energy consumption are the desirable objective functions, so we formulate an EE maximization problem. This problem requires jointly optimizing the time/power allocations, backscattering coefficients, local computing frequencies, execution times, and RIS phase shifts. Thus, our problem formulation is more general and subsumes these references.

		\item[$\bullet$] The formulated problem is a native multi-objective optimization problem (MOOP) that maximizes the total throughput while simultaneously minimizing energy consumption. Therefore, we resort to the Tchebycheff method to transform the MOOP into two single-objective optimization problems (SOOPs) instead of the widely used Dinklebach method \cite{Dinkelbach}. This method can yield a complete Pareto optimal set \cite{Mili,Mehrdad_Kolivand} compared to other approaches, such as $\epsilon$-method, weighted product method, and exponentially weighted criterion \cite{Miettinen}.
		
		\item[$\bullet$] We exploit alternating optimization (AO) to split the first SOOP (throughput maximization) into two sub-problems. For the first one, we derive closed-form resource allocations. For the second one, we design the RIS phase shifts via semi-definite relaxation (SDR), a difference of convex functions (DC) programming, and majorization minimization (MM) techniques. However, because of the interplay between RIS  and the BC WPT MEC network, RIS phase shifts appear quadratically, which is unlike existing works \cite{Zhong,Bi1,Cui1 ,Chae,Cheng3 ,Zhang ,Shanfeng, Tellambura,Park,Ye}. We reformulate these quadratic terms into linear forms, which is novel. An SDR solution is optimal if it is ranked one. But since it is not guaranteed, we augment the objective function to penalize the violation of the rank-one constraint. For the second SOOP (energy minimization), we only derive closed-form resource allocations since the energy consumption minimization problem is independent of the RIS phase shifts.

		\item[$\bullet$] Numerical results reveal the effectiveness of our proposed resource allocation. It yields notable gains compared to the other baseline schemes. More specifically, it can achieve  EE as high as 3 (Mbits/Joule), a  150\%  improvement over the no-RIS case supporting 2 (Mbits/Joule). Thus, RIS improves the system EE  of RIS-BC-MEC networks.  
		
	\end{enumerate}

	\textbf{Notations}: Boldfaced lowercase and uppercase letters represent column vectors and matrices,  respectively.  $(\mathbf{A})^T$,   $(\mathbf{A})^H$,   $\text{Tr}(\mathbf{A})$,  and $\text{Rank}(\mathbf{A})$ denote denote the transpose, Hermitian conjugate transpose,  trace, and rank of matrix $\mathbf{A}$, respectively. $\mathbf{A}\succeq\mathbf{0}$ indicates a positive semidefinite matrix. A vector is diagonalized by the operator $\text{diag}(\cdot)$.  A  circularly symmetric complex Gaussian (CSCG) random vector with mean $\boldsymbol{\mu}$ and covariance matrix $\boldsymbol{C}$ is denoted by $\sim \mathcal{C}\mathcal{N}(\boldsymbol{\mu},\,\boldsymbol{C})$. $\mathbb{E}(\cdot)$ is the statistical expectation. $\mathbb{C}^{M\times N}$ represents an $M\times N$ dimensional complex matrices, $\mathbb{R}^{N\times 1}$ denotes an $N\times 1$ dimensional real vectors $\|\mathbf{a}\|$ and $|b|$ denote the Euclidean norm of $ \mathbf{a} $  and the magnitude of  complex number $b$, respectively. $\nabla_\mathbf{x}f(\mathbf{x})$ denotes the gradient vector of function $f(\mathbf{x})$ with respect to $\mathbf{x}$. Finally, $[x]^+ = \max\{x,0\}$.

	\section{System Model}
	\begin{figure}[t]
		\centering
		\includegraphics[width=3.5in]{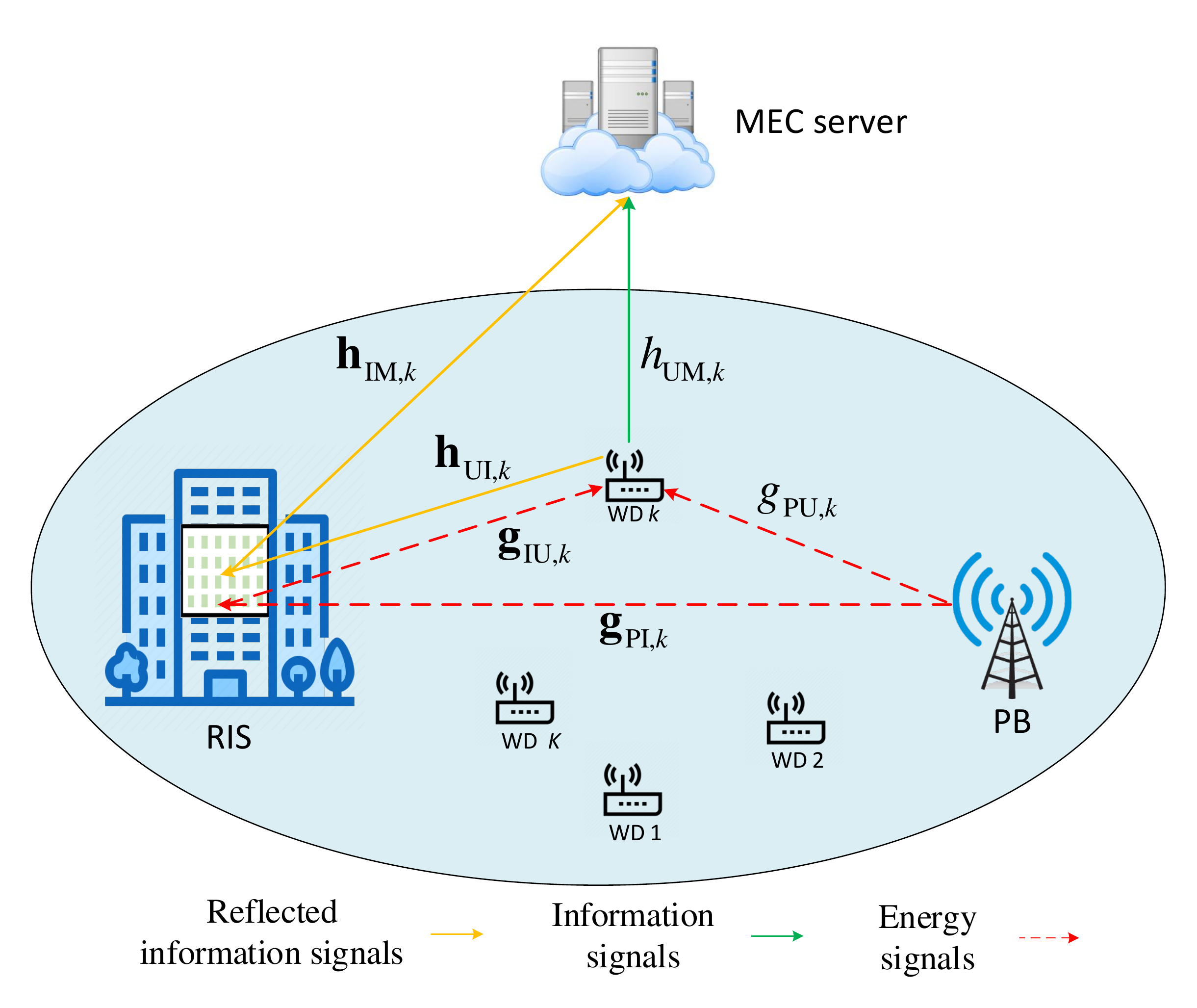}
		\caption{\small {Hybrid offloading  RIS-BC-MEC  network.}}
	\end{figure} 
	\subsection{System Setup}
	The system model (Fig. 1) comprises one PB, one MEC server, a RIS, and multiple WDs indexed by $\mathcal{K}=\{1,..., K\}$. Each WD  harvests energy to recharge its battery and has a  BC circuit, AT   circuit, and  EH module\footnote{Each WD has separate computing and offloading circuits and thus can perform local computation and task offloading simultaneously \cite{Bi1}.}. The PB transmits the energy signal. Each WD  harvests from it and uses the HTT protocol to partially offload task bits to the MEC server.  
    The RIS helps this process. Indeed, EH and the RIS can work in tandem to help WDs achieve data rates and lengthen the battery life. The PB, MEC server, and all WDs are equipped with a single antenna, while the RIS has $N$ reflecting elements. Hereafter, we denote the PB, each WD, the RIS, and the MEC server by $P$, $U$, $I$, and $M$, respectively.

	We consider the partial offloading policy by assuming that the task bits of each task are bit-wise independent \cite{Lu}. Also, we assume that each WD can adjust its computing frequency using the dynamic voltage scaling (DVS) technology\footnote{Note that performing EH compensates for the increasing hardware cost of the realization of DVS in each WD's circuit.} \cite{Bi1}. 
		\begin{figure}[t]
		\centering
		\includegraphics[width=3.5in]{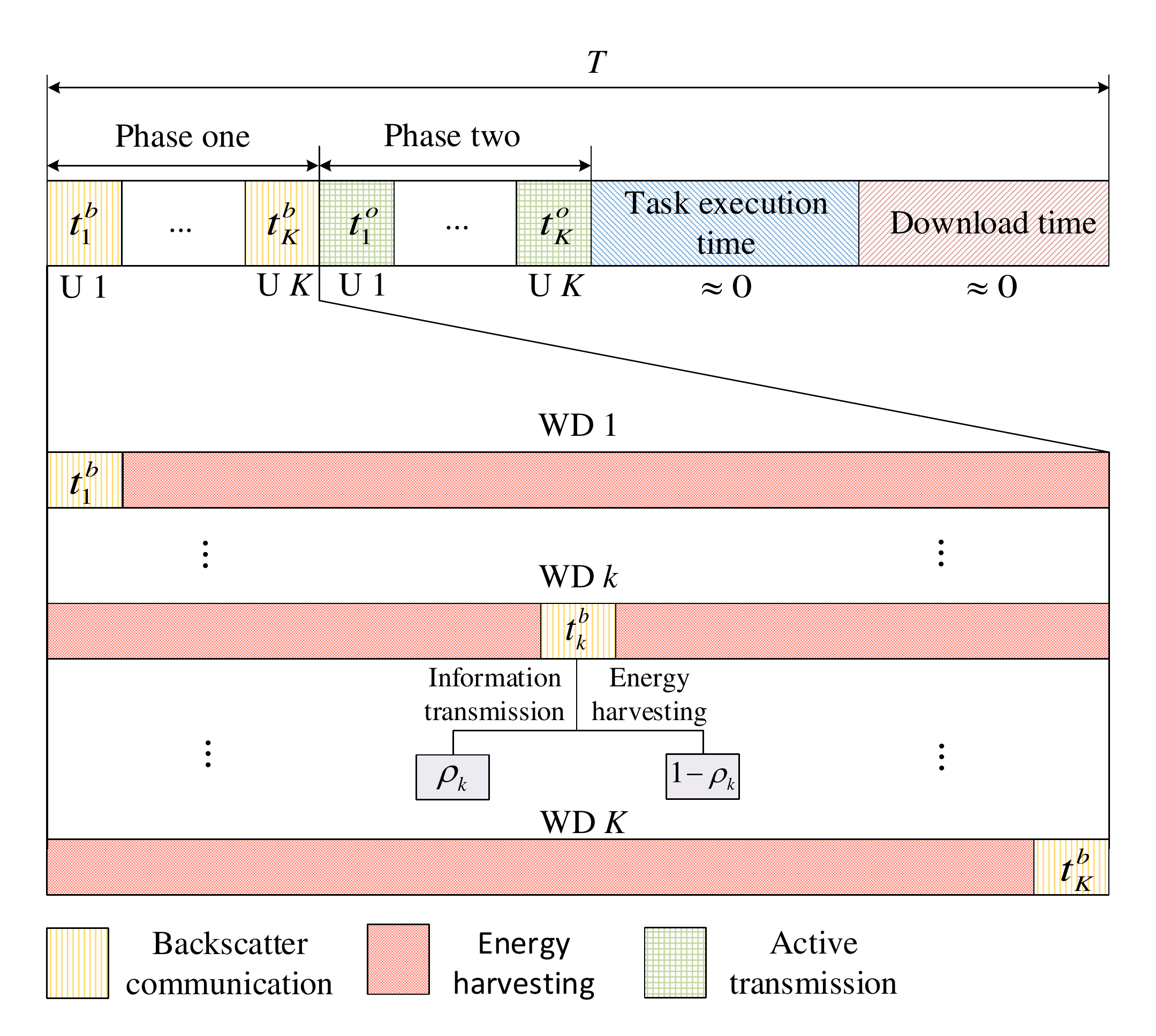}
		\caption{\small  {The time frame and the  different phases.}}\label{fig22}
	\end{figure}
	In the following, we describe the time frame of the system completely under four phases: 
	\begin{enumerate}
	 \item The PB transmits the energy signal.  Each WD harvests energy and offloads (via BC)  task bits to the MEC server simultaneously within its time slot and also harvests energy when other WDs start to offload. Indeed, each WD modulates its data over the received signal by tuning its load impedance into a number of states to map its data bit on the incident RF signal. For instance, in binary phase-shift keying (BPSK), the tag switches between two load impedances to generate bit “0” or “1” by matching or mismatching with antenna impedance, indicating absorbing or reflecting, respectively \cite{Fatemeh_Rezaei}.
	 \item The PB stops transmitting. The  WDs switch to the AT  mode and offload their task bits. 
	 \item The MEC server executes all the received computation tasks.
	\item The MEC server downloads the computation results to the WDs\footnote{The MEC server is high performance one, and the computation results are short. Thus,  downloading/computation times are negligible.}.
		\end{enumerate}

	Fig. 2 illustrates the time frame ($T$ duration)  for different events.  The $k$-th WD, $k\in\mathcal{K}$,  has the following parameters. 
	\begin{enumerate}
	    \item It has  BC time allocation of $t^b_k$ in 
	 phase one.
	 \item  It has  AT  time allocation and  transmit power $t^o_k$ and $p_k$, respectively, in the second phase.
	 \item It uses execution time, $0 \leq \tau_k \leq T$, and  local computing frequency, $f_k$, for performing local computation.
	\end{enumerate}   
	The transmit power of PB in the first phase is given by $P_0$. Finally, $g_{\text{PU},k}\in\mathbb{C}$,~$\mathbf{g}_{\text{PI}}\in\mathbb{C}^{N\times 1}$,~and $\mathbf{g}_{\text{IU},k}\in\mathbb{C}^{N\times 1}$ are the channels from the PB to WD $k$,~the PB to the RIS,~and the RIS to WD $k$,~respectively.~$h_{\text{UM},k}\in\mathbb{C}$,~$\mathbf{h}_{\text{UI},k}\in\mathbb{C}^{N\times 1}$,~and $\mathbf{h}_{\text{IM}}\in\mathbb{C}^{N\times 1}$ are the channels from WD $k$ to the MEC,~WD $k$ to the RIS,~and the RIS to the MEC,~respectively.

It should also be noted that BC is only suitable for short-range communications due to round-trip path loss. For this reason,  a RIS platform can improve spectral efficiency and leverage the phase shifts array gain to enhance the communication range.  Furthermore, to reduce the power consumption of WDs spent on data processing, we consider a MEC network. Accordingly, the applications of this system setup include smart homes/cities, healthcare, wearables, radio frequency identification (RFID),  manufacturing,  retail, warehouses, and others \cite{Jameel2019}  where the high computational task bit is required and/or round-trip path attenuation is severe.

\begin{remark}
The basic BC setup comprises a WD (tag) and a reader (receiver). The tag communicates with the reader via reflecting and modulating its data over the  RF carrier. Based on the RF source,  BC configurations can be monostatic, bistatic, or ambient. In the monostatic BC setup, the reader provides the unmodulated RF carrier to activate the WD. However, this setup results in low coverage due to the dyadic path loss \cite{Fabio_Bernardini}. In the bistatic configuration, the  RF emitter (or PB) is separated from the reader, increasing the communication range \cite{Abdelmoula_Bekkali}. Lastly, ambient BC uses  RF signals from ambient unpredictable/uncontrollable RF sources such as WiFi or TV signals \cite{Xiao_Lu}.  Out of these, we consider the bistatic BC configuration.  Accordingly, not only does PB act as an RF emitter, but it also helps the WDs in providing sufficient energy based on the HTT protocol to perform AT.	\end{remark}
\begin{remark}
 WDs in the BC system can be categorized into passive, semi-passive, and active tags depending on how they obtain power and utilize it. A passive tag only relies on the received power from the reader (RF emitter or PB) to operate, while a semi-passive tag has a  battery, albeit not without the cost of the extra weight and size. However, an active tag has  RF components and a battery similar to an active transmitter. As most applications demand passive and semi-passive tags because of their low cost,  low power consumption,  and non-generation of radio noise \cite{Jameel2019}, we focus on such semi-passive tags. 
\end{remark}

	\subsection{Channel Model}
		This system model presupposes perfect knowledge of all CSI\footnote{The RIS system may acquire CSI in two ways, depending on whether the RIS elements are equipped with receive RF chains or not. If the RIS has RF chains, traditional techniques can be used to estimate the PB-RIS, RIS-WDS, and RIS-MEC channels. If the RIS does not, uplink pilots and  RIS reflection patterns can be designed to estimate the channel links \cite{Qingqing, Wu}. Also, the algorithms designed in this letter can be extended to a robust optimization algorithm to reduce the effect of channel errors, e.g., \cite{Shahrokh_Farahmand}. } by using two smart RIS controllers between PB-RIS and RIS-MEC server, and all the channels experience quasi-static fading. These assumptions are standard throughout the literature (e.g., see \cite{Zargari1, Park, Niyato, Nallanathan}) and fairly realistic. With these two assumptions, our results identify an upper bound on the system performance. However, works such as \cite{Mishra,Lin} develop the estimation of channels involving the PB, WD, and the MEC. In contrast, one can estimate WD-RIS reflection channels by switching off the reflectors one at a time.  Note that the PB serves only as an RF power source, so the RF signal sent by the PB can be known to the MEC server. By utilizing the channel estimation approaches introduced in \cite{Ye}, the MEC server can find all the instantaneous CSI. In this way, the MEC can eliminate the interference caused by the PB via performing the successive interference cancellation (SIC). Based on the known CSI and performing SCI, the optimal resource allocation policy can be determined, which is then forwarded to the WDs and the PB which is reasonable since the energy signal transmitted by the PB can be predefined and known by the MEC\footnote{Indeed, MEC server first decodes the transmitted signals of the PB and then removes them from the received signal to get its information successively. Thus, the optimal resource allocation policy is then forwarded to the WDs and the PB.} \cite{Ye,Lu}.

	Furthermore, the RIS consists of $N$ reflecting elements, denoted by $\theta_n\in[0,2\pi)$, $\forall n\in\mathcal{N}= \{1,...,N\}$. For maximum reflection efficiency, the amplitude gain of each RIS element is set to one, i.e., $|\theta_n|^2=1$ $\forall n$. Let us denote the vector including the phase shifts of all reflecting elements as ${\boldsymbol{\theta}} = [\theta_1,\theta_2,...,\theta_N]^T$. Accordingly, we define the RIS phase shift matrix as $\boldsymbol{\Theta}= \text{diag} (e^{j\theta_1},e^{j\theta_2},...,e^{j\theta_N})$. Moreover, the cascade channel links between the PB-to-RIS-to-WD $k$ and the $k$-th WD-to-RIS-to-MEC are given by $\mathbf{g}_{\text{PI}}^H\boldsymbol{\Theta}\mathbf{g}_{\text{IU},k}$ and $\mathbf{h}_{\text{UI},k}^H\boldsymbol{\Theta}\mathbf{h}_{\text{IM}}$, respectively. Thus, the overall cascaded channel links from the PB-to-WD $k$ and the $k$-th WD-to-MEC are given by $g_k\triangleq g_{\text{PU},k}+\mathbf{g}_{\text{PI}}^H\boldsymbol{\Theta}\mathbf{g}_{\text{IU},k}$ and $h_k\triangleq h_{\text{UM},k}+\mathbf{h}_{\text{UI},k}^H\boldsymbol{\Theta}\mathbf{h}_{\text{IM}}$, respectively.
	
	\subsection{Signal Model}
	\subsubsection{First phase}
 In this phase, each WD splits the received signal of the PB into two parts by applying the backscattering coefficient $\rho_k$, which takes a value between “0” and “1”. The first part is utilized for  EH, and the second part is used by the WD to backscatter task bits to the MEC server. 
 
 Due to this arrangement, the received signal at the MEC server for executing the computation tasks is given by $
	y_{k,1}(t)= \sqrt{\rho_kP_0}h_kg_kx_1(t)+z_1(t),~\forall k,$ where  the $k$-th WD's  information symbol is  $x_1(t)\sim \mathcal{C}\mathcal{N}(0,\,1)$,
	and $z_1(t)$ is the received circularly symmetric complex Gaussian (CSCG) noise with zero mean and variance $\sigma^2$. The achievable offloading throughput of the BC mode  in the first phase can thus  be expressed as
	\begin{equation}\label{1}
	\Gamma^b_k=t^b_k\log_2(1+\frac{\zeta\rho_kP_0|h_k|^2|g_k|^2}{\sigma^2}),~\forall k,
	\end{equation}
	where $\zeta$ measures the signal-to-noise ratio (SNR) gap between practical modulation and coding schemes and depends on acceptable information outage and bit error rate (BER) \cite{Kim2017}. This also can be interpreted as a penalty for not utilizing ideal capacity-achieving Gaussian codes \cite{Liang2019}.  
	
	 With fraction  $1-\rho_k$ reserved for  EH, the harvested energy\footnote{We do not consider the harvested energy from the received noise, which is negligible compared to the received signal power.} at WD $k$ according to the linear model  is given by $E^h_k = (1-\rho_k)\eta_k t^b_kP_0g_k,~\forall k, $
	where $\eta_k$ is the energy conversion efficiency \cite{Zargari}. However, the linear model ignores the non-linear behavior of actual EH circuits, which motivates the development of practical non-linear EH models \cite{Boshkovska}. For instance, based on the sigmoidal function, a parametric non-linear EH model is given in  \cite{Boshkovska}. Its parameters are associated with the circuit properties and can be obtained by a curve fitting tool \cite{Boshkovska}. Nevertheless, this non-linear model makes the optimization problem of interest mathematically intractable.
	
	Thus, we use a more simplistic, yet non-linear, model,  namely  the rational model \cite{Chen}. This model yields the harvested power as  $f(x)=\frac{a_kx+b_k}{x+c_k}-\frac{b_k}{c_k}$, where parameters $a_k$, $b_k$, $c_k$ are determined by standard curve fitting with measured data. Accordingly, the amount of harvested energy during time $t^b_k$ in the first phase (Fig. \ref{fig22}) for WD $k$, $ k \in \mathcal K, $ can be written as:
	\begin{equation}
	E^b_k =\left(\frac{a_k(1-\rho_k)P_0|g_k|^2+b_k}{(1-\rho_k)P_0|g_k|^2+c_k}-\frac{b_k}{c_k}\right)t^b_k.
	\end{equation}
	Each WD also continues to harvest energy while other WDs take turns to offload task bits to the MEC server via BC in the first phase (Fig. \ref{fig22}). This amount of energy for WD $k$ can be stated as $\sum_{\scriptstyle i= 1, i \ne k}^KP^b_kt^b_i$. Finally, at the end of the first phase, the total harvested energy at WD $k$, $ k \in \mathcal K,$ can be written as $E^t_k =E^b_k+\sum_{\scriptstyle i= 1, i \ne k}^KP^b_kt^b_i$, where $P^b_k=\frac{a_kP_0|g_k|^2+b_k}{P_0|g_k|^2+c_k}-\frac{b_k}{c_k}.$
	\subsubsection{Second phase}
	The  received signal at the MEC server can now be written as 
	\begin{equation}
	    y_{k,2}(t)=\sqrt{p_k}h_kx_2(t)+z_2(t),~\forall k,
	\end{equation} where the $k$-th WD transmits   information symbol $x_2(t)\sim \mathcal{C}\mathcal{N}(0,\,1)$.
	The other term,  $z_2(t)$,  is the CSCG noise with zero mean and variance $\sigma^2$. Without loss of generality, we assume that the  noise variances in the first and second phases are the same. Consequently, the offloading throughput of the $k$-th  WD, $ k \in \mathcal K,$ via AT can be represented as
	\begin{equation}
	\Gamma^o_k=t^o_k\log_2(1+\frac{p_k|h_k|^2}{\sigma^2}). 
	\end{equation}
	Accordingly, the total offloading throughput of the $k$-th WD,   $ k \in \mathcal K,$ is given by $\Gamma^t_k = \Gamma^b_k + \Gamma^o_k.$ 
	
	Next,  the energy consumption  of the WD and the number of  local bit computations of  the WD   can be modeled  as ${E}_{\text{Cal},k} = \epsilon_kf_k^3\tau_k,~\forall k$, and ${\Gamma}_k =\frac{\tau_kf_k}{C_\text{cpu}},~\forall k$, respectively,
	where $C_\text{cpu}$ is the number of CPU  cycles required for calculating one bit,  and $\epsilon_k$ is the effective capacitance coefficient of the $k$-th WD's CPU \cite{Cui1}.  Finally, the total system throughput,  total offloading bit rate  plus the total computation bit rate, can be expressed as $R_{\text{sum}}=\sum_{\scriptstyle k= 1}^K(\Gamma^t_k+ {\Gamma}_k).$  
	
	\subsection{Energy Consumption Model}
	\subsubsection{First phase}  During this phase, the energy  consumption of each WD follows a linear model since it only offloads task bits to the MEC server without the need for complex circuit operations. Thus, the energy  consumption of the $k$-th WD, $k\in\mathcal{K}$, at the  constant circuit energy rate is  given by
	\begin{equation}
	  E_{1,k}=P_{c,k}t^b_k,\forall k, 
	\end{equation}
	where $P_{c,k}$ describes the circuit energy rate of the $k$-th WD, $k\in\mathcal{K}$.  
	
	\subsubsection{Second phase}
	In this phase, the energy consumed by the $k$-th WD, $ k\in\mathcal{K}$,  is more complicated since it operates in the  AT mode and performs local computing. Thus, its   energy consumption  can be modeled as 
	\begin{equation}
	   E_{2,k} = \frac{p_k}{\delta}t^o_k + p_{c,k}t^o_k+\epsilon_kf_k^3\tau_k,~\forall k,
	\end{equation}
	where $p_{c,k}$ denotes the circuit energy rate of the $k$-th WD in the AT mode, and $\delta$ is the power amplifier efficiency. In practice, the energy consumption  must  meet ${E}_{1,k} +E_{2,k} \leq E^t_k +Q_k,~\forall k$,  an  energy causality constraint.
	Furthermore, it is assumed that each WD has a certain amount of initial energy $Q_k$ \cite{Qingqing_Q}. This assumption has not been considered in the previous works \cite{Bi1,Lu,Cui1,Cheng3,Ye,Kim2017}. Indeed, each WD can utilize the energy left from previous phase  to achieve a high EE in the current phase. Besides, $Q_k$ could also be the energy harvested from other sources such as wind and solar \cite{Qingqing_Q}. Therefore, the total energy consumption of all WDs can be expressed as $E_\text{total}=\sum_{\scriptstyle k= 1}^K (E_{1,k}+E_{2,k}).$

	\section{Energy-Efficient Resource Allocation}
	This section studies  the resource allocation for the considered system to maximize the system EE. We thus  jointly optimize the time/power allocations, backscattering coefficients, local computing frequencies, execution times, and RIS phase shifts. We  first define the system EE, which is the ratio of the total system throughput and the total energy consumption given by $\eta_{EE}=\frac{R_{\text{sum}}}{E_{\text{total}}}$. For simplicity and convenience,  define  vector  optimization variables: $\mathbf{t} = [t^b_1,...,t^b_K,t^a_1,...,t^a_K]$ representing the  time allocation variables; $\mathbf{p} = [P_0,p_1,...,p_K]$ being the transmit powers of the PB and each WD; $\boldsymbol{\tau} = [\tau_1,...,\tau_K]$ represents the collection of execution times; $\mathbf{f} = [f_1,...,f_K]$ represents the collection of local computing frequencies; ${\boldsymbol{\rho}} = [\rho_1,\rho_2,...,\rho_K]^T$ represents the collection of all backscattering coefficients;	${\boldsymbol{\theta}} = [\theta_1,\theta_2,...,\theta_N]^T$ represents the collection of all phase shifts. Mathematically speaking, the system EE maximization problem can be formulated as 
	\begin{subequations}
		\begin{align}
		\text{(P1)}: ~~& \underset{\mathbf{t},\boldsymbol{\rho},\mathbf{p},\boldsymbol{\tau},\mathbf{f},\boldsymbol{\theta}}{\text{maximize}} \: \: \eta_{EE}(\mathbf{t},\boldsymbol{\rho},\mathbf{p},\boldsymbol{\tau},\mathbf{f},\boldsymbol{\theta})\\
	    &\text{s.t.}\quad \Gamma^t_k+ {\Gamma}_k\geq \gamma _{\min,k },~\forall k, \label{p1-1}\\
		&\quad\quad
		{E}_{1,k} +{E}_{2,k}\leq E^t_k +Q_k,~\forall k,\label{p1-2}\\
		&\quad\quad \sum\limits_{\scriptstyle k= 1}^K t^b_k+t^o_k\leq T, 0\leq \tau_k \leq T,~\forall k,\label{p1-3}\\
		&\quad\quad 0\leq f_k \leq f_\text{max},~\forall k,\label{p1-4}\\
		&\quad\quad 0\leq \rho_k \leq 1,~\forall k,\label{p1-5}\\
		&\quad\quad P_0 \leq P_\text{max}\label{p1-6},\\
		&\quad\quad {|{\boldsymbol{\theta}}_{n,n} |=1, \forall n,}\label{p1-66}\\
		&\quad\quad p_k\geq 0, t^b_k\geq 0, t^o_k\geq 0,~\forall k\label{p1-7},
		\end{align}
	\end{subequations}
	where (\ref{p1-1}) ensures the minimum required computation task bits for each WD. (\ref{p1-2}) guarantees that the energy consumption of each WD in the uplink is limited to the harvested energy, $E^t_k$, and the primary energy, $Q_k$. In (\ref{p1-3}) and (\ref{p1-4}), $T$ and $f_\text{max}$ imply the total available transmission time for the considered time block and the maximum CPU frequency of each WD, respectively. Constraint (\ref{p1-5}) denotes the backscattering coefficient constraint. Constraint (\ref{p1-6}) limits the downlink transmit power of the PB to $P_\text{max}$, and (\ref{p1-66}) guarantees that the diagonal phase shift matrix has $N$ unit-modulus elements on its main diagonal. Finally, (\ref{p1-7}) is the non-negativity constraint on the power control and time allocation variables. In the following, we first study problem (P1) in more detail.

	\begin{Proposition}
		The maximum EE can always be obtained  when
        each WD performs local computation during each time block, i.e., $\tau^*_k= T$.
	\end{Proposition}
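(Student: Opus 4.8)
The plan is to prove this by a scaling (exchange) argument: I would fix every optimization variable except the pair $(\tau_k, f_k)$ for a single WD and show that, unless $\tau_k = T$, the objective can be strictly improved. The starting observation is that $\tau_k$ and $f_k$ enter problem (P1) only through the two quantities $\Gamma_k = \tau_k f_k / C_\text{cpu}$ (the local-computation throughput, which appears in $R_\text{sum}$ and in the minimum-rate constraint (\ref{p1-1})) and $\epsilon_k f_k^3 \tau_k$ (the local-computation energy, which appears in $E_{2,k}$, hence in $E_\text{total}$ and in the causality constraint (\ref{p1-2})). Note in particular that $\tau_k$ does not enter the offloading time budget (\ref{p1-3}), since local computing runs in parallel with offloading on a separate circuit, so the two are decoupled.

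Suppose, toward a contradiction, that an optimal solution has $\tau_k^* < T$ for some $k$ with $f_k^* > 0$. I would construct a new point by keeping all other variables fixed, setting $\hat\tau_k = T$, and rescaling the frequency to $\hat f_k = \tau_k^* f_k^* / T < f_k^*$. By design the local throughput is unchanged, $\hat\tau_k \hat f_k = \tau_k^* f_k^*$, so $R_\text{sum}$ and constraint (\ref{p1-1}) are preserved. The local-computation energy, however, strictly decreases: $\epsilon_k \hat f_k^3 \hat\tau_k = \epsilon_k (f_k^*)^3 (\tau_k^*)^3 / T^2 < \epsilon_k (f_k^*)^3 \tau_k^*$, because $(\tau_k^*)^2 < T^2$. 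Hence $E_\text{total}$ strictly decreases while $R_\text{sum}$ stays fixed, so $\eta_{EE}$ strictly increases.

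It then remains only to check feasibility of the new point. The causality constraint (\ref{p1-2}) still holds since its left-hand side only decreased; (\ref{p1-4}) holds because $0 \le \hat f_k < f_k^* \le f_\text{max}$; and $\hat\tau_k = T$ satisfies $0 \le \tau_k \le T$ in (\ref{p1-3}). All remaining constraints are untouched. This contradicts optimality and forces $\tau_k^* = T$. The only edge case is $f_k^* = 0$: then $\Gamma_k = 0$ and the computation energy vanishes, so the objective and every constraint are independent of $\tau_k$, and one may set $\tau_k^* = T$ without loss. I expect the main (and rather modest) obstacle to be the bookkeeping of confirming that the rescaling leaves the offloading and energy-harvesting terms genuinely untouched and that no constraint binds in the wrong direction; the analytic heart — that fixing the bit count $\tau_k f_k$ makes the computation energy scale as $1/\tau_k^2$, thereby favoring the longest admissible execution time — is immediate.
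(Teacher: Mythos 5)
Your proof is correct and follows essentially the same exchange argument as the paper: assume $\tau_k^* < T$, set $\hat\tau_k = T$, rescale $f_k$ downward, and obtain a strict EE improvement that contradicts optimality. The only difference is the invariant --- you hold the throughput $\tau_k f_k$ fixed so the computation energy $\epsilon_k f_k^3 \tau_k$ strictly decreases, whereas the paper holds the energy $\tau_k f_k^3$ fixed so the throughput strictly increases; both routes yield the contradiction equally well, and your explicit treatment of the $f_k^* = 0$ edge case is a small point of care the paper's proof omits.
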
	
	\begin{proof}
	While $(t^b_k)^*,(t^o_k)^*,\boldsymbol{\rho}^*,\mathbf{p}^*$, and $\{f_i,\tau_i\}$, $\forall k \in \mathcal{K}\backslash  k$ are fixed, we jointly optimize $f_k$ and $\tau_k$, $\forall i \in \mathcal{K}$ (the optimal computing frequency and execution time for WD $k$) to maximize the system EE. By resorting to contradiction, we prove that  the maximum  system EE is obtained when $\tau^*_k=T$. Let assume that for given $(t^b_k)^*,(t^o_k)^*,\boldsymbol{\rho}^*,\mathbf{p}^*$, and $\{f_i,\tau_i\}$, $\forall i \in \mathcal{K}\backslash  k$, the optimal solutions i.e., $\tau^*_k \leq T$ and $\{f^*_k,\tau^*_k\}$, $\forall k \in \mathcal{K}$   meets constraints $\text{(\ref{p1-1})--(\ref{p1-7})}$. Then, we construct another solution satisfying $\hat{\tau}_k=T$ and $\hat{\tau}_k(\hat{f}_k)^3={\tau}^*_k({f}^*_k)^3$. According to $\hat{\tau}_k=T> \tau^*_k$, it can be concluded that $\hat{f}_k<{f}^*_k$. Consequently, the constructed solution also meets constraints $\text{(\ref{p1-1})--(\ref{p1-7})}$. Given that $\hat{\tau}_k\hat{f}_k(\hat{f}_k)^2=\tau^*_k{f}^*_k({f}^*_k)^2$ and $\hat{f}_k<{f}^*_k$, we obtain $\hat{\tau}_k\hat{f}_k > \tau^*_k{f}^*_k$. Therefore, the constructed solution can achieve a higher system EE which results in higher throughput. This contradicts the earlier assumption that $\tau^*_k \leq T$. Thus, Proposition 1 is proven.	
	\end{proof}
	 
	Lemma 1 implies that consuming the whole available transmission time ($T$) is optimal. Indeed, if the total available time is not entirely consumed, increasing the time for both BC and AT  by the same factor keeps the EE at the same degree while promoting the total throughput. In other words, if each WD performs local computing during each time block, that achieves the maximum EE. 
	
		\begin{Proposition}
		The maximum EE can always be obtained by consuming all the available transmission time, i.e., $\sum_{\scriptstyle k= 1}^K t^b_k+t^o_k = T$.
	\end{Proposition}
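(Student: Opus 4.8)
The plan is to mirror the contradiction argument already used for Proposition~1. Suppose an optimal tuple attains the maximal efficiency $\eta^\star_{EE}$ but leaves part of the frame idle, $\sum_{k=1}^{K}\big((t^b_k)^\star+(t^o_k)^\star\big)=T'<T$, and then I would exhibit a feasible point that exhausts the frame without lowering the EE; since this point is at least as good, an optimum using all the time exists, which is what the statement asserts (and contradicts the claim that $T'<T$ can be forced at the optimum). Because the offloading times enter every relevant quantity linearly, the natural construction is a uniform dilation $\hat t^b_k=\alpha(t^b_k)^\star$, $\hat t^o_k=\alpha(t^o_k)^\star$ with $\alpha=T/T'>1$, while $\boldsymbol\rho^\star,\mathbf p^\star,\mathbf f^\star,\boldsymbol\theta^\star$ and $\tau^\star_k=T$ (fixed by Proposition~1) are left untouched.

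First I would record how $\eta_{EE}$ responds to the dilation. Write $R_{\text{sum}}=R_{\text{off}}+R_{\text{comp}}$ and $E_{\text{total}}=E_{\text{off}}+E_{\text{comp}}$, where $R_{\text{off}}=\sum_k(\Gamma^b_k+\Gamma^o_k)$ together with $E_{\text{off}}=\sum_k\big(P_{c,k}t^b_k+(\tfrac{p_k}{\delta}+p_{c,k})t^o_k\big)$ gather the offloading throughput and energy (linear in the offloading times), whereas $R_{\text{comp}}=\sum_k\Gamma_k$ and $E_{\text{comp}}=\sum_k\epsilon_kf_k^3\tau_k$ gather the local-computation contributions, which are independent of the offloading times. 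The scaled efficiency is then
\[
\hat{\eta}_{EE}(\alpha)=\frac{\alpha R_{\text{off}}+R_{\text{comp}}}{\alpha E_{\text{off}}+E_{\text{comp}}},
\]
and differentiating shows that $\tfrac{d}{d\alpha}\hat{\eta}_{EE}$ has the sign of $R_{\text{off}}E_{\text{comp}}-E_{\text{off}}R_{\text{comp}}$, which is constant in $\alpha$. Hence $\hat{\eta}_{EE}$ is monotone along the dilation, and the whole proof hinges on showing this quantity is nonnegative, i.e. $R_{\text{off}}/E_{\text{off}}\ge R_{\text{comp}}/E_{\text{comp}}$: offloading must be no less energy-efficient than the already-optimized local computing. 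Under this inequality, pushing $\alpha$ up to the value that exhausts the frame cannot decrease the EE while it strictly raises $R_{\text{sum}}$.

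Before invoking monotonicity I would verify feasibility of the dilated point. Constraints \eqref{p1-3}--\eqref{p1-7} are immediate: the time budget in \eqref{p1-3} becomes tight and $\boldsymbol\rho,\mathbf f,P_0,\boldsymbol\theta$ are unchanged; the per-WD rate floor \eqref{p1-1} only relaxes because each $\Gamma^t_k$ grows with the offloading times. The delicate one is the energy-causality bound \eqref{p1-2}: the harvested energy $E^t_k$ and the offloading consumption both scale by $\alpha$, but the local-computing energy $\epsilon_kf_k^3\tau_k$ and the reserve $Q_k$ do not, so \eqref{p1-2} collapses to $\alpha(A_k-E^t_k)\le Q_k-B_k$, with $A_k$ and $B_k$ the (original) offloading and computation energies. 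I expect the main obstacle to be exactly this pair of issues: establishing $R_{\text{off}}E_{\text{comp}}\ge E_{\text{off}}R_{\text{comp}}$ and keeping \eqref{p1-2} feasible for any WD that draws on its reserve $Q_k$ (where $A_k-E^t_k>0$, so enlarging $\alpha$ tightens the bound). I would handle feasibility by taking $\alpha$ as the largest dilation consistent with all $K$ causality constraints and arguing that at an EE-optimum it reaches $T/T'$, and I would try to extract the efficiency inequality from the optimality of $\mathbf f$ itself, since the local frequencies are chosen to maximize EE and therefore cannot be strictly more efficient than the offloading the optimal solution still performs.
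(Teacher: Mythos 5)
Your construction is identical to the paper's: dilate all offloading times by $\alpha=T/\sum_{k=1}^{K}\big((t^b_k)^*+(t^o_k)^*\big)\geq 1$ while freezing $\boldsymbol{\rho}^*,\mathbf{p}^*,\boldsymbol{\tau}^*,\mathbf{f}^*$. The difference is in what happens next. The paper substitutes the scaled point into (P1), asserts feasibility of (\ref{p1-1})--(\ref{p1-7}) as ``straightforward,'' and concludes $\tilde{EE}={EE}^*$ directly --- implicitly treating both $R_{\text{sum}}$ and $E_{\text{total}}$ as homogeneous of degree one in the time variables. You instead correctly isolate the non-scaling local-computing terms $\Gamma_k=\tau_kf_k/C_\text{cpu}$ and $\epsilon_kf_k^3\tau_k$, which reduces the claim to the sign condition $R_{\text{off}}E_{\text{comp}}\geq E_{\text{off}}R_{\text{comp}}$ and to feasibility of the causality constraint (\ref{p1-2}) under dilation --- and then you prove neither. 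That is a genuine gap: as written, yours is a plan, not a proof. The pivotal inequality is left as something you ``would try to extract from the optimality of $\mathbf{f}$,'' but the stationarity conditions for $f_k$ (cf. Theorems 1--3) involve the multipliers of (\ref{p1-1}) and (\ref{p1-2}) and do not by themselves compare the ratio $R_{\text{off}}/E_{\text{off}}$ against $R_{\text{comp}}/E_{\text{comp}}$; indeed, if local computing were strictly more energy-efficient at the optimum, your own derivative computation shows the dilation \emph{strictly lowers} $\eta_{EE}$, so the contradiction you are aiming for would simply not materialize (and shrinking the times instead may be blocked by (\ref{p1-1}), so no contradiction arises from that direction either). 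Likewise, the feasibility device of taking the largest $\alpha$ admissible under $\alpha(A_k-E^t_k)\leq Q_k-B_k$ and ``arguing that at an EE-optimum it reaches $T/T'$'' is circular --- for a WD drawing on its reserve ($A_k-E^t_k>0$), whether the maximal admissible $\alpha$ equals $T/T'$ is exactly what needs to be established.

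In fairness, the two holes you flag are precisely where the paper's own proof is loose: its equality $\tilde{EE}={EE}^*$ is exact only when the constant computing contributions are neglected (or absorbed by scaling $\tau_k$ and $f_k$ as well, which the paper does not do, and which would conflict with $\tau_k^*=T$ from Proposition 1), and its feasibility check for (\ref{p1-2}) is asserted rather than verified. So your diagnosis is sharper than the paper's argument, but diagnosis is not proof: to close the gap you would need either (i) an explicit argument that at any EE-optimum the offloading component is at least as energy-efficient as the local-computing component --- for example via the KKT system tying $\psi_k,\varsigma,\nu_k,\Omega$ together across the rate and energy constraints --- or (ii) a restriction (e.g., to the purely linear regime the paper implicitly assumes) under which both numerator and denominator scale homogeneously, in which case the paper's one-line conclusion $\tilde{EE}={EE}^*$ becomes exact and your monotonicity machinery is unnecessary.
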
	
	\begin{proof}
Suppose that $\{(t^b_k)^*,(t^o_k)^*,\boldsymbol{\rho}^*,\mathbf{p}^*,{\boldsymbol{\tau}}^*_k,\mathbf{f}^*\}$ yields the maximum system EE denoted by ${EE}^*$, and satisfies $0\leq \sum_{\scriptstyle k= 1}^K t^b_k+t^o_k \leq T, \forall k$.
Subsequently, we construct another solution set defined as $\{t^b_k,t^o_k,\tilde{\boldsymbol{\rho}},\tilde{\mathbf{p}},\tilde{{\boldsymbol{\tau}}},\tilde{\mathbf{f}}\}$
which satisfies $\tilde{t}^b_k=\alpha(t^b_k)^*$, $\tilde{t}^o_k=\alpha(t^o_k)^*$ $\tilde{\boldsymbol{\rho}}=\boldsymbol{\rho}^*$, $\tilde{\mathbf{p}}=\mathbf{p}^*$, $\tilde{{\boldsymbol{\tau}}}={\boldsymbol{\tau}}^*$, and $\tilde{\mathbf{f}}={\mathbf{f}}^*$, respectively,
where $\alpha=\frac{T}{\sum_{\scriptstyle k= 1}^K (t^b_k)^*+(t^o_k)^*}\geq 1$ such that $\sum_{\scriptstyle k= 1}^K \tilde{t}^b_k+\tilde{t}^o_k  =T$. The corresponding system EE is defined as $\tilde{{EE}}$. First,
it is straightforward to verify that $\{t^b_k,t^o_k,\tilde{\boldsymbol{\rho}},\tilde{\mathbf{p}},\tilde{{\boldsymbol{\tau}}},\tilde{\mathbf{f}}\}$ still meets constraints $\text{(\ref{p1-1})--(\ref{p1-7})}$. Next, replacing $\{t^b_k,t^o_k,\tilde{\boldsymbol{\rho}},\tilde{\mathbf{p}},\tilde{{\boldsymbol{\tau}}},\tilde{\mathbf{f}}\}$ into
(P1) yields ${\tilde{EE}}={EE}^*$, which implies that the optimal system EE can always be obtained by consuming all the available transmission time, i.e., $T$. Proposition 2 is thus proved.
	\end{proof}

On the other hand, if $\sum_{\scriptstyle k= 1}^K t^b_k+t^o_k \leq T$ holds, we can increase $t^b_k$ and $t^o_k$  by the same factor such that constraint (\ref{p1-3}) is held with strict equality, i.e., $\sum_{\scriptstyle k= 1}^K t^b_k+t^o_k= T$. This is because  the scaled $t^b_k$ and $t^o_k$ also satisfy constraint (\ref{p1-3}) and reach  the same EE. Nevertheless, note  that the total  throughput  increases linearly over  $t^b_k$ and $t^o_k$ and thus  will be scaled accordingly when  $t^b_k$ and $t^o_k$ are scaled up. The key takeaway is that  the throughput can be improved without decreasing the EE but at the cost of increasing the transmission time.


	\begin{Proposition}
		For problem (P1), the maximum EE can always be obtained for $P_0=P_\text{max}$. Thus, $\mathbf{p} = [P_0,p_1,...,p_K]$ changes into $\mathbf{p} = [p_1,...,p_K]$.
	\end{Proposition}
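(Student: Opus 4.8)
The plan is to show that raising the PB transmit power to its upper bound $P_\text{max}$ is always beneficial, by observing that $P_0$ enters the numerator of $\eta_{EE}$ monotonically while leaving the denominator untouched, and at the same time relaxes rather than tightens every constraint in which it appears. First I would decompose the objective: the total throughput $R_{\text{sum}}=\sum_k(\Gamma^b_k+\Gamma^o_k+\Gamma_k)$ depends on $P_0$ only through the backscatter term $\Gamma^b_k=t^b_k\log_2(1+\zeta\rho_k P_0|h_k|^2|g_k|^2/\sigma^2)$, which is strictly increasing in $P_0$; the active-transmission term $\Gamma^o_k$ and the local-computation term $\Gamma_k$ are both independent of $P_0$. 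The key point is that the total energy consumption $E_{\text{total}}=\sum_k(E_{1,k}+E_{2,k})$ is entirely independent of $P_0$, since the PB power is never charged to the WDs' energy budget: $E_{1,k}$, the active-transmission cost $p_k t^o_k/\delta$, the circuit terms, and the computation term $\epsilon_k f_k^3\tau_k$ all involve only the WDs' own power, time, and frequency variables. Hence, with every other variable held fixed, $\eta_{EE}$ is a strictly increasing function of $P_0$.

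Next I would verify that increasing $P_0$ preserves feasibility and in fact loosens the active constraints. Constraint \eqref{p1-1} carries $\Gamma^b_k$ on its left-hand side, so that side grows with $P_0$ and the inequality remains satisfied. For the energy-causality constraint \eqref{p1-2}, the left-hand side $E_{1,k}+E_{2,k}$ does not depend on $P_0$, whereas the right-hand side contains the harvested energy $E^t_k=E^b_k+\sum_{i\neq k}P^b_k t^b_i$, which is non-decreasing in $P_0$; the constraint is therefore relaxed. The remaining constraints \eqref{p1-3}, \eqref{p1-5}, and \eqref{p1-7} do not involve $P_0$, while \eqref{p1-6} permits $P_0$ to grow all the way up to $P_\text{max}$. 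Consequently, starting from any feasible point with $P_0<P_\text{max}$, lifting $P_0$ to $P_\text{max}$ produces a still-feasible point with at least as large an EE, which establishes $P_0^{*}=P_\text{max}$ and lets us drop $P_0$ from the optimization vector $\mathbf{p}$.

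The one technical point deserving care is the monotonicity of the harvested energy in $P_0$ under the rational EH model, which is where I would be most careful. Writing $f(x)=\frac{a_k x+b_k}{x+c_k}-\frac{b_k}{c_k}$ with $x=(1-\rho_k)P_0|g_k|^2$, a direct computation gives $f'(x)=\frac{a_k c_k-b_k}{(x+c_k)^2}$, which is positive exactly when $a_k c_k>b_k$; the same condition makes $P^b_k$ increasing in $P_0$. This inequality holds for any physically meaningful curve fit, since the harvested power must be non-decreasing in the received power, so both $E^b_k$ and the terms harvested while other WDs offload grow with $P_0$ and \eqref{p1-2} is genuinely relaxed. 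This is the only step that appeals to a structural property of the EH model rather than to direct inspection of how $P_0$ enters the objective and constraints.
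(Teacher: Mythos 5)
Your proof is correct, and it takes a genuinely different route from the paper's. The paper argues by contradiction with a rescaling construction: assuming an optimum with $P_0^*<P_\text{max}$, it builds a new point with $\tilde{P}_0=P_\text{max}$ and $\tilde{t}^b_k=P_0^*(t^b_k)^*/\tilde{P}_0$ (holding the power--time product fixed), shows the WDs' consumed energy can only decrease since $\tilde{t}^b_k\leq(t^b_k)^*$, and concludes the assumed optimum was not optimal; it then treats $(t^b_k)^*=0$ as a separate degenerate case. You instead hold \emph{all} other variables fixed and lift $P_0$ alone, observing that the objective's numerator is non-decreasing in $P_0$ (only $\Gamma^b_k$ depends on it), the denominator $E_{\text{total}}$ is independent of $P_0$ because PB power is never charged to the WDs, and every constraint is either untouched or relaxed --- in particular \eqref{p1-2}, whose right-hand side grows with $P_0$. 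Your version is the more elementary and, arguably, the more airtight argument: under the rational EH model, the paper's rescaling keeps $P_0 t^b_k$ constant but does \emph{not} keep the backscatter throughput or the nonlinear harvested energy constant (both $t\log_2(1+cE/t)$ and $E\,f(cP_0)/P_0$ vary with the split), so the paper's claimed EE comparison requires more care than it gets, whereas your argument sidesteps the trade-off entirely. Your one structural appeal --- monotonicity of $f$ via $a_kc_k>b_k$ --- is exactly right and is even forced by physics, since $f(x)=\frac{(a_kc_k-b_k)x}{c_k(x+c_k)}$ shows non-negativity of harvested power already implies it (and the paper's fitted values give $a_kc_k\approx 2.03>1.635=b_k$). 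The only cosmetic fix: $\Gamma^b_k$ is \emph{strictly} increasing in $P_0$ only when $\rho_k t^b_k>0$; in the degenerate case it is constant, but weak monotonicity is all your conclusion needs, and this mirrors the paper's own handling of $(t^b_k)^*=0$.
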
	
	\begin{proof}
Since the power transfer may not be activated because of the WDs' primary energy, we consider the subsequent two cases. First, when the power transfer is activated for the optimal solution, i.e., $(t^b_k)^*\geq 0$, then it can be determined that $P^*_0 = P_\text{max}$. Supposed that $\{\mathbf{t}^*,\boldsymbol{\rho}^*,\mathbf{p}^*,{\boldsymbol{\tau}}^*_k,\mathbf{f}^*\}$ is the optimal solution set of (P1), where $P^*_0 \leq P_\text{max}$ holds for any $P^*_0$. By denoting the optimal solution of the system EE as $EE^*$, we construct another solution set defined as $\{\tilde{\mathbf{t}},\tilde{\boldsymbol{\rho}},\tilde{\mathbf{p}},\tilde{\boldsymbol{\tau}},\tilde{\mathbf{f}}\}$ which satisfies ${\tilde{P}}_0={P}_\text{max}$, ${{p}}_k={p}^*_k$, $\tilde{{t}}^b_k=\frac{{{P}}^*_0({{t}^b_k})^*}{{\tilde{P}}_0}$, $\tilde{{t}}^o_k=({{t}^o_k})^*$, $\tilde{\boldsymbol{\rho}}=\boldsymbol{\rho}^*$, $\tilde{\boldsymbol{\tau}}=\boldsymbol{\tau}^*$, and $\tilde{\mathbf{f}}={\mathbf{f}}^*$. Next, by denoting the corresponding system EE as $\tilde{EE}$, we can conclude that $\{\tilde{\mathbf{t}},\tilde{\boldsymbol{\rho}},\tilde{\mathbf{p}},\tilde{\boldsymbol{\tau}},\tilde{\mathbf{f}}\}$ is a feasible solution set for (P1). Besides, as ${\tilde{P}}_0={P}_\text{max}\geq P^*_0$, it follows that $\tilde{t}^b_k\leq ({t}^b_k)^*$. Thus, the energy consumption satisfies the following inequality $
P_{c,k}\tilde{ t}^b_k+\frac{p_k}{\delta}\tilde{t}^o_k + p_{c,k}\tilde{t}^o_k+\epsilon_k\tilde{f}_k^3T\leq P_{c,k}(t^b_k)^*+\frac{p_k}{\delta}(t^o_k)^* + p_{c,k}(t^o_k)^*+\epsilon_k(f_k^3)^*T,~\forall k.$
From  the definition of the system EE, we  interpret that $EE^* \leq EE$ contradicts  the fact that $\{\mathbf{t}^*,\boldsymbol{\rho}^*,\mathbf{p}^*,{\boldsymbol{\tau}}^*_k,\mathbf{f}^*\}$ is the optimal solution set of (P1). On the other hand, when $(t^b_k)^*=0$ holds, then the value of the transmission power at the PB does not change the maximum system EE so that $P_0 = P_\text{max}$ is also an optimal solution.
	\end{proof}
	\section{Equivalent Optimization Problem}
	The system EE maximization problem (7a) is a MOOP. The EE is the ratio between the system throughput $R_{sum}$ and $E_{total}$ of all WDs, which are conflicting objectives. Thus, to maximize  EE, we maximize  $R_{sum}$ and  minimize   $E_{total}$  simultaneously. 
	
	To explain this process, we first present  several   MOOP concepts and terminology \cite{Miettinen}. A MOOP is  the maximization of  $F(x) = (f_1(x), f_2(x), \ldots, f_m(x) ), $ where $ x \in \mathcal S$, $\mathcal S $ is the feasible set,  and $f_i(x), i=1,\ldots, m$ are the objective functions.  The objectives in $F(x) $ often are mutually  conflicting, and  thus an increase in $f_i(x) $  may lead to a decrease of $ f_j (x)$. Thus, a single solution $x$, optimizing  all objectives simultaneously, does not exist. Instead,   Pareto optimal solutions achieve different trade-offs among the mutually conflicting objectives. A solution $x^* \in \mathcal S $ is Pareto optimal  if there does not exist another solution $ x \in \mathcal S $ such that $ f_i (x) \geq f_i(x^*)$ for all $ i=1,\ldots, m$ and $ f_j(x) > f_j(x^*)$ for at least one $j$.

    The  utopia solution  is the individual optimization of the  objectives, e.g., $z=(z_1,\ldots, z_m)$ where  $z_i^*= \max_x f_i(x),~\forall i$. Note that the utopia solution is not feasible in general due to conflicts among the objectives.  The  boundary achieved by all feasible points from the Pareto optimal set is called the Pareto optimal front (PF).  Here, we  adopt the weighted Tchebycheff method, which can yield the PF, even if the MOOP is non-convex  \cite{Mili,Mehrdad_Kolivand}. It   minimizes the maximum deviations of the objective values from their utopia values; e.g., one solves  \begin{equation}
        \min_{x}\max_{i=1,\ldots, m}\{ w_i|f_i(x)-z_i^*|\} , 
    \end{equation}  where  $w_i$, $i=1,\ldots, m, $ denotes  a set of weights.  For the problem at hand, (P1), there are two objectives ($m=2$).   With  an auxiliary optimization variable $\chi$, the epigraph representation of (P1) can be expressed as follows:
	\begin{subequations}
		\begin{align}
		\text{(P2)}:~~& \underset{\chi,\mathbf{t},\boldsymbol{\rho},\mathbf{p},\boldsymbol{\tau},\mathbf{f},\boldsymbol{\theta}}{\text{minimize}} \: \: \chi\\
		&\text{s.t.}\quad \text{(\ref{p1-1})--(\ref{p1-7})},\\
		&\quad\quad \frac{\alpha}{|R_{\text{sum,max}}|}\left(R_{\text{sum,max}}-R_{\text{sum}} \right)-\chi\leq 0, \label{p2-1}\\
		&\quad\quad \frac{\beta}{|E_{\text{total,max}}|}\left(E_{\text{total}}-E_{\text{total,max}} \right)-\chi\leq 0. \label{p2-2}
		\end{align}
	\end{subequations}
	Note that $R_{\text{sum,max}}$ and $E_{\text{total,max}}$ are the utopian objective points. $|R_{\text{sum,max}}|$ as well as $|E_{\text{total,max}}|$ denote the normalization factors. In addition, $0\leq\alpha\leq 1$ and $0\leq\beta\leq 1$ ( $\alpha+\beta=1$) are the weighting coefficients that reveal the importance of different objectives. Furthermore, it  should also be noted that (P2) is equivalent to the problem of maximizing throughput when $ \alpha = $ 1 and $ \beta = 0 $ and is equivalent to the problem of minimizing energy when $ \alpha = $ 0 and $\beta =1$, meaning that both MOOP and SOOP have the same optimal solution.
	
{\begin{Proposition}
  The solution to the SOOP given in (P2) is a special case of the EE maximization problem solution in (P1).
\end{Proposition}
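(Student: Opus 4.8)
The plan is to combine two observations. The first is that every optimizer of the EE problem (P1) must lie on the Pareto optimal front (PF) of the bi-objective problem that jointly maximizes $R_{\text{sum}}$ and minimizes $E_{\text{total}}$; the second is the completeness of the weighted Tchebycheff scalarization \cite{Miettinen,Mili,Mehrdad_Kolivand}, namely that every PF point is the minimizer of the Tchebycheff problem for some admissible weight pair. Chaining these, I would exhibit a weight pair $(\alpha,\beta)$ with $\alpha+\beta=1$ for which the minimizer of (P2) coincides with the EE maximizer of (P1), which is exactly the assertion that the SOOP solution is recovered as a special case of the EE solution.

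First I would show, by contradiction, that any optimizer $x^*$ of (P1) is Pareto optimal for the objective pair $(R_{\text{sum}},-E_{\text{total}})$. If it were not, there would exist a feasible $x$ with $R_{\text{sum}}(x)\geq R_{\text{sum}}(x^*)$ and $E_{\text{total}}(x)\leq E_{\text{total}}(x^*)$, at least one inequality strict. Since constraint (\ref{p1-1}) with $\gamma_{\min,k}>0$ together with the circuit-energy terms keep both $R_{\text{sum}}$ and $E_{\text{total}}$ strictly positive on the feasible set, this would force $\eta_{EE}(x)=R_{\text{sum}}(x)/E_{\text{total}}(x) > R_{\text{sum}}(x^*)/E_{\text{total}}(x^*)=\eta_{EE}(x^*)$, contradicting the EE-optimality of $x^*$. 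Hence $x^*$ belongs to the PF.

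Next I would invoke completeness to select $(\alpha^*,\beta^*)$ for which $x^*$ minimizes $\max\{\tfrac{\alpha^*}{|R_{\text{sum,max}}|}(R_{\text{sum,max}}-R_{\text{sum}}),\,\tfrac{\beta^*}{|E_{\text{total,max}}|}(E_{\text{total}}-E_{\text{total,max}})\}$, which is precisely the quantity encoded in the epigraph constraints (\ref{p2-1}) and (\ref{p2-2}); thus $x^*$ solves (P2) under these weights. Together with the already-noted boundary reductions ($\alpha=1,\beta=0$ for pure throughput maximization and $\alpha=0,\beta=1$ for pure energy minimization), this identifies the EE-optimal point as the intermediate special case of (P2). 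The step I expect to be delicate is the completeness claim: a plain Tchebycheff minimizer is in general only \emph{weakly} Pareto optimal, so recovering $x^*$ exactly could fail on flat segments of the PF. I would close this gap by noting that $R_{\text{sum}}$ and $E_{\text{total}}$ are strictly monotone in the time/power resources — indeed, scaling $\mathbf{t}$ strictly changes both objectives, as exploited in Propositions 1 and 2 — which rules out such degeneracies at $x^*$ and makes the selected weight pair reproduce the EE maximizer exactly.
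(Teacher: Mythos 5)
Your proposal is correct, but it reaches the conclusion by a genuinely different route than the paper. The paper's proof works through fractional programming: it recalls the Dinkelbach characterization (equations (\ref{cond_0})--(\ref{cond_2})), namely that $x^*$ optimizes the ratio $f(x)/g(x)$ if and only if $Y(\varpi^*)=\min_x\{f(x)-\varpi^* g(x)\}=0$, then writes down the Tchebycheff scalarization of the bi-objective pair $(f,g)$ and simply asserts that "it can be easily concluded" that the optimal set of the scalarized problem (\ref{11aa}) includes the Dinkelbach solution of (\ref{cond_1}); the actual bridge between the parametric form $f-\varpi^* g$ and the epigraph constraints of (P2) is never derived. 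You bypass Dinkelbach entirely and instead chain two standard MOOP facts: (i) any EE maximizer of (P1) is Pareto optimal for $(R_{\text{sum}},-E_{\text{total}})$, proved by your clean domination argument using positivity of both objectives (guaranteed by (\ref{p1-1}) and the circuit-power terms), and (ii) completeness of the weighted Tchebycheff method \cite{Miettinen}, so some weight pair $(\alpha^*,\beta^*)$ with $\alpha^*+\beta^*=1$ recovers that point as a solution of (P2). Your route makes explicit exactly the logical step the paper leaves asserted, and is more self-contained; the paper's route, in exchange, exhibits the relation between the optimal ratio $\varpi^*$ and the two objectives, which motivates why sweeping $(\alpha,\beta)$ traces out the EE-relevant trade-off. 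You also correctly flag the one genuine subtlety both treatments must confront --- plain Tchebycheff minimizers are in general only weakly Pareto optimal, so for the selected weights the solver could return a tie on a flat segment rather than $x^*$ itself --- a point the paper ignores altogether; your strict-monotonicity patch (in the spirit of Propositions 1 and 2) is adequate at the paper's level of rigor, though for full precision you should also note that the reference point in (P2) is the ideal point computed from (P4) and (P8) rather than a strictly utopian point, so Pareto points attaining one utopia value are covered only by the boundary weights $\alpha\in\{0,1\}$, which the formulation of (P2) explicitly admits.
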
}
	
{\begin{proof}
To sketch the proof, let us first formulate  general fractional programming as follows:
\begin{equation}\label{cond_00}
\underset{x}{\text{min}}\: \:\frac{f(x)}{g(x)},\quad\text{s.t.}\quad x\in \mathcal{X},
\end{equation}where $g(x)>0$, $\forall x\in \mathcal{X}$ and $\mathcal{X}\in \mathbb{R}^n$ is a nonempty compact. In addition, $f(x)$ and $g(x)$ are both continuous real-valued  functions of $x\in \mathcal{X}$. To obtain the optimal solution, let us represent the following function:
\begin{equation}\label{cond_0}
Y(\varpi^{*})=\min_{x}\: \: \left\{f(x)-\varpi^{*}g(x),\quad \text{s.t.}\quad x\in \mathcal{X}\right\}
\end{equation}
as the minimum value of $f(x)-\varpi g(x)$ with each constant $\varpi^{*}$. According to the Dinklebach method \cite{Dinkelbach}, it is proven  that
\begin{equation}\label{cond_1}
\varpi^{*}=\frac{f(x^{*})}{g(x^{*})}=\min_{x}\: \: \left\{\frac{f(x)}{g(x)},\quad\text{s.t.}\quad x\in \mathcal{X}\right\}
\end{equation}if and only if
\begin{equation}\label{cond_2}
Y(\varpi^{*})\!=\!Y(\varpi^{*},x^{*})\!=\!\min_{x}\: \: \left\{f(x)-\varpi^{*}g(x),\:\:\text{s.t.}\:\: x\in \mathcal{X}\right\}\!=\!0.
\end{equation} 
Consequently, from (\ref{cond_1}) and (\ref{cond_2}), it can be shown 
the optimal solution, $x^{*}$, of (\ref{cond_00}) is equivalent to the solution obtained in  (\ref{cond_0}) when $\varpi=\varpi^{*}$, where  $\varpi^{*}$ indicates the minimum value of (\ref{cond_00}). Next, let us formulate the  MOOP given as below:
\begin{subequations}
		\begin{align}
		 & \underset{x}{\text{min}} \: \: f(x) \label{10aa}\\
		 & \underset{x}{\text{max}} \: \: g(x) \label{10bb}\\
		&\text{s.t.}\quad  x>0,
		\end{align}
\end{subequations}
where $f(x)$ and $g(x)$ represent the numerator and denominator of the fractional optimization problem  (\ref{cond_00}), respectively. By incorporating the competing objective functions (\ref{10aa}) and (\ref{10bb}) into a SOOP via the Tchebycheff method, we transform  the objective functions in the MOOP into the following SOOP:
\begin{subequations}
		\begin{align}
		& \underset{x}{\text{min}} \: \: \chi \label{11aa}\\
		&\quad\quad \frac{\alpha}{|f_{\max}(x)|}\left(f_{\max}(x)-f(x) \right)-\chi\leq 0, \nonumber \\
		&\quad\quad \frac{\beta}{|g_{\max}(x)|}\left(g_{\max}(x) -g(x) \right)-\chi\leq 0,\nonumber \\
		&\text{s.t.}\quad x>0,
		\end{align}
\end{subequations}
Based on equations (\ref{11aa}) and (\ref{cond_1}), it can be easily  concluded that the optimal set of (\ref{11aa}) includes the solution of (\ref{cond_1}), and the values $\alpha$ and $\beta$ also provide a solution for the fractional programming problem.
\end{proof}	}

	\section{Solution to the Optimization Problem}
	For the first SOOP (throughput maximization), we exploit the AO  method to divide the main problem into two sub-problems. For the first one, we derive closed-form  resource allocations. For the second one, we apply the SDR  technique, a difference of convex functions (DC) programming, MM  techniques, along with a penalty function for enforcing a rank-one solution to design RIS phase shifts. For the second SOOP (energy consumption minimization), we only derive closed-form  resource allocations since the energy consumption is independent of RIS phase shifts.
	
	Define the following auxiliary variables, $s_k=\rho_k t^b_k$ and $z_k=p_kt^o_k$, $\forall k $.  The equivalent form of (P1) can then be recast as follows:
	\begin{subequations}
		\begin{align}
		\text{(P3)}: ~~& \underset{\chi,\mathbf{t},\mathbf{s},\mathbf{z},\mathbf{f}}{\text{minimize}} \: \: \chi\\
		&\text{s.t.}\quad \text{(\ref{p1-3})}, \text{(\ref{p1-4}), (\ref{p1-5})},\label{p3}\\
		&\quad\quad \bar{R}_{\text{sum}} \geq R_{\text{sum,max}} -\chi\frac{|R_{\text{sum,max}}|}{\alpha}, \label{p4-1}\\
		&\quad\quad \bar{E}_{\text{total}} \leq E_{\text{total,max}} +\chi\frac{|E_{\text{total,max}}|}{\beta}, \label{p4-2}\\
		&\quad\quad \bar{\Gamma}^t_k+ {\bar{\Gamma}}_k\geq \gamma_{\min,k },~\forall k, \label{p4-3}\\
		&\quad\quad
		{E}_{1,k} 	+	\bar{E}_{2,k} \leq \bar{E}^t_k +Q_k,~\forall k,\label{p4-4}\\
		&\quad\quad 0\leq s_k \leq t^b_k,~\forall k,\label{p4-5}\\
		&\quad\quad z_k\geq 0, t^b_k\geq 0, t^o_k\geq 0,~\forall k,\label{p4-6}
		\end{align}
	\end{subequations}
	where $\mathbf{s} = [s_1,...,s_K]$ and $\mathbf{z} = [z_1,...,z_K]$ denote 
	the collocation of auxiliary variables. In (P3), we have $\bar{R}_{\text{sum}}=\sum_{ k= 1}^K \bar{\Gamma}^b_k + \bar{\Gamma}^o_k+ {\bar{\Gamma}}_k$, $\bar{E}_\text{total}= \sum_{ k= 1}^K {E}_{1,k} +\bar{E}_{2,k} $, and $\bar{E}^t_k =\bar{E}^b_k+ \sum_{ i= 1, i \ne k}^K\bar{P}^b_kt^b_i$, where
	\begin{subequations}
		\begin{align}
		&\bar{\Gamma}^b_k = t^b_k\log_2(1+\frac{\zeta s_k P_\text{max}|h_k|^2|g_k|^2}{t^b_k \sigma^2}),\nonumber\\& \bar{\Gamma}^o_k =t^o_k\log_2(1+\frac{z_k|h_k|^2}{t^o_k\sigma^2}),~\forall k,\\
		& \bar{P}^b_k=\frac{a_kP_\text{max}|g_k|^2+b_k}{P_\text{max}|g_k|^2+c_k}-\frac{b_k}{c_k},\nonumber\\&\bar{\Gamma}_k =\frac{Tf_k}{C_\text{cpu}},~\forall k,\\
		&\bar{E}_{2,k} = \frac{z_k}{\delta} + p_{c,k}t^o_k+\epsilon_kf_k^3T,\nonumber\\ 
		& \bar{E}^b_k = \left(\frac{a_k(1-\frac{s_k}{t^b_k})P_\text{max}|g_k|^2+b_k}{(1-\frac{s_k}{t^b_k})P_\text{max}|g_k|^2+c_k}-\frac{b_k}{c_k}\right)t^b_k,~\forall k.
		\end{align}
	\end{subequations}
	However, due to the non-linear EH constraint (\ref{p4-4}),  (P3) is not  a convex problem. To deal with it, we resort to Proposition 1  \cite{Lu},  which is  that the right-hand side of (\ref{p4-4}) is a concave function. This can be proven with the  first and second-order derivatives  of (\ref{p4-4}). Next, we first solve corresponding SOOPs  to determine $R_{\text{sum,max}}$ and $E_{\text{total,max}}$. 
	
		\subsection{Proposed Solution for Throughput Maximization}
 
		\subsubsection{Sub-problem 1: Optimizing $\{\mathbf{t},\boldsymbol{\rho},\mathbf{p},\boldsymbol{\tau},\mathbf{f}\}$}
	By setting $\beta=0$, we get the throughput maximization problem to obtain the optimal value of $R_{\text{sum,max}}$ as 
	\begin{subequations}
		\begin{align}
		\text{(P4)}: ~~& \underset{\mathbf{t},\mathbf{s},\mathbf{z},\mathbf{f}}{\text{maximize}} \: \:\bar{R}_{\text{sum}}\\
		&\text{s.t.}\quad \text{(\ref{p3}), (\ref{p4-3})--(\ref{p4-6})}.
		\end{align}
	\end{subequations}
	Since (P4) is a convex optimization problem and satisfies Slater’s condition, we can apply the Lagrange duality method to derive closed-form expressions. 
	The following theorem provides its  optimal solutions in closed-form expressions.
	\begin{theorem}
		Given the non-negative Lagrange multipliers,
		i.e., $\{\lambda,\mu_k,\omega_k,\Omega_k,\nu_k\}$, the maximizer of Lagrangian function is given by
\begin{subequations}\label{lagrang1}
			{\small		\begin{align}
			&p^*_k=\left[\frac{(1+\Omega_k)\delta}{\nu_k\ln2}-\frac{\sigma^2}{|h_k|^2}\right]^+,\; f^*_k=\left[\sqrt{\frac{(1 +\Omega)T-\mu_kC_\text{cpu}}{3\epsilon_k\nu_kTC_\text{cpu}} }\right]^+,\\
			&	\rho^*_k=
			\begin{cases}
			\left[\frac{Y_k-\sqrt{Y^2_k-4X_kZ_k}}{2X_k}\right]^+, & \omega_k=0,\\
			0, & \omega_k>0,
			\end{cases}
			\end{align}}
		\end{subequations}
		where $X_k = \frac{\zeta (1+\Omega_k) P^3_{\max}|g_k|^6|h_k|^2}{\ln2},$
		$Y_k = 2X_k+\frac{2\zeta c_k(1+\Omega_k) P_{\max}|g_k|^4|h_k|^2}{\ln2}+\zeta \nu_k(a_kc_k-b_k)P_{\max}|g_k|^2|h_k|^2,$
		$Z_k = X_k+\frac{\zeta c^2_k(1+\Omega_k) P_{\max}|g_k|^2|h_k|^2}{\ln2}+\frac{2\zeta c_k(1+\Omega_k) P^2_{\max}|g_k|^4|h_k|^2}{\ln2}-\nu_k(a_kc_k-b_k)\sigma^2.$
	\end{theorem}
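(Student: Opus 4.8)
The plan is to leverage the convexity of (P4) asserted above: since the problem is convex and Slater's condition holds, strong duality is in force and the KKT system is necessary and sufficient for the global optimum, so it suffices to characterize the point that maximizes the Lagrangian for the given dual variables. I would dualize the QoS constraint \eqref{p4-3}, the energy-causality constraint \eqref{p4-4}, the frequency bound in \eqref{p1-4}, the coupling constraint \eqref{p4-5}, and the total-time constraint in \eqref{p1-3} with the multipliers $\Omega_k$, $\nu_k$, $\mu_k$, $\omega_k$, and $\lambda$, respectively, keeping the sign constraints \eqref{p4-6} implicit and enforcing them at the end through the projection $[\cdot]^+$ paired with complementary slackness. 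A useful preliminary observation is that, after the substitutions $z_k=p_kt^o_k$ and $s_k=\rho_kt^b_k$, the Lagrangian separates additively across $k$ and across the blocks $z_k$, $f_k$, $s_k$, and the common factor $t^b_k$ (resp. $t^o_k$) cancels in the first-order condition for $s_k$ (resp. $z_k$); hence the optimal $p^*_k$ and $\rho^*_k$ depend only on the dual variables and decouple from the time allocation, which is what lets the maximizers be written in closed form.

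For the transmit power I would differentiate the $z_k$-dependent part $(1+\Omega_k)\bar\Gamma^o_k-\nu_k\bar E_{2,k}$; since $\bar\Gamma^o_k$ is the only concave term and $\bar E_{2,k}$ contributes the linear cost $\nu_k z_k/\delta$, stationarity reduces to a single equation in $1/(\sigma^2+p_k|h_k|^2)$ that rearranges directly into the water-filling form for $p^*_k$, with the outer $[\cdot]^+$ supplied by the dual-feasibility/slackness pair for $z_k\ge 0$. For the computing frequency I would collect $(1+\Omega_k)\bar\Gamma_k-\nu_k\epsilon_kf_k^3T-\mu_kf_k$; its derivative is affine in $f_k^2$ (the cubic energy cost yields the $3\nu_k\epsilon_kf_k^2T$ term), so solving gives the stated square root, again projected onto $f_k\ge 0$.

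The crux is the backscatter coefficient. Writing $A_k\triangleq\zeta P_{\max}|h_k|^2|g_k|^2/\sigma^2$ and $B_k\triangleq P_{\max}|g_k|^2$ for the SNR slope and the EH input-power coefficient, I would differentiate $(1+\Omega_k)\bar\Gamma^b_k+\nu_k\bar E^b_k$ in $\rho_k$: the logarithmic throughput term contributes $(1+\Omega_k)A_k/[\ln 2\,(1+A_k\rho_k)]$, while the rational EH term contributes $-\nu_kB_k(a_kc_k-b_k)/[(1-\rho_k)B_k+c_k]^2$. Setting the sum to zero and clearing both denominators converts this rational first-order condition into the quadratic $X_k\rho_k^2-Y_k\rho_k+Z_k=0$; expanding $[(1-\rho_k)B_k+c_k]^2$ and regrouping the quadratic, linear, and constant terms reproduces the coefficients $X_k$, $Y_k$, $Z_k$ of the statement. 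The admissible root in $[0,1]$ is the one carrying the minus sign before the radical, which is the interior case $\omega_k=0$, while the alternative $\omega_k>0\Rightarrow\rho^*_k=0$ is precisely complementary slackness for the active lower bound on $s_k$.

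I expect this $\rho_k$ step to be the main obstacle: unlike $p_k$ and $f_k$, the nonlinear rational EH model turns the stationarity condition into a genuine rational equation, so the work lies in the careful expansion that collapses it to a clean quadratic, in verifying that the selected root is feasible and is the maximizer rather than a spurious critical point, and in matching each individual term of $X_k$, $Y_k$, and $Z_k$.
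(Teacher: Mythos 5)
Your proposal follows essentially the same route as the paper's proof: form the Lagrangian (\ref{lagrang111}), get $f_k^*$ and $z_k^*$ from stationarity (recovering $p_k^*$ via $z_k=p_kt_k^o$, with the $t^o_k$ cancellation you note), and reduce the interior first-order condition in the backscatter variable to the quadratic $X_k\rho_k^2-Y_k\rho_k+Z_k=0$, retaining the smaller root because the larger one exceeds one. One concrete caveat on your claimed exact coefficient match: your (correct) product-rule derivative of the rational EH term carries the factor $B_k=P_{\max}|g_k|^2$, so your expansion produces the $\nu_k(a_kc_k-b_k)$ entries of $Y_k$ and $Z_k$ with an extra $P_{\max}|g_k|^2$ relative to the theorem's stated coefficients; the paper's own stationarity condition (\ref{sss_k}) drops precisely this factor, and the stated $Y_k,Z_k$ follow that omission, so your careful expansion would surface an inconsistency in the paper rather than reproduce its coefficients verbatim --- a typo on the paper's side, not a flaw in your method. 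Finally, note that in (\ref{lagrang111}) the multiplier $\omega_k$ is attached to $t^b_k-s_k$, i.e., the upper bound $s_k\le t^b_k$, under which complementary slackness with $\omega_k>0$ would force $s_k=t^b_k$ and hence $\rho_k=1$; your reinterpretation of $\omega_k$ as the multiplier of the lower bound $s_k\ge 0$ is what makes the theorem's case $\rho_k^*=0$ for $\omega_k>0$ coherent, but it is not the convention the paper's Lagrangian actually encodes, so you should state explicitly which constraint $\omega_k$ dualizes before invoking slackness.
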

	\begin{proof}
	The Lagrangian function of problem (P4) is given by (\ref{lagrang111}),
\begin{figure*}[t]
	\centering
	\begin{align}
	&\mathcal{L}(\lambda,\mu_k,\omega_k,\Omega_k,\nu_k)=\lambda\bigg[ T-\sum_{k\in\mathcal{K}}({t^b_k+t^o_k})\bigg] +\sum_{k\in\mathcal{K}}\mu_k[f_\text{max}-f_k]+\sum_{k\in\mathcal{K}}\omega_k[t^b_k-s_k]\nonumber\\&\sum_{k\in\mathcal{K}}(1+\Omega_k)\bigg[ t^b_k\log_2(1+\frac{\zeta	 s_kP_\text{max}|h_k|^2|g_k|^2}{t^b_k\sigma^2})+t^o_k\log_2(1+\frac{z_k|h_k|^2}{t^o_k\sigma^2})+\frac{Tf_k}{C_\text{cpu}}\bigg]-\sum_{k\in\mathcal{K}}\Omega_k\gamma_{\min,k} \nonumber\\& -\sum_{k\in\mathcal{K}}\nu_k\bigg[ \frac{z_k}{\delta}+ p_{c,k}t^o_k+\epsilon_kf_k^3T+P_{c,k}t^b_k-\left(\frac{a_k(1-\frac{s_k}{t^b_k})P_\text{max}|g_k|^2+b_k}{(1-\frac{s_k}{t^b_k})P_\text{max}|g_k|^2+c_k}-\frac{b_k}{c_k}\right)t^b_k-Q_k -\sum_{i\in\mathcal{K},i\neq k}\bar{P}^b_k t^b_i\bigg],\label{lagrang111}
	\end{align}
	\medskip
	\hrule
\end{figure*}
where $\lambda$, $\mu_k$, $\omega_k$, $\Omega_k$, and $\nu_k$ are the non-negative Lagrange multipliers associated with the constraints of problem (P4). By taking the partial derivative of $\mathcal{L}(\lambda,\mu_k,\omega_k,\Omega_k,\nu_k)$ with respect to the optimization variables in (P4), we obtain
\begin{align}
\frac{\partial\mathcal{L}}{\partial{s_k}}
&	=\frac{\zeta t^b_kP_\text{max}|h_k|^2|g_k|^2( 1+\Omega_k)}{(t^b_k\sigma^2+\zeta s_kP_\text{max}|h_k|^2|g_k|^2)\ln2}\nonumber\\&-\frac{(a_kc_k-b_k)(t^b_k)^2( \nu_k)}{\left( (t^b_k-s_k)P_\text{max}|g_k|^2+c_kt^b_k\right) ^2}-\omega_k,~\forall k,\label{sss_k}\\
\frac{\partial\mathcal{L}}{{\partial{z_k}}}
&	=\frac{t^o_k|h_k|^2( 1+\Omega_k) }{(t^o_k\sigma^2+z_k|h_k|^2)\ln2}-\frac{\nu_k}{\delta},~\forall k,\\
\frac{\partial\mathcal{L}}{\partial{f_k}}
&	=\left( \frac{1+\Omega_k}{C_\text{cpu}}-3\nu_k\epsilon_kf^2_k\right) T-\mu_k,~\forall k.\label{sss_k2}
\end{align}	
By setting $\frac{\partial\mathcal{L}}{\partial{f_k}}$ and $\frac{\partial\mathcal{L}}{{\partial{z_k}}}=0$ as zero, the optimal CPU frequency and transmit power of WD $k$ can be calculated, respectively, as
\begin{align}
&\frac{\partial\mathcal{L}}{\partial{f_k}}=0\Rightarrow f^*_k=\left[\sqrt{\frac{(1 +\Omega)T-\mu_kC_\text{cpu}}{3\epsilon_k\nu_kTC_\text{cpu}} }\right]^+,\nonumber\\& \frac{\partial\mathcal{L}}{{\partial{z_k}}}=0\Rightarrow z^*_k=\left[\frac{(1+\Omega_k)t^o_k\delta}{\nu_k\ln2}-\frac{t^o_k\sigma^2}{|h_k|^2}\right]^+,~\forall k.
\end{align} 
Next, by using $z_k=p_kt^o_k$, the optimal transmit power can be calculated as $
p^*_k=\left[\frac{(1+\Omega_k)\delta}{\nu_k\ln2}-\frac{\sigma^2}{|h_k|^2}\right]^+.$
To derive the closed-form expression for $\rho_k$, we study (\ref{sss_k}). In particular, if $s_k = t^b_k$, the optimal backscattering coefficient is $\rho_k^*=1$. However, if $s_k \leq t^b_k$ holds, $\omega_k=0$ must be satisfied according to the Karush-Kuhn-Tucker (KKT) conditions. In this case, the optimal backscattering coefficient should satisfy: $X_k(\rho^*_k)^2- Y_k\rho^*_k+ Z_k=0$, where
$
X_k = \frac{\zeta (1+\Omega_k) P^3_{\max}|g_k|^6|h_k|^2}{\ln2},$
$Y_k = 2X_k+\frac{2\zeta c_k(1+\Omega_k) P_{\max}|g_k|^4|h_k|^2}{\ln2}+\zeta \nu_k(a_kc_k-b_k)P_{\max}|g_k|^2|h_k|^2,$ and
$	Z_k = X_k+\frac{\zeta c^2_k(1+\Omega_k) P_{\max}|g_k|^2|h_k|^2}{\ln2}+\frac{2\zeta c_k(1+\Omega_k) P^2_{\max}|g_k|^4|h_k|^2}{\ln2}-\nu_k(a_kc_k-b_k)\sigma^2,~\forall k.$
Since $0 \leq \rho^*_k\leq 1$ and 	$\frac{Y_k+\sqrt{Y^2_k-4X_kZ_k}}{2X_k}>1$, we have
\begin{align}
\rho^*_k=
\begin{cases}
\left[\frac{Y_k-\sqrt{Y^2_k-4X_kZ_k}}{2X_k}\right]^+, & \omega_k=0,\\
0, & \omega_k>0.
\end{cases}
\end{align} 
Note that  closed-form expressions for BC time, $t^b_k$,  and AT time, $t^o_k,$ appear intractable as the Lagrangian functions, i.e., $	\frac{\partial\mathcal{L}}{\partial{t^b_k}}$ and $	\frac{\partial\mathcal{L}}{\partial{t^o_k}}$ are linear functions with respect to  $t^b_k$ and $t^o_k$, respectively. To obtain the optimal values of these time allocations, a linear optimization package, e.g., the simplex method, can be applied, which usually finds a solution quickly.  
{Consequently, the dual variables are updated according to the gradient method with given optimization variables \cite{Boyd}. The updated dual variables are given by }
{\begin{align}
      \lambda(t+1)&=\left[ \lambda(t)- \varkappa(t) \left(  \sum_{k\in\mathcal{K}}(T-{t^b_k+t^o_k}) \right)
     \right]^+,~\forall t,\\
      \mu_k(t+1)&=\left[  \mu_k(t)- \varkappa(t) \left(  f_\text{max}-f_k\right)
     \right]^+,~\forall t, k,\\
       \omega_k(t+1)&=\left[  \omega_k(t)- \varkappa(t) \left(t^b_k-s_k\right)
     \right]^+,~\forall t, k,\\
     \nu_k (t+1)&=\bigg[   \nu_k(t)- \varkappa(t)  \bigg(Q_k -\frac{z_k}{\delta}- p_{c,k}t^o_k-\epsilon_kf_k^3T\nonumber\\
     &+ \bigg(\frac{a_k(1-\frac{s_k}{t^b_k})P_\text{max}|g_k|^2+b_k}{(1-\frac{s_k}{t^b_k})P_\text{max}|g_k|^2+c_k}-\frac{b_k}{c_k} \bigg)t^b_k-P_{c,k}t^b_k \nonumber\\
     &+\sum_{i\in\mathcal{K},i\neq k}\bar{P}^b_k t^b_i \bigg)  \bigg]^+,~\forall t, k,\\
     \Omega_k(t+1)&=\bigg[  \Omega_k(t)- \varkappa(t) \bigg( t^o_k\log_2(1+\frac{z_k|h_k|^2}{t^o_k\sigma^2})-\gamma_{\min,k}\nonumber\\&+ t^b_k\log_2(\!1\!+\!\frac{\zeta	 s_kP_\text{max}|h_k|^2|g_k|^2}{t^b_k\sigma^2}\!)\!+\!\frac{Tf_k}{C_\text{cpu}}\bigg)
    \bigg]^+\!\!,~\forall t, k,
\end{align}} {where $ \varkappa(t) \geq 0$ is a positive step size at iteration $t$. Then, the dual variables  $\{ \lambda,\mu_k,\omega_k,\Omega_k,\nu_k\}$  and the primal variables $\{\mathbf{t},\boldsymbol{\rho},\mathbf{p},\boldsymbol{\tau},\mathbf{f}\}$ are optimized iteratively.} 
	\end{proof}

	\subsubsection{Sub-problem 2: Optimizing $\boldsymbol{\Theta}$}
	By using the  optimal solutions of  sub-problem 1, i.e., $\{\mathbf{t}^*,\boldsymbol{\rho}^*,\mathbf{p}^*,\boldsymbol{\tau}^*,\mathbf{f}^*\}$\footnote{Note that in the rest of the paper, we consider the optimal solutions $\{\mathbf{t}^*,\boldsymbol{\rho}^*,\mathbf{p}^*,\boldsymbol{\tau}^*,\mathbf{f}^*\}$ as $\{\mathbf{t},\boldsymbol{\rho},\mathbf{p},\boldsymbol{\tau},\mathbf{f}\}$}, we propose a suboptimal algorithm to determine $\boldsymbol{\Theta}$. Since the denominator of $\eta_{EE}$, i.e., $E_{\text{total}}$ is not a function of $\boldsymbol{\Theta}$, (P1) can be equivalently written as
	\begin{subequations}
		\begin{align}
		\text{(P5)}: ~~& \underset{\boldsymbol{\Theta}}{\text{maximize}} \: \: \bar{R}_{\text{sum}} \\
		&\text{s.t.}\quad \text{(\ref{p1-1}), (\ref{p1-2})}, \text{(\ref{p1-66})}.
		\end{align}
	\end{subequations}
	By defining $\Xi_k=t^b_k\log_2 (t^b_k\sigma^2)+t^o_k\log_2 (t^o_k\sigma^2)-\frac{Tf_k}{C\text{cpu}}$, the objective function in (P5) can be represented as $\bar{R}_{\text{sum}}=\sum_{\scriptstyle k= 1}^K t^b_k\log_2(t^b_k\sigma^2+\zeta s_k P_\text{max}|h_k|^2|g_k|^2)+t^o_k\log_2(t^o_k\sigma^2+z_k|h_k|^2)-\Xi_k.\label{41a}$ Next, we exploit the SDR and MM techniques to determine the RIS phase shifts. Let us rewrite $|h_k|^2$ and $|g_k|^2$ as follows:
	\begin{subequations}
		\begin{align}
		|h_k|^2&={\boldsymbol{\theta}} ^H\boldsymbol{\phi}_{\text{UIM},k}\boldsymbol{\phi}^H_\text{UIM}{\boldsymbol{\theta}}+{\boldsymbol{\theta}} ^H\boldsymbol{\phi}_{\text{UIM},k}h_{\text{UM},k}^h\nonumber\\&+h_{\text{UM},k}\boldsymbol{\phi}^H_\text{UIM}{\boldsymbol{\theta}}+|h_{\text{UM},k}|^2,~\forall k,\label{45}\\
		|g_k|^2&={\boldsymbol{\theta}} ^H\boldsymbol{\phi}_{\text{PIU},k}\boldsymbol{\phi}^H_\text{PIU}{\boldsymbol{\theta}}+{\boldsymbol{\theta}} ^H\boldsymbol{\phi}_{\text{PIU},k}g_{\text{PU},k}^h\nonumber\\&+g_{\text{PU},k}\boldsymbol{\phi}^H_\text{PIU}{\boldsymbol{\theta}}+|g_{\text{PU},k}|^2,~\forall k, \label{46}
		\end{align}
	\end{subequations}
	where $\boldsymbol{\phi}_{\text{UIM},k}\triangleq\text{diag}(\mathbf{h}_{\text{UI},k}^H)\mathbf{h}_{\text{IM}}$ and $\boldsymbol{\phi}_{\text{PIU},k}\triangleq\text{diag}(\mathbf{g}_{\text{PI}}^H)\mathbf{g}_{\text{IU},k}$. Furthermore, we represent a matrix form for equations (\ref{45}) and (\ref{46}) as	\begin{align}	\label{45b}
	&|h_k|^2=\hat{{\boldsymbol{\theta}}}^H \boldsymbol{S}_k\hat{{\boldsymbol{\theta}}}+|h_{\text{UM},k}|^2,~\forall k, \nonumber\\
	&|g_k|^2=\hat{{\boldsymbol{\theta}}}^H \boldsymbol{R}_k\hat{{\boldsymbol{\theta}}}+|g_{\text{PU},k}|^2,~\forall k,
	\end{align}
	respectively, where $\hat{{\boldsymbol{\theta}}}=\left[ {\begin{array}{*{20}{c}}{{\boldsymbol{\theta}}}\\{{\mathbf{1}}}\end{array}} \right]$ and
	\begin{align}
	&\boldsymbol{S}_k=\left[ {\begin{array}{*{20}{c}}{{\boldsymbol{\phi}_{\text{UIM},k}\boldsymbol{\phi}^H_{\text{UIM},k}}}&\boldsymbol{\phi}_{\text{UIM},k}{h}^h_{\text{UM},k}\\{{{h}_{\text{UM},k}\boldsymbol{\phi}^H_{\text{UIM},k}}}&{{0}}\end{array}} \right],\nonumber\\&  \boldsymbol{R}_k=\left[ {\begin{array}{*{20}{c}}{{\boldsymbol{\phi}_{\text{PIU},k}\boldsymbol{\phi}^H_{\text{PIU},k}}}&\boldsymbol{\phi}_{\text{PIU},k}{g}^h_{\text{PU},k}\\{{{g}_{\text{PU},k}\boldsymbol{\phi}^H_{\text{PIU},k}}}&{{0}}\end{array}} \right].
	\end{align}
	By defining $\boldsymbol{\Phi}\triangleq \hat{\boldsymbol{\theta}}\hat{\boldsymbol{\theta}}^H$ and applying the identities $\hat{{\boldsymbol{\theta}}}^H \boldsymbol{S}_k\hat{{\boldsymbol{\theta}}}=\text{Tr}( \boldsymbol{S}_k\boldsymbol{\Phi})$ and $\hat{{\boldsymbol{\theta}}}^H \boldsymbol{R}_k\hat{{\boldsymbol{\theta}}}=\text{Tr}( \boldsymbol{R}_k\boldsymbol{\Phi})$, we rewrite (\ref{45b}) as a function of $\boldsymbol{\Phi}$ which then yields $|h_k|^2=\text{Tr}( \boldsymbol{S}_k\boldsymbol{\Phi})+|h_{\text{UM},k}|^2$ and $|g_k|^2=\text{Tr}( \boldsymbol{R}_k\boldsymbol{\Phi})+|g_{\text{PU},k}|^2$. In particular, by introducing BC, two quadratic terms, i.e., $|h_k|^2|g_k|^2$ and $|g_k|^2|g_k|^2$ appear in the objective function and constraint (\ref{p1-1}), respectively, which substantially contrasts with those in existing works \cite{Zhong,Bi1,Cui1,Chae,Cheng3,Zhang,Shanfeng,Tellambura,Park,Ye}. Accordingly, we first rewrite these products as follows:
	\begin{align}\label{47}
	|h_k|^2|g_k|^2&=\left( \hat{{\boldsymbol{\theta}}}^H \boldsymbol{S}_k\hat{{\boldsymbol{\theta}}}+|h_{\text{UM},k}|^2\right) \left( \hat{{\boldsymbol{\theta}}}^H \boldsymbol{R}_k\hat{{\boldsymbol{\theta}}}+|g_{\text{PU},k}|^2\right)\nonumber\\& =\hat{{\boldsymbol{\theta}}}^H \boldsymbol{S}_k\hat{{\boldsymbol{\theta}}} \hat{{\boldsymbol{\theta}}}^H \boldsymbol{R}_k\hat{{\boldsymbol{\theta}}}+|h_{\text{UM},k}|^2\hat{{\boldsymbol{\theta}}}^H \boldsymbol{R}_k\hat{{\boldsymbol{\theta}}}\nonumber\\&+|g_{\text{PU},k}|^2\hat{{\boldsymbol{\theta}}}^H \boldsymbol{S}_k\hat{{\boldsymbol{\theta}}}+|h_{\text{UM},k}|^2|g_{\text{PU},k}|^2,~\forall k,\\
	|g_k|^2|g_k|^2 &=\left( \hat{{\boldsymbol{\theta}}}^H \boldsymbol{R}_k\hat{{\boldsymbol{\theta}}}+|g_{\text{PU},k}|^2\right)\left( \hat{{\boldsymbol{\theta}}}^H \boldsymbol{R}_k\hat{{\boldsymbol{\theta}}}+|g_{\text{PU},k}|^2\right)\label{48} \nonumber \\&=\hat{{\boldsymbol{\theta}}}^H \boldsymbol{R}_k\hat{{\boldsymbol{\theta}}} \hat{{\boldsymbol{\theta}}}^H \boldsymbol{R}_k\hat{{\boldsymbol{\theta}}}+2|g_{\text{PU},k}|^2\hat{{\boldsymbol{\theta}}}^H \boldsymbol{R}_k\hat{{\boldsymbol{\theta}}}\nonumber\\&+|g_{\text{PU},k}|^2|g_{\text{PU},k}|^2,~\forall k,
	\end{align}
	which are quadratic polynomial in $\hat{\boldsymbol{\theta}}$ and non-convex functions.
	
To address this issue, we use the MM technique where a convex minorizing function is obtained in each iteration via SDR.
	A minorizer to a function $f(\mathbf{y}):\mathbb{C}^N\rightarrow\mathbb{R}$ is constructed with bounded curvature to take the second-order Taylor expansion as $f(\mathbf{y})\geq f(\mathbf{y}_0)+\text{Re}\left\lbrace\nabla f(\mathbf{y}_0)^H(\mathbf{y}-\mathbf{y}_0)\right\rbrace -\frac{l }{2}\|\mathbf{y}-\mathbf{y}_0\|^2,$ where $\mathbf{y}_0\in \mathbb{C}^N$ is any point, and $l$ denotes the maximum curvature of $f(\mathbf{y})$ \cite{Palomar}. Accordingly, we obtain lower bounds for (\ref{47}) and (\ref{48}) given in (\ref{49}) and (\ref{355}), respectively.
			\begin{figure*}[t]
		\centering
	\begin{align}\label{49}
	|h_k|^2|g_k|^2&\geq \hat{{\boldsymbol{\theta}_0}}^H \boldsymbol{S}_k\hat{{\boldsymbol{\theta}_0}} \hat{{\boldsymbol{\theta}_0}}^H \boldsymbol{R}_k\hat{{\boldsymbol{\theta}_0}}+|h_{\text{UM},k}|^2\hat{{\boldsymbol{\theta}_0}}^H \boldsymbol{R}_k\hat{{\boldsymbol{\theta}_0}}+|g_{\text{PU},k}|^2\hat{{\boldsymbol{\theta}_0}}^H \boldsymbol{S}_k\hat{{\boldsymbol{\theta}_0}}+|h_{\text{UM},k}|^2|g_{\text{PU},k}|^2\nonumber\\
	&\hat{{\boldsymbol{\theta}_0}}^H\boldsymbol{B}_k(\hat{{\boldsymbol{\theta}}}-\hat{{\boldsymbol{\theta}_0}})+(\hat{{\boldsymbol{\theta}}}-\hat{{\boldsymbol{\theta}_0}})^H\boldsymbol{B}_k\hat{{\boldsymbol{\theta}_0}}-\frac{l}{2}(\hat{{\boldsymbol{\theta}}}^H\hat{{\boldsymbol{\theta}}}-\hat{{\boldsymbol{\theta}}}^H\hat{{\boldsymbol{\theta}_0}}-\hat{{\boldsymbol{\theta}}_0}^H\hat{{\boldsymbol{\theta}}}+\|\hat{{\boldsymbol{\theta}_0}}\|^2)\nonumber\\
	&=-\frac{l}{2}(\hat{{\boldsymbol{\theta}}}^H\hat{{\boldsymbol{\theta}}}-\hat{{\boldsymbol{\theta}}}^H\hat{{\boldsymbol{\theta}_0}}-\hat{{\boldsymbol{\theta}}_0}^H\hat{{\boldsymbol{\theta}}}+\|\hat{{\boldsymbol{\theta}_0}}\|^2)+\hat{{\boldsymbol{\theta}_0}}^H\boldsymbol{B}_k\hat{{\boldsymbol{\theta}}}+\hat{{\boldsymbol{\theta}}}^H\boldsymbol{B}_k\hat{{\boldsymbol{\theta}}_0}+\kappa_1\nonumber\\
	&=-\frac{l}{2}\left( \hat{{\boldsymbol{\theta}}}^H\mathbf{I}\hat{{\boldsymbol{\theta}}}+\hat{{\boldsymbol{\theta}}}^H\left( -\frac{2}{l}\boldsymbol{B}_k\hat{{\boldsymbol{\theta}}}_0-\mathbf{I}\hat{{\boldsymbol{\theta}}}_0\right) +\left(-\frac{2}{l}\boldsymbol{B}_k\hat{{\boldsymbol{\theta}}}_0-\mathbf{I}\hat{{\boldsymbol{\theta}}}_0 \right)^H\hat{{\boldsymbol{\theta}}} \right) +\kappa_1,~\forall k,\\
	|g_k|^2|g_k|^2&\geq-\frac{l}{2}\left( \hat{{\boldsymbol{\theta}}}^H\mathbf{I}\hat{{\boldsymbol{\theta}}}+\hat{{\boldsymbol{\theta}}}^H\left( -\frac{2}{l}\boldsymbol{C}_k\hat{{\boldsymbol{\theta}}}_0-\mathbf{I}\hat{{\boldsymbol{\theta}}}_0\right) +\left(-\frac{2}{l}\boldsymbol{C}_k\hat{{\boldsymbol{\theta}}}_0-\mathbf{I}\hat{{\boldsymbol{\theta}}}_0 \right)^H\hat{{\boldsymbol{\theta}}} \right) +\kappa_2,~\forall k.\label{355}
	\end{align}
			\medskip
		\hrule
	\end{figure*}
	where
	$\boldsymbol{B}_k\triangleq \boldsymbol{R}_k\hat{{\boldsymbol{\theta}_0}}\hat{{\boldsymbol{\theta}_0}}^H\boldsymbol{S}_k+\boldsymbol{S}_k\hat{{\boldsymbol{\theta}_0}}\hat{{\boldsymbol{\theta}_0}}^H\boldsymbol{R}_k+ |h_{\text{UM},k}|^2\boldsymbol{R}_k+ |g_{\text{PU},k}|^2\boldsymbol{S}_k$ and $\boldsymbol{C}_k\triangleq 2\boldsymbol{R}_k\hat{{\boldsymbol{\theta}_0}}\hat{{\boldsymbol{\theta}_0}}^H\boldsymbol{R}_k+2 |g_{\text{PU},k}|^2\boldsymbol{R}_k$ are
	Hermitian matrices. In addition, we have $\kappa_1=\hat{{\boldsymbol{\theta}_0}}^H (\boldsymbol{S}_k\hat{{\boldsymbol{\theta}_0}}\hat{{\boldsymbol{\theta}^H_0}} \boldsymbol{R}_k+|h_{\text{UM},k}|^2 \boldsymbol{R}_k+|g_{\text{PU},k}|^2 \boldsymbol{S}_k-2\boldsymbol{B}_k)\hat{{\boldsymbol{\theta}_0}}+|h_{\text{UM},k}|^2|g_{\text{PU},k}|^2$ and $\kappa_2=\hat{{\boldsymbol{\theta}_0}}^H (\boldsymbol{R}_k\hat{{\boldsymbol{\theta}_0}}\hat{{\boldsymbol{\theta}^H_0}} \boldsymbol{R}_k+|g_{\text{PU},k}|^2 \boldsymbol{R}_k-2\boldsymbol{C}_k)\hat{{\boldsymbol{\theta}_0}}+|g_{\text{PU},k}|^4$. In particular, equations (\ref{49}) and (\ref{355}) can be expressed as $\tilde{{\boldsymbol{\theta}}}^H\mathbf{T}_k\tilde{{\boldsymbol{\theta}}}$ and $\tilde{{\boldsymbol{\theta}}}^H\mathbf{U}\tilde{{\boldsymbol{\theta}}}$, respectively, where $\tilde{{\boldsymbol{\theta}}}=\left[ {\begin{array}{*{20}{c}}{{\hat{\boldsymbol{\theta}}}}\\{{\mathbf{1}}}\end{array}} \right]$ and
	\begin{align}\label{32}
	&\boldsymbol{T}_k=-\left[ {\begin{array}{*{20}{c}}{\mathbf{I}}&-\frac{2}{l}\boldsymbol{B}_k\hat{{\boldsymbol{\theta}}}_0-\mathbf{I}\hat{{\boldsymbol{\theta}}}_0\\\left(-\frac{2}{l}\boldsymbol{B}_k\hat{{\boldsymbol{\theta}}}_0-\mathbf{I}\hat{{\boldsymbol{\theta}}}_0 \right) ^H&{{0}}\end{array}} \right], \\	
	&\boldsymbol{U}_k=-\left[ {\begin{array}{*{20}{c}}{\mathbf{I}}&-\frac{2}{l}\boldsymbol{C}_k\hat{{\boldsymbol{\theta}}}_0-\mathbf{I}\hat{{\boldsymbol{\theta}}}_0\\\left(-\frac{2}{l}\boldsymbol{C}_k\hat{{\boldsymbol{\theta}}}_0-\mathbf{I}\hat{{\boldsymbol{\theta}}}_0 \right) ^H&{{0}}\end{array}} \right]\label{332}.
	\end{align}
	Consequently, by letting $\tilde{\boldsymbol{\Phi}}\triangleq \tilde{\boldsymbol{\theta}}\tilde{\boldsymbol{\theta}}^H$, the objective function in (P5) and constraint (\ref{p1-1}) can be rewritten as given in (\ref{wer}), where
	\begin{figure*}[t]
		\centering
		\begin{align}
		&\bar{R}_{\text{sum}}=\sum_{\scriptstyle k= 1}^K t^b_k\log_2 \left( t^b_k\sigma^2+\zeta s_kP_\text{max}\big( \text{Tr}(\boldsymbol{T}_k\tilde{\boldsymbol{\Phi}})+\kappa_1\big)\right)+t^o_k\log_2\left( t^o_k\sigma^2+z_k\left( \text{Tr}( \tilde{\boldsymbol{S}_k}\tilde{\boldsymbol{\Phi}})+|h_{\text{UM},k}|^2\right) \right)-\Xi_k,\label{wer}\\
		&(\text{\ref{p1-1}}): \big(\big(t^b_k-\tilde{ t}^b\big)a_k-\Pi_k\big)P_\text{max}\tilde{P}_\text{max}(\text{Tr}(\mathbf{U}_k\tilde{\boldsymbol{\Phi}}) +\kappa_2)+\big(a_kc_k(t^b_k\tilde{P}_\text{max}+\tilde{t}^bP_\text{max})\nonumber\\ \nonumber
		&\quad\quad-\Pi_k c_k(\tilde{P}_\text{max}-P_\text{max}) +b_kt^b_k-\tilde{t}^b\tilde{P}_\text{max}\big) ( \text{Tr}( \tilde{\boldsymbol{R}_k}\tilde{\boldsymbol{\Phi}})+|g_{\text{PU},k}|^2) +b_kc_k(t^b_k-\tilde{ P}^b)-c^2_k\Pi_k\geq 0,~\forall k.\label{ertt}
		\end{align}
		\medskip
		\hrule
	\end{figure*}
	$\Pi_k= P_{c,k}t^b_k+ \frac{z_k}{\delta}+ p_{c,k}t^o_k+\epsilon_kf_k^3T	+\frac{b_k(t^b_k-\tilde{t}^b)}{c_k}-Q_k$, $\tilde{t}^b=\sum_{\scriptstyle i= 1\atop\scriptstyle i \ne k}^K{t}^b_i$, and $\tilde{ P}_\text{max}=(1-\frac{s_k}{t^b_k})P_\text{max}$. Besides, $\tilde{\boldsymbol{S}_k}$ and $\tilde{\boldsymbol{R}_k}$ are two matrices with extra zero rows and columns. Finally, by dropping the rank-one constraint, i.e., $\text{Rank}(\tilde{\boldsymbol{\Phi}})=1$, the equivalent form of problem (P5) can be recast as 
	\begin{subequations}
		\begin{align}
		\text{(P6)}: ~~& \underset{ \tilde{\boldsymbol{\Phi}}}{\text{maximize}} \: \: \bar{R}_{\text{sum}}\\
		&\text{s.t.}\quad \text{(6b)},\label{32a}\\
		&\quad\quad t^b_k\log_2 \left( t^b_k\sigma^2+\zeta s_kP_\text{max}\big( \text{Tr}(\boldsymbol{T}_k\tilde{\boldsymbol{\Phi}})+\kappa_2\big)\right)\label{32b}\\ \nonumber
		& \quad\quad +t^o_k\log_2\left( t^o_k\sigma^2+z_k\left( \text{Tr}( \tilde{\boldsymbol{S}_k}\tilde{\boldsymbol{\Phi}})+|h_{\text{UM},k}|^2\right) \right)\\ \nonumber
		& \quad\quad -\Xi_k\geq \gamma_{\min,k },~\forall k,\\
		&\quad\quad \tilde{\boldsymbol{\Phi}}_{n,n}=1, \forall n \in \{1,...,N+2\}, \quad \tilde{\boldsymbol{\Phi}}\succeq \mathbf{0}.\label{32c}
		\end{align}
	\end{subequations}
	However, the obtained solution may not satisfy $\text{Rank}(\tilde{\boldsymbol{\Phi}})=1$, so we apply the following lemma to rewrite the non-convex rank-one constraint into its equivalent form\cite{Zargari1,Zargari11}.	
	\begin{lemma}
		The equivalent form of $\text{Rank}(\mathbf{V})= 1$, is given by
		\begin{equation}\label{rank}
		\|\mathbf{V}\|_*-\| \mathbf{V}\|_2 \leq 0.
		\end{equation}	
	\end{lemma}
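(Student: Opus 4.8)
The plan is to reduce the claim to an elementary comparison of two unitarily invariant norms via the singular value decomposition of $\mathbf{V}$. Let $\sigma_1\geq\sigma_2\geq\cdots\geq\sigma_n\geq 0$ denote the ordered singular values of $\mathbf{V}\in\mathbb{C}^{n\times n}$. By definition the nuclear norm collects all of them, $\|\mathbf{V}\|_*=\sum_{i=1}^{n}\sigma_i$, whereas the spectral norm retains only the largest, $\|\mathbf{V}\|_2=\sigma_1$. First I would substitute these two identities into the left-hand side of \eqref{rank}, which telescopes to the residual tail $\|\mathbf{V}\|_*-\|\mathbf{V}\|_2=\sum_{i=2}^{n}\sigma_i$.

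The decisive observation is that this tail is a sum of non-negative quantities, so $\|\mathbf{V}\|_*-\|\mathbf{V}\|_2\geq 0$ holds for \emph{every} matrix, with equality if and only if $\sigma_i=0$ for all $i\geq 2$. From here both directions of the equivalence follow at once. If $\text{Rank}(\mathbf{V})=1$, then exactly one singular value is nonzero, the tail vanishes, and \eqref{rank} is satisfied with equality. Conversely, if \eqref{rank} holds, then combining it with the universal lower bound forces the tail to vanish, hence $\sigma_2=\cdots=\sigma_n=0$ and $\text{Rank}(\mathbf{V})\leq 1$.

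To rule out the degenerate rank-zero case and conclude $\text{Rank}(\mathbf{V})=1$ exactly, I would appeal to the structure of the feasible set in which the lemma is applied: the constraint $\tilde{\boldsymbol{\Phi}}_{n,n}=1$ in \eqref{32c} forces a nonzero diagonal, so that $\mathbf{V}\neq\mathbf{0}$ and therefore $\sigma_1>0$. This guarantees the surviving singular value is strictly positive and fixes the rank at one.

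I do not anticipate a genuine obstacle here; the only point requiring care is the logical status of \eqref{rank}. Because $\|\mathbf{V}\|_*-\|\mathbf{V}\|_2$ is non-negative for all matrices, the stated inequality is operationally an equality constraint, and it is precisely this one-sided boundedness that upgrades a bare inequality into an exact characterization of the rank-one property. Making that implication explicit is what converts the otherwise immediate singular-value identity into a proof of equivalence rather than a mere necessary condition.
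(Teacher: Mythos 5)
Your proof is correct and follows essentially the same route as the paper's: both express $\|\mathbf{V}\|_*$ and $\|\mathbf{V}\|_2$ via the singular values and observe that the universal inequality $\sum_j \sigma_j \geq \max_j \sigma_j$ holds with equality precisely in the rank-one case. Your only addition is the explicit exclusion of the degenerate $\mathbf{V}=\mathbf{0}$ case via the diagonal constraint $\tilde{\boldsymbol{\Phi}}_{n,n}=1$, a point the paper's proof leaves implicit.
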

	\begin{proof}
		For any $\mathbf{V}\in \mathbb{H}^m$, the inequality $\|\mathbf{V}\|_*=\sum_{j}\varepsilon_j\geq |\mathbf{V}|_2=\text{max}_j\{\varepsilon_j\}$ holds, where $\varepsilon_j$ is the $j$-th singular value of $\mathbf{V}$. Equality holds if and only if $\mathbf{V}$ has unit rank.
	\end{proof}
	However, constraint (\ref{rank}) is in the form of  DC functions. By adopting the first-order Taylor approximation of $\| \mathbf{V}\|_2$, we obtain
	\begin{align}\label{39}
	&\|\mathbf{V}\|_*-\|\mathbf{V}^{i}\|_2-\text{Tr}\left[\boldsymbol{\lambda}_\text{max}\left(\mathbf{V}^{i} \right) \boldsymbol{\lambda}^H_\text{max}\left(\mathbf{V}^{i} \right)(\mathbf{V}-\mathbf{V}^{i} )\right]\leq 0,
	\end{align}	
	where $\boldsymbol{\lambda}_\text{max}$ is the eigenvector corresponding to the maximum eigenvalue of matrix $\mathbf{V}^{i}$ in the $i$-th iteration. In this way, the equivalent form of (P6) can be stated as 
	\begin{subequations}
		\begin{align}
		\text{(P7)}: ~~& \underset{ \tilde{\boldsymbol{\Phi}}}{\text{maximize}} \: \: \bar{R}_{\text{sum}}-\Delta \tilde{f} \\
		&\text{s.t.}\quad \text{(\ref{32a})--(\ref{32c})},
		\end{align}
	\end{subequations}
	where 
	\begin{equation}
	  \tilde{f} =	 \|\tilde{\boldsymbol{\Phi}}\|_*-\|\tilde{\boldsymbol{\Phi}}^{i}\|_2-\text{Tr}\left[\boldsymbol{\lambda}_\text{max}\left(\tilde{\boldsymbol{\Phi}}^{i} \right) \boldsymbol{\lambda}^H_\text{max}\left(\tilde{\boldsymbol{\Phi}}^{i} \right)(\tilde{\boldsymbol{\Phi}}-\tilde{\boldsymbol{\Phi}}^{i} )\right]\!,  
	\end{equation}
	and $\Delta >0$ is a constant which penalizes the objective function for any $\boldsymbol{\Phi}$ whose rank is greater than one. Now,  (P7) is a convex problem that can be solved by standard convex optimization solvers such as CVX \cite{cvx}. The overall steps  for solving (P7) are  summarized in Algorithm 1.

		\begin{algorithm}[t]
		\algsetup{linenosize=\scriptsize }
		\small 
		\caption{Majorization minimization (MM) Algorithm}
		\begin{algorithmic}[1]
			\renewcommand{\algorithmicrequire}{\textbf{Input:}}
			\renewcommand{\algorithmicensure}{\textbf{Output:}}
			\REQUIRE Set the number of iterations ${t}$, the maximum number of iteration ${T}_\text{max}$, and $\boldsymbol{\theta}_0^{(0)}$.
			\STATE \textbf{repeat}\\
			\STATE \quad Obtain $\boldsymbol{T}_k$ and $\boldsymbol{U}_k$ according to (\ref{32}) and (\ref{332}), respectively.
			\STATE \quad Solve problem (P7) to
			obtain $\boldsymbol{\theta}$ by performing singular  \\ \quad value decomposition (SVD).\\
			\STATE \quad Set ${t}:={t}+1$;
			\STATE \textbf{until} ${t}={T}_{\text{max}}$
			\STATE \textbf{Return} $\{\boldsymbol{\theta}^*\}=\{\mathcal{\boldsymbol{\theta}}^{(t)}_0\}$
		\end{algorithmic}
	\end{algorithm}
	
	\subsection{Proposed Solution for Energy Minimization}
	Next, by setting $\alpha=0$, we solve the energy minimization problem to obtain the optimal value of $E_{\text{total,max}}$ which is formulated as follows:
	\begin{subequations}
		\begin{align}
		\text{(P8)}: ~~& \underset{\mathbf{t},\mathbf{s},\mathbf{z},\mathbf{f}}{\text{maximize}} \: \:-\bar{E}_{\text{total}}\\
		&\text{s.t.}\quad \text{(\ref{p3}), (\ref{p4-3})--(\ref{p4-6})}.
		\end{align}
	\end{subequations}
	Since problem  (P8) is also convex, we apply the Lagrange duality to obtain its optimal solutions in closed-form expressions next. 
	\begin{theorem}
		Given the non-negative Lagrange multipliers,
		i.e., $\{\lambda,\mu_k,\omega_k,\psi_k,\nu_k\}$, the maximizer of Lagrangian function is given by
		\begin{subequations}\label{lagrang2}
			{\small		\begin{align}
			&p^*_k=\left[\frac{\psi_k\delta}{(1+\nu_k)\ln2}-\frac{\sigma^2}{|h_k|^2}\right]^+,\; f^*_k=\left[\sqrt{\frac{\mu_kC_\text{cpu}-\psi_kT }{3\epsilon_kT(\nu_k+1)} }\right]^+,\\
			&\rho^*_k=
			\begin{cases}
			\left[\frac{Y_k-\sqrt{Y^2_k-4X_kZ_k}}{2X_k}\right]^+, & \omega_k=0,\\
			0, & \omega_k>0,
			\end{cases}
			\end{align}}
		\end{subequations}
		where $	X_k = \frac{\zeta \psi_k P^3_{\max}|g_k|^6|h_k|^2}{\ln2},$
		$Y_k = 2X_k+\frac{2\zeta c_k\psi_k P_{\max}|g_k|^4|h_k|^2}{\ln2}+\zeta (\nu_k)(a_kc_k-b_k)P_{\max}|g_k|^2|h_k|^2,$
		$Z_k = X_k+\frac{\zeta c^2_k\psi_k P_{\max}|g_k|^2|h_k|^2}{\ln2}+\frac{2\zeta c_k\psi_k P^2_{\max}|g_k|^4|h_k|^2}{\ln2}-(\nu_k)(a_kc_k-b_k)\sigma^2.$
	\end{theorem}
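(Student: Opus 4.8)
The plan is to mirror the proof of Theorem~1, since (P8) has exactly the same feasible set as (P4) and differs only in its objective. First I would form the Lagrangian of (P8), attaching $\lambda$ to the total-time budget (\ref{p1-3}), $\mu_k$ to the CPU-frequency cap (\ref{p1-4}), $\omega_k$ to the auxiliary bound $s_k \leq t^b_k$ in (\ref{p4-5}), $\psi_k$ to the per-WD throughput requirement (\ref{p4-3}), and $\nu_k$ to the energy-causality constraint (\ref{p4-4}). The crucial structural difference from Theorem~1 is that the energy consumption $\bar{E}_{\text{total}}$ now sits in the objective with unit coefficient rather than the throughput; consequently the local-computing and transmit-power energy terms acquire a factor $(1+\nu_k)$, while the throughput terms carry only $\psi_k$. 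This is precisely why the roles played by $(1+\Omega_k)$ and $\nu_k$ in Theorem~1 are taken over by $\psi_k$ and $(1+\nu_k)$ here.

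Next I would impose the stationarity (KKT) conditions. Differentiating with respect to $z_k = p_k t^o_k$ gives $-\tfrac{1+\nu_k}{\delta} + \psi_k \tfrac{t^o_k|h_k|^2}{(t^o_k\sigma^2 + z_k|h_k|^2)\ln 2} = 0$; solving for $z_k$ and dividing by $t^o_k$ yields the claimed $p_k^*$ once the $[\cdot]^+$ projection enforces $p_k \geq 0$. Differentiating with respect to $f_k$ gives a relation of the form $-3\epsilon_k(1+\nu_k)f_k^2 T + \psi_k \tfrac{T}{C_{\text{cpu}}} - \mu_k = 0$, which is solved for the single positive root and then clipped by $[\cdot]^+$ to produce the stated $f_k^*$. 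Both steps are routine once the Lagrangian is written down, and they reuse verbatim the manipulations in (\ref{sss_k})--(\ref{sss_k2}).

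The one delicate part is the backscattering coefficient. Differentiating with respect to $s_k = \rho_k t^b_k$ produces a rational expression inherited from the non-linear rational EH model, whose numerator (after clearing the squared denominator) is a quadratic in $\rho_k$ of the form $X_k\rho_k^2 - Y_k\rho_k + Z_k = 0$. Here I would invoke complementary slackness on $\omega_k$: if the bound $s_k \leq t^b_k$ is active then $\rho_k^* = 1$, whereas if it is inactive then $\omega_k = 0$ and the quadratic holds with equality. Selecting the physically admissible root requires arguing that the ``$+$'' root exceeds one, so that only $\rho_k^* = \big[(Y_k - \sqrt{Y_k^2 - 4X_kZ_k})/(2X_k)\big]^+$ can lie in $[0,1]$. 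The coefficients $X_k,Y_k,Z_k$ differ from those in Theorem~1 only through the substitution $(1+\Omega_k)\to\psi_k$ in the throughput-derived parts, while the $\nu_k(a_kc_k-b_k)$ contributions carried by the EH term are unchanged.

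Finally, as in Theorem~1, $\partial\mathcal{L}/\partial t^b_k$ and $\partial\mathcal{L}/\partial t^o_k$ remain affine in the time variables, so no closed form for the time allocations is available; I would note that they follow from the resulting linear program (e.g.\ the simplex method), with the dual variables $\{\lambda,\mu_k,\omega_k,\psi_k,\nu_k\}$ updated by a projected subgradient iteration alternating with the primal updates. The main obstacle is therefore not the algebra for $p_k^*$ and $f_k^*$ but the correct root selection and the active/inactive case analysis for $\rho_k^*$ under the rational EH model.
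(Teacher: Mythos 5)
Your proposal follows the paper's proof of Theorem~2 essentially verbatim: the same Lagrangian (\ref{lagrang22}) with the identical multiplier assignments, the same stationarity conditions in $z_k$ and $f_k$ (recovering $p_k^*$ through $z_k = p_k t^o_k$), the same complementary-slackness case split on $\omega_k$ with the smaller root of $X_k\rho_k^2 - Y_k\rho_k + Z_k = 0$ selected because the larger root exceeds one, and the same closing observations that the time allocations remain linear in the Lagrangian (hence simplex) while the dual variables are updated by projected subgradient steps. One minor note: solving your stationarity relation for $f_k$ literally gives a numerator proportional to $\psi_k T - \mu_k C_\text{cpu}$ (with $C_\text{cpu}$ also appearing in the denominator), whereas the theorem states $\mu_k C_\text{cpu} - \psi_k T$; this discrepancy is inherited from the paper's own statement and proof rather than being a gap in your argument.
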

	\begin{proof}
	The Lagrangian function of problem (P8) is
given by (\ref{lagrang22}),
\begin{figure*}[h!] 
	\centering
	\begin{align}
	&\mathcal{L}(\lambda,\mu_k,\omega_k,\psi_k,\nu_k)=\lambda\bigg[ T-\sum_{k\in\mathcal{K}}({t^b_k+t^o_k})\bigg] +\sum_{k\in\mathcal{K}}\mu_k[f_\text{max}-f_k]+\sum_{k\in\mathcal{K}}\omega_k[t^b_k-s_k]\nonumber\\&+\sum_{k\in\mathcal{K}}\psi_k\bigg[ t^b_k\log_2(1+\frac{\zeta s_kP_\text{max}|h_k|^2|g_k|^2}{t^b_k\sigma^2})+t^o_k\log_2(1+\frac{z_k|h_k|^2}{t^o_k\sigma^2})+\frac{Tf_k}{C_\text{cpu}}-\gamma_{\min,k}\bigg] \nonumber\\& -\sum_{k\in\mathcal{K}}\nu_k\bigg[ \frac{z_k}{\delta}+ p_{c,k}t^o_k+\epsilon_kf_k^3T+P_{c,k}t^b_k-\left(\frac{a_k(1-\frac{s_k}{t^b_k})P_\text{max}g_k+b_k}{(1-\frac{s_k}{t^b_k})P_\text{max}g_k+c_k}-\frac{b_k}{c_k}\right)t^b_k-\sum_{i\in\mathcal{K},i\neq k}\bar{P}^b_k t^b_i-Q_k\bigg]\nonumber\\&- \sum\limits_{\scriptstyle k= 1}^K\left\lbrace P_{c,k}t^b_k+ \frac{z_k}{\delta}+ p_{c,k}t^o_k+\epsilon_kf_k^3T\right\rbrace.\label{lagrang22}
	\end{align}
	\medskip
	\hrule
\end{figure*}
where $\lambda$, $\mu_k$, $\omega_k$, $\psi_k$, and $\nu_k$ are the non-negative Lagrange multipliers associated with the constraints of problem (P8). Similar to (\ref{sss_k})--(\ref{sss_k2}), by taking the partial derivative of $\mathcal{L}(\lambda,\mu_k,\omega_k,\psi_k,\nu_k)$ with respect to the optimization variables in (P8), and setting $\frac{\partial\mathcal{L}}{\partial{f_k}}=0$ and $\frac{\partial\mathcal{L}}{{\partial{z_k}}}=0$, the optimal value of CPU frequency and transmit power of WD $k$ in the second phase can be obtained  as
\begin{align}
&\frac{\partial\mathcal{L}}{\partial{f_k}}=0\Rightarrow f^*_k=\left[\sqrt{\frac{\mu_kC_\text{cpu}-\psi_kT }{3\epsilon_kT(\nu_k+1)} }\right]^+,\nonumber\\& \frac{\partial\mathcal{L}}{{\partial{z_k}}}=0\Rightarrow z^*_k=\left[\frac{\psi_kt^o_k\delta}{(1+\nu_k)\ln2}-\frac{t^o_k\sigma^2}{|h_k|^2}\right]^+,~\forall k,
\end{align} 
respectively. Consequently, by using $z_k=p_kt^o_k$, the optimal transmit power can be written as $p^*_k=\left[\frac{\psi_k\delta}{(1+\nu_k)\ln2}-\frac{\sigma^2}{|h_k|^2}\right]^+,~\forall k.$
Base on Theorem 1, we have $X_k(\rho^*_k)^2- Y_k\rho^*_k+ Z_k=0$ for obtaining $\rho_k$, which yields
$X_k = \frac{\zeta \psi_k P^3_{\max}|g_k|^6|h_k|^2}{\ln2},$
$Y_k= 2X_k+\frac{2\zeta c_k\psi_k P_{\max}|g_k|^4|h_k|^2}{\ln2}+\zeta (\nu_k)(a_kc_k-b_k)P_{\max}|g_k|^2|h_k|^2$, and $
Z_k = X_k+\frac{\zeta c^2_k\psi_k P_{\max}|g_k|^2|h_k|^2}{\ln2}+\frac{2\zeta c_k\psi_k P^2_{\max}|g_k|^4|h_k|^2}{\ln2}-(\nu_k)(a_kc_k-b_k)\sigma^2,~\forall k.$
Since $0 \leq \rho^*_k\leq 1$ and 	$\frac{Y_k+\sqrt{Y^2_k-4X_kZ_k}}{2X_k}>1$, the optimal value of $\rho^*_k$ can be expressed as
\begin{align}
\rho^*_k=
\begin{cases}
\left[\frac{Y_k-\sqrt{Y^2_k-4X_kZ_k}}{2X_k}\right]^+, & \omega_k=0,\\
0, & \omega_k>0.
\end{cases}
\end{align} 
Next,  we obtain the optimal values of $t^b_k$ and $t^o_k$ efficiently, by applying the simplex method.
{By using the gradient method, the dual variables are updated as follows: }
{\begin{align}
       \lambda(t+1)&=\left[ \lambda(t)- \hat{\varkappa}(t) \left(  \sum_{k\in\mathcal{K}}(T-{t^b_k+t^o_k}) \right)
     \right]^+,~\forall t,\\
       \mu_k(t+1)&=\left[  \mu_k(t)- \hat{\varkappa}(t) \left(  f_\text{max}-f_k\right)
     \right]^+,~\forall t, k,\\
       \omega_k(t+1)&=\left[  \omega_k(t)- \hat{\varkappa}(t) \left(t^b_k-s_k\right)
     \right]^+,~\forall t, k,\\
     \nu_k (t+1)&=\bigg[   \nu_k(t)- \hat{\varkappa}(t)  \bigg(Q_k -\frac{z_k}{\delta}- p_{c,k}t^o_k-\epsilon_kf_k^3T\nonumber\\
     &+ \bigg(\frac{a_k(1-\frac{s_k}{t^b_k})P_\text{max}|g_k|^2+b_k}{(1-\frac{s_k}{t^b_k})P_\text{max}|g_k|^2+c_k}-\frac{b_k}{c_k} \bigg)t^b_k-P_{c,k}t^b_k \nonumber\\
     &+\sum_{i\in\mathcal{K},i\neq k}\bar{P}^b_k t^b_i \bigg)  \bigg]^+,~\forall t, k,\\
      \psi_k(t+1)&=\bigg[ \psi_k(t)- \hat{\varkappa}(t) \bigg( t^b_k\log_2(1+\frac{\zeta s_kP_\text{max}|h_k|^2|g_k|^2}{t^b_k\sigma^2})\nonumber\\
     &+t^o_k\log_2(1+\frac{z_k|h_k|^2}{t^o_k\sigma^2})+\frac{Tf_k}{C_\text{cpu}}-\gamma_{\min,k}\bigg)
    \bigg]^+,~\forall t, k,
\end{align}} {where $ \hat{\varkappa}(t) \geq 0$ is a positive step size at iteration $t$. Then, the dual variables  $\{ \lambda,\mu_k,\omega_k,\psi_k,\nu_k\}$  and the primal variables $\{\mathbf{t},\boldsymbol{\rho},\mathbf{p},\boldsymbol{\tau},\mathbf{f}\}$ are optimized iteratively.} 
	\end{proof}

	\subsection{Solution of Pareto Optimal System EE}
	After obtaining the optimal solutions of $R_{\text{sum,max}}$ and $E_{\text{total,max}}$, we proceed to obtain the Pareto optimal EE via solving problem (P3). By adopting the Lagrangian method, we give the optimal solutions in Theorem 3. 
	\begin{theorem}
		Given the non-negative Lagrange multipliers,
		i.e., $\{\lambda,\mu_k,\omega_k,\psi_k,\nu_k,\varsigma,\Omega\}$, the maximizer of Lagrangian function is given by
		\begin{subequations}\label{lagrang3}
		{\small	\begin{align}
			&p^*_k=\!\left[\frac{(\psi_k\!+\! \varsigma)\delta}{(\Omega\!+\!\nu_k)\ln2}\!-\!\frac{\sigma^2}{|h_k|^2}\right]^+\!\!\!,\; f^*_k=\left[\sqrt{\frac{\mu_kC_\text{cpu}-(\psi_k\!+\!\varsigma )T}{3\epsilon_kT(\nu_k\!+\!\Omega)} }\right]^+\!\!\!,\\
			&\rho^*_k=
			\begin{cases}
			\left[\frac{(\psi_k+ \varsigma)\delta}{(\Omega+\nu_k)\ln2}-\frac{\sigma^2}{|h_k|^2}\right]^+, & \omega_k=0\\
			0, & \omega_k>0,
			\end{cases}
			\end{align}}
		\end{subequations}	
		where $	X_k = \frac{\zeta (\psi_k+ \varsigma) P^3_{\max}|g_k|^6|h_k|^2}{\ln2} ,$
		$Y_k = 2X_k+\frac{2\zeta c_k(\psi_k+ \varsigma) P_{\max}|g_k|^4|h_k|^2}{\ln2}+\zeta (\nu_k+ \Omega)(a_kc_k-b_k)P_{\max}|g_k|^2|h_k|^2,$
		$Z_k = X_k+\frac{\zeta c^2_k(\psi_k+ \varsigma) P_{\max}|g_k|^2|h_k|^2}{\ln2}+\frac{2\zeta c_k(\psi_k+ \varsigma) P^2_{\max}|g_k|^4|h_k|^2}{\ln2}-(\nu_k+ \Omega)(a_kc_k-b_k)\sigma^2.$
	\end{theorem}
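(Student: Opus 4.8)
The plan is to mirror the Lagrangian-duality arguments of Theorems 1 and 2, since (P3) is again a convex problem (by the concavity argument invoked for constraint (\ref{p4-4})) that satisfies Slater's condition. First I would form the Lagrangian of (P3) by attaching the multipliers $\lambda,\mu_k,\omega_k$ to the time, frequency, and $s_k\le t^b_k$ constraints exactly as before, together with $\nu_k$ for the energy-causality constraint (\ref{p4-4}) and $\psi_k$ for the per-WD rate floor (\ref{p4-3}). The new ingredient relative to the two single-objective problems is that the Tchebycheff constraints (\ref{p4-1}) and (\ref{p4-2}) must also be dualized, introducing $\varsigma$ and $\Omega$, respectively. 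The key bookkeeping observation, which I expect to carry the proof, is that $\bar{R}_{\text{sum}}$ enters the Lagrangian through both the rate floor (coefficient $\psi_k$) and the utopia constraint (\ref{p4-1}) (coefficient $\varsigma$), while $\bar{E}_{\text{total}}$ enters through both the energy-causality constraint (coefficient $\nu_k$) and the utopia constraint (\ref{p4-2}) (coefficient $\Omega$). Hence the effective weight multiplying every throughput term is $(\psi_k+\varsigma)$ and the effective weight on every energy term is $(\nu_k+\Omega)$, which is precisely the pattern appearing in the claimed $p_k^*$ and $f_k^*$.

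With the Lagrangian in hand, I would take the stationarity conditions $\partial\mathcal{L}/\partial z_k=0$ and $\partial\mathcal{L}/\partial f_k=0$, which are structurally identical to (\ref{sss_k})--(\ref{sss_k2}) but with the substitutions $(1+\Omega_k)\mapsto(\psi_k+\varsigma)$ and $\nu_k\mapsto(\nu_k+\Omega)$. Solving the first for $z_k$ and substituting $z_k=p_kt^o_k$ yields the water-filling expression for $p_k^*$; solving the second yields $f_k^*$. Projecting onto the feasible region via $[\cdot]^+$ then enforces the nonnegativity constraints (\ref{p4-6}) and the bound (\ref{p1-4}).

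For the backscattering coefficient I would examine $\partial\mathcal{L}/\partial s_k$. As in Theorem 1, the complementary-slackness condition on $\omega_k$ splits into two cases: if $s_k=t^b_k$ the coefficient is pinned, and otherwise $\omega_k=0$, whereupon setting the derivative to zero produces the same quadratic $X_k(\rho_k^*)^2-Y_k\rho_k^*+Z_k=0$, now with $X_k,Y_k,Z_k$ formed from the combined weights $(\psi_k+\varsigma)$ and $(\nu_k+\Omega)$. Selecting the physically admissible root (the smaller one, since the larger exceeds unity) and clipping to $[0,1]$ gives the stated $\rho_k^*$. The two time allocations $t^b_k,t^o_k$ remain linear in the Lagrangian, so I would again defer them to a simplex step rather than a closed form, and close the argument by writing the gradient-ascent dual updates for $\{\lambda,\mu_k,\omega_k,\psi_k,\nu_k,\varsigma,\Omega\}$, iterating the primal and dual variables to convergence.

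The only genuine subtlety is the coefficient-collection step described above; everything else is a direct transcription of the earlier derivations, so no new analytic obstacle arises. I would, however, double-check the signs with which (\ref{p4-1}) and (\ref{p4-2}) contribute, since (\ref{p4-1}) lower-bounds $\bar{R}_{\text{sum}}$ while (\ref{p4-2}) upper-bounds $\bar{E}_{\text{total}}$, to confirm that both $\varsigma$ and $\Omega$ enter with the correct sign so that the combined weights are genuine sums rather than differences.
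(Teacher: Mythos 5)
Your proposal follows the paper's proof essentially step for step: the same Lagrangian with $\varsigma$ and $\Omega$ dualizing the two Tchebycheff constraints (\ref{p4-1})--(\ref{p4-2}), the same stationarity conditions inherited from Theorem 1 under the substitutions $(1+\Omega_k)\mapsto(\psi_k+\varsigma)$ and $\nu_k\mapsto(\nu_k+\Omega)$, the same recovery of $p_k^*$ via $z_k=p_kt_k^o$, the same quadratic $X_k(\rho_k^*)^2-Y_k\rho_k^*+Z_k=0$ with the smaller root selected, and the same deferral of $t_k^b,t_k^o$ to a simplex step with gradient-based dual updates. One remark: your case split on $\omega_k$ yields $\rho_k^*=\bigl[\bigl(Y_k-\sqrt{Y_k^2-4X_kZ_k}\bigr)/(2X_k)\bigr]^+$ when $\omega_k=0$, which is what the defined $X_k,Y_k,Z_k$ and the pattern of Theorems 1 and 2 support; the expression printed for $\rho_k^*$ in the theorem statement (which merely duplicates the water-filling form of $p_k^*$ and never uses $X_k,Y_k,Z_k$) is evidently a transcription slip in the paper, so your derivation recovers the intended result.
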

\begin{proof}
The Lagrangian function of problem (P3) is given by (\ref{lagrang33}),
\begin{figure*}[h!] 
	\centering
	\begin{align}
	&\mathcal{L}(\lambda,\mu_k,\omega_k,\psi_k,\nu_k,\varsigma,\Omega)=\chi+\lambda\bigg[ T-\sum_{k\in\mathcal{K}}({t^b_k+t^o_k})\bigg] +\sum_{k\in\mathcal{K}}\mu_k[f_\text{max}-f_k]+\sum_{k\in\mathcal{K}}\omega_k[t^b_k-s_k]\nonumber\\&+\sum_{k\in\mathcal{K}}\psi_k\bigg[ t^b_k\log_2(1+\frac{\zeta s_kP_\text{max}|h_k|^2|g_k|^2}{t^b_k\sigma^2})+t^o_k\log_2(1+\frac{z_k|h_k|^2}{t^o_k\sigma^2})+\frac{Tf_k}{C_\text{cpu}}-\gamma_{\min,k}\bigg] \nonumber\\& -\sum_{k\in\mathcal{K}}\nu_k\bigg[ \frac{z_k}{\delta}+ p_{c,k}t^o_k+\epsilon_kf_k^3T+P_{c,k}t^b_k-\left(\frac{a_k(1-\frac{s_k}{t^b_k})P_\text{max}g_k+b_k}{(1-\frac{s_k}{t^b_k})P_\text{max}g_k+c_k}-\frac{b_k}{c_k}\right)t^b_k\nonumber\\&-\sum_{i\in\mathcal{K},i\neq k}\bar{P}^b_k t^b_i-Q_k\bigg]+\varsigma\bigg[\sum_{k} \bigg\{t^b_k\log_2(1+\frac{\zeta s_kP_\text{max}|h_k|^2|g_k|^2}{t^b_k\sigma^2})+t^o_k\log_2(1+\frac{z_k|h_k|^2}{t^o_k\sigma^2})+\frac{Tf_k}{C_\text{cpu}}\bigg\}\nonumber\\&-R^m_{\text{sum,max}} +\chi\frac{|R^m_{\text{sum,max}}|}{\alpha_m}\bigg]-\Omega\bigg[\sum_{k\in\mathcal{K}} \bigg\{P_{c,k}t^b_k+\frac{z_k}{\delta}+ p_{c,k}t^o_k+\epsilon_kf_k^3T\bigg\}-E^m_{\text{total,max}} -\chi\frac{|E^m_{\text{total,max}}|}{\beta_m}\bigg].\label{lagrang33}
	\end{align}
	\medskip
	\hrule
\end{figure*}
where $\lambda$, $\mu_k$, $\omega_k$, $\psi_k$, $\nu_k$, $\varsigma$, and $\Omega$ are the non-negative Lagrange multipliers associated with the constraints of problem (P3). The optimal solutions can be obtained similar to Theorem 1 and 2 by taking the partial derivative of $\mathcal{L}(\lambda,\mu_k,\omega_k,\psi_k,\nu_k,\varsigma,\Omega)$ with respect to the optimization variables in (P4). By setting $\frac{\partial\mathcal{L}}{\partial{f_k}}=0$, $\frac{\partial\mathcal{L}}{{\partial{z_k}}}=0$, and using $z_k=p_kt^o_k$, the optimal value of CPU frequency and transmit power of WD $k$ can be expressed as
		\begin{align}
		&\frac{\partial\mathcal{L}}{\partial{f_k}}=0\Rightarrow f^*_k=\left[\sqrt{\frac{\mu_kC_\text{cpu}-(\psi_k+\varsigma )T}{3\epsilon_kT(\nu_k+\Omega)} }\right]^+,\\
		&\frac{\partial\mathcal{L}}{{\partial{z_k}}}=0\Rightarrow z^*_k=\left[\frac{(\psi_k+ \varsigma)t^o_k\delta}{(\Omega+\nu_k)\ln2}-\frac{t^o_k\sigma^2}{|h_k|^2}\right]^+,\\
		&p^*_k=\left[\frac{(\psi_k+ \varsigma)\delta}{(\Omega+\nu_k)\ln2}-\frac{\sigma^2}{|h_k|^2}\right]^+.
		\end{align} 
		Base on Theorem 1, by utilizing $X_k(\rho^*_k)^2- Y_k\rho^*_k+ Z_k=0$, we have $X_k  = \frac{\zeta (\psi_k+ \varsigma) P^3_{\max}|g_k|^6|h_k|^2}{\ln2} $,  $Y_k  = 2X_k+\frac{2\zeta c_k(\psi_k+ \varsigma) P_{\max}|g_k|^4|h_k|^2}{\ln2}+\zeta (\nu_k+ \Omega)(a_kc_k-b_k)P_{\max}|g_k|^2|h_k|^2$, and $Z_k = X_k+\frac{\zeta c^2_k(\psi_k+ \varsigma) P_{\max}|g_k|^2|h_k|^2}{\ln2}+\frac{2\zeta c_k(\psi_k+ \varsigma) P^2_{\max}|g_k|^4|h_k|^2}{\ln2}-(\nu_k+ \Omega)(a_kc_k-b_k)\sigma^2$,
		which yields the optimal value of $ \rho^*_k $ as follows
		\begin{align}
		\rho^*_k=\left[\frac{(\psi_k+ \varsigma)\delta}{(\Omega+\nu_k)\ln2}-\frac{\sigma^2}{|h_k|^2}\right]^+.
		\end{align} 
		Then,  the optimal values of $t^b_k$ and $t^o_k$ are obtained based on the simplex method. {By applying the gradient method, the updated dual variables are given by}
\begin{align}
       \lambda(t+1)&=\left[ \lambda(t)- \tilde{\varkappa}(t) \left(  \sum_{k\in\mathcal{K}}(T-{t^b_k+t^o_k}) \right)
     \right]^+,~\forall t,\\
       \mu_k(t+1)&=\left[  \mu_k(t)- \tilde{\varkappa}(t) \left(  f_\text{max}-f_k\right)
     \right]^+,~\forall t, k,\\
        \omega_k(t+1)&=\left[  \omega_k(t)- \tilde{\varkappa}(t) \left(t^b_k-s_k\right)
     \right]^+,~\forall t, k,\\
    \nu_k (t+1) &=\bigg[   \nu_k(t)- \tilde{\varkappa}(t)  \bigg(Q_k -\frac{z_k}{\delta}- p_{c,k}t^o_k-\epsilon_kf_k^3T\nonumber\\
     &+ \bigg(\frac{a_k(1-\frac{s_k}{t^b_k})P_\text{max}|g_k|^2+b_k}{(1-\frac{s_k}{t^b_k})P_\text{max}|g_k|^2+c_k}-\frac{b_k}{c_k} \bigg)t^b_k-P_{c,k}t^b_k \nonumber\\
     &+\sum_{i\in\mathcal{K},i\neq k}\bar{P}^b_k t^b_i \bigg)  \bigg]^+,~\forall t, k,\\
      \psi_k(t+1)& =\bigg[  \psi_k(t)- \tilde{\varkappa}(t) \bigg( t^b_k\log_2(1+\frac{\zeta s_kP_\text{max}|h_k|^2|g_k|^2}{t^b_k\sigma^2})\nonumber\\
     &+t^o_k\log_2(1+\frac{z_k|h_k|^2}{t^o_k\sigma^2})+\frac{Tf_k}{C_\text{cpu}}-\gamma_{\min,k}\bigg)
    \bigg]^+\!,~\forall t, k,\\
       \varsigma(t+1)& \!=\!\bigg[ \! \varsigma(t)\!-\! \tilde{\varkappa}(t) \bigg(\! \sum_{k} \bigg\{t^b_k\log_2(\!1\!+\!\frac{\zeta s_kP_\text{max}|h_k|^2|g_k|^2}{t^b_k\sigma^2}\!)\nonumber\\
     &+t^o_k\log_2(1+\frac{z_k|h_k|^2}{t^o_k\sigma^2})+\frac{Tf_k}{C_\text{cpu}}\bigg\} -R^m_{\text{sum,max}} \nonumber\\
     &+\chi\frac{|R^m_{\text{sum,max}}|}{\alpha_m}\!\bigg)
    \bigg]^+,~\forall t,  \\
      \Omega(t+1)& =\bigg[  \Omega(t)- \tilde{\varkappa}(t) \bigg( E^m_{\text{total,max}} +\chi\frac{|E^m_{\text{total,max}}|}{\beta_m}\nonumber\\&-\sum_{k\in\mathcal{K}} \bigg\{P_{c,k}t^b_k+\frac{z_k}{\delta}+ p_{c,k}t^o_k+\epsilon_kf_k^3T\bigg\}\bigg)
    \bigg]^+,~\forall t,  
\end{align} where $ \tilde{\varkappa}(t) \geq 0$ is a positive step size at iteration $t$. Then, the dual variables  $\{\lambda,\mu_k,\omega_k,\psi_k,\nu_k,\varsigma,\Omega\}$  and the primal variables $\{\mathbf{t},\boldsymbol{\rho},\mathbf{p},\boldsymbol{\tau},\mathbf{f}\}$ are optimized iteratively. 
	\end{proof}
	
	 Algorithm  2 gives the overall steps for solving (P1). {The first step maximizes the system throughput by jointly optimizing resource allocations and  RIS phase shifts. However, since the optimization problem is non-convex, Algorithm 1 is used to split it into two subproblems. The first subproblem gets the resource allocations, including the time/power allocations, backscattering coefficients, local computing frequencies, and execution times in closed-form solutions. Then, a suboptimal solution for the RIS phase shifts is achieved in the second subproblem.}

    {In the second step, since the energy consumption minimization problem is independent of RIS phase shifts, it only obtains resource allocations in closed-form solutions.}

   Finally, in the third step, Pareto optimal system EE is obtained via solving problem (P3) for different values of $\alpha$ and $\beta$  with a step size of $0.1$ such that $\alpha+\beta=1$, which is also a convex problem with closed-form solutions.

\begin{theorem}
    The main problem (P4) is non-increasing as the objective function value improves over each iteration in step 1 of Algorithm 2. Thus, the proposed AO algorithm is guaranteed to converge.
\end{theorem}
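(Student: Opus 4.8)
The plan is to prove convergence by establishing that the throughput objective generates a sequence that is both monotonically non-decreasing and bounded above across the AO iterations of Step~1, and then invoke the monotone convergence theorem. Index the AO rounds by $t$, and let $\mathcal{X}^{(t)}=\{\mathbf{t}^{(t)},\boldsymbol{\rho}^{(t)},\mathbf{p}^{(t)},\boldsymbol{\tau}^{(t)},\mathbf{f}^{(t)}\}$ denote the resource block and $\boldsymbol{\Theta}^{(t)}$ the RIS block, so that each round performs a resource update (Sub-problem~1) followed by a phase-shift update (Sub-problem~2). I would write $\bar{R}_{\text{sum}}^{(t)}=\bar{R}_{\text{sum}}(\mathcal{X}^{(t)},\boldsymbol{\Theta}^{(t)})$ and show $\bar{R}_{\text{sum}}^{(t+1)}\geq\bar{R}_{\text{sum}}^{(t)}$ by bounding each half-update separately.

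First, for the resource update with $\boldsymbol{\Theta}^{(t)}$ held fixed, problem (P4) is convex and satisfies Slater's condition, so the closed-form KKT solution of Theorem~1 is its global maximizer; hence
\[
\bar{R}_{\text{sum}}(\mathcal{X}^{(t+1)},\boldsymbol{\Theta}^{(t)})\geq \bar{R}_{\text{sum}}(\mathcal{X}^{(t)},\boldsymbol{\Theta}^{(t)}).
\]
Second, for the phase-shift update with $\mathcal{X}^{(t+1)}$ held fixed, I would invoke the defining property of the MM surrogate. Evaluating the second-order minorizers \eqref{49} and \eqref{355} at the expansion point $\hat{\boldsymbol{\theta}}_0=\hat{\boldsymbol{\theta}}^{(t)}$ reproduces the exact products $|h_k|^2|g_k|^2$ and $|g_k|^4$, so the surrogate objective $\tilde{R}(\boldsymbol{\Theta})$ assembled in \eqref{wer} is a global lower bound on $\bar{R}_{\text{sum}}$ that is tight at $\boldsymbol{\Theta}^{(t)}$. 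Since (P7) maximizes $\tilde{R}$, the standard MM sandwich gives
\[
\bar{R}_{\text{sum}}(\boldsymbol{\Theta}^{(t+1)})\geq \tilde{R}(\boldsymbol{\Theta}^{(t+1)})\geq \tilde{R}(\boldsymbol{\Theta}^{(t)})=\bar{R}_{\text{sum}}(\boldsymbol{\Theta}^{(t)}),
\]
where the resource arguments are suppressed. Chaining the two inequalities yields the claimed monotonicity $\bar{R}_{\text{sum}}^{(t+1)}\geq\bar{R}_{\text{sum}}^{(t)}$.

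It then remains to bound the objective. Constraints \eqref{p1-3}, \eqref{p1-4}, and \eqref{p1-6} cap the time budget $T$, the local frequencies by $f_\text{max}$, and the transmit power by $P_\text{max}$, while the channel gains are finite; consequently every offloading and computation term in $\bar{R}_{\text{sum}}$ is finite and $\bar{R}_{\text{sum}}\leq R_{\text{sum,max}}<\infty$. A monotone non-decreasing sequence bounded above converges, which establishes convergence of the AO loop.

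The step I expect to be the main obstacle is the phase-shift update, specifically ensuring that the iterate genuinely raises the \emph{true} throughput and not merely the penalized surrogate $\bar{R}_{\text{sum}}-\Delta\tilde{f}$ of (P7). This requires two checks: confirming the minorizer tangency at $\hat{\boldsymbol{\theta}}_0$ for both quadratic products, and controlling the rank penalty. Since $\tilde{f}\geq 0$ by Lemma~1 with equality exactly at rank one, I would argue that the penalty drives $\tilde{f}\to 0$ so that, upon rank-one recovery, the surrogate inequality above applies directly to $\bar{R}_{\text{sum}}$ and the penalty does not corrupt the monotone chain.
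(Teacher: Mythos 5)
Your proof is correct and follows the same two-block skeleton as the paper's own argument: monotonicity of the resource half-update from convexity of (P4), monotonicity of the phase-shift half-update, and convergence of a monotone bounded sequence. The difference is in how much of this each of you actually justifies, and on two counts your version is tighter than the paper's. First, the paper's proof simply asserts that solving (P7) ``guarantees'' $f(\cdot,\boldsymbol{\theta}^{(i+1)})\geq f(\cdot,\boldsymbol{\theta}^{(i)})$, with no explanation of why maximizing a penalized SDR surrogate should raise the \emph{true} throughput; your MM sandwich --- tangency of the minorizers (\ref{49}) and (\ref{355}) at the expansion point $\hat{\boldsymbol{\theta}}^{(t)}$, hence $\bar{R}_{\text{sum}}(\boldsymbol{\theta}^{(t+1)})\geq\tilde{R}(\boldsymbol{\theta}^{(t+1)})\geq\tilde{R}(\boldsymbol{\theta}^{(t)})=\bar{R}_{\text{sum}}(\boldsymbol{\theta}^{(t)})$ --- is exactly the missing justification for that assertion. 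Second, the paper closes by noting that the objective values of (Q1) are ``non-negative,'' which bounds a monotonically \emph{increasing} sequence on the wrong side; your bound $\bar{R}_{\text{sum}}\leq R_{\text{sum,max}}<\infty$, obtained from the finite time, power, and CPU-frequency budgets in (\ref{p1-3}), (\ref{p1-4}), (\ref{p1-6}), is what the monotone convergence theorem actually requires. (Incidentally, the paper's final display (\ref{condition3}) has identical left- and right-hand sides, evidently a typo for the chained inequality you wrote out.) The one caveat you flag is real and is left open by both arguments: (P7) maximizes $\bar{R}_{\text{sum}}-\Delta\tilde{f}$ over the lifted matrix $\tilde{\boldsymbol{\Phi}}$ with the rank constraint dropped, so the sandwich applies to the true throughput only once a rank-one $\tilde{\boldsymbol{\Phi}}$ is recovered; for finite $\Delta$ the penalized iterate need not be exactly rank one, and the SVD extraction in Algorithm 1 can in principle break the monotone chain. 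Your explicit acknowledgment of this matches the actual state of the argument, whereas the paper assumes it away silently; neither proof closes it rigorously, so your proposal is at least as strong as the published one.
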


{	\begin{proof}	
The main problem (P4) is divided into two subproblems which optimize the resource allocations $\{\mathbf{t},\boldsymbol{\rho},\mathbf{p},\boldsymbol{\tau},\mathbf{f}\}$ and RIS phase shifts ($\boldsymbol{\theta}$), via solving problems (P4) and (P7), respectively. Let us define the objective value of the following problem: 
	\begin{subequations}
		\begin{align}
		\text{(Q1)}: ~~& \underset{\mathbf{t},\mathbf{s},\mathbf{z},\mathbf{f}, \boldsymbol{\theta}}{\text{maximize}} \: \:\bar{R}_{\text{sum}}\\
		&\text{s.t.}\quad \text{\text{(\ref{p1-66})}}, \text{(\ref{p3}), (\ref{p4-3})--(\ref{p4-6})}.
		\end{align}
	\end{subequations}
as
$f(\mathbf{t},\mathbf{s},\mathbf{z},\mathbf{f}, \boldsymbol{\theta})$. First, with fixed variable $\boldsymbol{\theta}$, problem (P4) is a convex problem and $(\mathbf{t}^{(i+1)},\mathbf{s}^{(i+1)},\mathbf{z}^{(i+1)},\mathbf{f}^{(i+1)})$ are the optimal solutions that maximize the value of the objective function. Accordingly, we have
\begin{equation}\label{condition1}
   f(\mathbf{t}^{(i\!+\!1)},\mathbf{s}^{(i\!+\!1)},\mathbf{z}^{(i\!+\!1)},\mathbf{f}^{(i\!+\!1)},\boldsymbol{\theta}^{(i)})\!\geq\! f(\mathbf{t}^{(i)},\mathbf{s}^{(i)},\mathbf{z}^{(i)},\mathbf{f}^{(i)},\boldsymbol{\theta}^{(i)}).
\end{equation}
Next, by maximizing $f$ via solving (P7), we obtain a sub-optimal solution
for the  RIS phase shifts as $\boldsymbol{\theta}^{(i)}$ with given optimization variables $(\mathbf{t}^{(i+1)},\mathbf{s}^{(i+1)},\mathbf{z}^{(i+1)},\mathbf{f}^{(i+1)})$. Thus, it guarantees that
\begin{align}\label{condition2}
   &f(\mathbf{t}^{(i+1)},\mathbf{s}^{(i+1)},\mathbf{z}^{(i+1)},\mathbf{f}^{(i+1)},\boldsymbol{\theta}^{(i+1)})\ge\nonumber\\& \qquad\qquad\qquad\qquad f(\mathbf{t}^{(i+1)},\mathbf{s}^{(i+1)},\mathbf{z}^{(i+1)},\mathbf{f}^{(i+1)},\boldsymbol{\theta}^{(i)}).
\end{align}
According to (\ref{condition1}) and (\ref{condition2}), we can conclude that
\begin{align}\label{condition3}
   &f(\mathbf{t}^{(i+1)},\mathbf{s}^{(i+1)},\mathbf{z}^{(i+1)},\mathbf{f}^{(i+1)},\boldsymbol{\theta}^{(i+1)})\ge\nonumber\\& \qquad\qquad\qquad\qquad f(\mathbf{t}^{(i+1)},\mathbf{s}^{(i+1)},\mathbf{z}^{(i+1)},\mathbf{f}^{(i+1)},\boldsymbol{\theta}^{(i+1)}), 
\end{align}
which indicates that the objective values of (Q1) are monotonically increasing after each iteration in step 1 of Algorithm 1. Meanwhile,  the objective values of (Q1) are non-negative. As a result, the proposed AO algorithm is guaranteed to converge. On the other hand, based on the fact that the initial point of each iteration is the solution of the previous one, the algorithm continues running to achieve a better solution in each iteration. Indeed, the objective function increases in each iteration or remains unchanged until the convergence is satisfied. Thus, the proof is completed.
	\end{proof}	}
	
	\begin{algorithm}[t]
		\algsetup{linenosize=\scriptsize }
		\small 
		\caption{Iterative Resource Allocation Algorithm}
		\begin{algorithmic}[1]
			\renewcommand{\algorithmicrequire}{\textbf{Input:}}
			\renewcommand{\algorithmicensure}{\textbf{Output:}}
			\REQUIRE Set initial iterations $i=0$ and $j=0$, the maximum number of iterations $I_\text{max}$ and $J_\text{max}$, and Lagrangian variables vectors $\{\lambda,\mu_k,\omega_k,\Omega_k,\nu_k,\psi_k,\varsigma,\Omega\}$.\\ 
			\STATE \textbf{Step 1: Throughput and phase shifts optimization based on the AO algorithm}\\ 
			\STATE \textbf{Repeat}\\ 
			\STATE \quad\textbf{Repeat}\\
			\STATE \quad\quad Find optimal resource allocations using (\ref{lagrang1}).
			\STATE \quad\quad Update Lagrangian variables vectors,\\ \quad \quad $\{\lambda,\mu_k,\omega_k,\Omega_k,\nu_k\}$.
			\STATE \quad\quad Set $i:=i+1$;
			\STATE \quad \textbf{Until} $i=I_\text{max}$
			\STATE \quad For the given optimal resource allocations, solve problem\\ \quad  (P7) according to \textbf{Algorithm 1}.\\
			\STATE \quad Set $j:=j+1$;
			\STATE \textbf{Until} $j=J_\text{max}$; Obtain $\{\mathbf{t}^*,\boldsymbol{\rho}^*,\mathbf{p}^*,\boldsymbol{\tau}^*,\mathbf{f}^*,\boldsymbol{\theta}^*,R^*_{\text{sum,max}}\}$.
			\STATE \textbf{Step 2: Energy optimization}\\ 
			\STATE \textbf{Repeat}\\
			\STATE \quad Find optimal resource allocations using (\ref{lagrang2}).
			\STATE \quad Update Lagrangian variables vectors, $\{\lambda,\mu_k,\omega_k,\psi_k,\nu_k\}$.
			\STATE \quad Set $i:=i+1$;
			\STATE \textbf{Until} $i=I_\text{max}$; Obtain $E^*_{\text{total,max}}$.
			\STATE \textbf{Step 3: Pareto optimal EE}\\ 
			\STATE For given $\{R^*_{\text{sum,max}}, E^*_{\text{total,max}}, \boldsymbol{\theta}^*\}$, solve problem (P3) for different values of $\alpha$ and $\beta$  with a step size of $0.1$ such that $\alpha+\beta=1$.
		\end{algorithmic}
	\end{algorithm}
	
	\section{Computational Complexity Analysis}

	In this section, we investigate the computational complexity of our proposed algorithm. Firstly, the order of complexity to obtain the phase shifts via the MM approach is $\mathcal{O}(T_{\max}(N+2)^6)$, where $T_{\max}$ is the number of iterations. Secondly, for the MOOP, the order of complexity is $\mathcal{O}(K^5)$. Let the subgradient algorithm take $\Lambda$ iterations to converge. Therefore, the order of complexity of MOOP is $\mathcal{O}(K^5 \Lambda)$ asymptotically. However, if CVX is used to solve the SOOPs, i.e., (P4) and (P8) and the MOOP, i.e., (P3), it applies the interior point method. Therefore, the computational complexity of the interior point method to obtain the optimal solutions for each problem in (P3),  (P4), and (P8) are given by $\mathcal{O}(\sqrt{m_1}\log(m_1))$, $\mathcal{O}(\sqrt{m_2}\log(m_2))$, and $\mathcal{O}(\sqrt{m_3}\log(m_3))$ \cite{Lu,Boyd}, respectively, where $m_1$, $m_2$, $m_3$ denote the number of inequality constraints of (P3), (P4), and (P8), respectively. On the other hand, the computation complexity of throughput and phase shifts optimization based the AO algorithm is given by $\mathcal{O}\left(J_{\max}\left(T_{\max}(N+2)^6+\sqrt{m_2}\log(m_2)\right)\right)$, where $J_{\max}$ is the number of iterations. Consequently, the total computational complexity for each step size can be expressed as  
	$\mathcal{O}(J_{\max}(T_{\max}(N+2)^6+\sqrt{m_2}\log(m_2)) + \sqrt{m_3}\log(m_3)+ \sqrt{m_1}\log(m_1)  )$.


	\section{Simulation Results}
    	This section presents numerical results to assess the performance of the proposed scheme. The simulation parameters are presented in Table I unless specified otherwise. According to the standard  path loss model, the channel gain is given by  $h=\bar{h}d^{-\upsilon},$ where $\bar{h}$ and $d$ are corresponding small-scale fading coefficients  and distance, respectively, and  $\upsilon$ denotes the path loss exponent. In particular, the path loss exponents of the PB-to-RIS, RIS-to-WD, and RIS-to-MEC channels are assumed to be dominated by the line of sight (LoS) link with the path loss exponent of $2.2$ \cite{Qingqing,Wu}, which are lower than that of the PB-to-WD and WD-to-MEC channels assumed equal to $3$. For the small-scale fading, we assume the  Rayleigh fading model.  However, the Ricain fading model is investigated later in Fig. \ref{N}, and the PFs of system throughput and energy consumption with random phase shift at the RIS are also studied later in Fig. \ref{fig3}. The distances between nodes are given in Fig. \ref{sim}. {To demonstrate the advantage of the proposed scheme, by considering baseline 1 as only BC mode (each WD fully offloads task bits only using BC) and baseline 2 as BC and local computing mode (each WD offloads task bits by only utilizing BC and also computing them locally), we compare Algorithm 2 against the following benchmarks: 
        1) Algorithm 2 with baseline 1;
        2) Algorithm 2 with baseline 2;
        3) No-RIS;
        4) No-RIS with baseline 1;
        5) No-RIS with baseline 2 .}

		\begin{table}
		\renewcommand{\arraystretch}{1.05}
		\centering
		\caption{\small Simulation Parameters}
		\label{table-notations}\scalebox{0.85}{
			\begin{tabular}{| l| l|  }
				\hline
				\textbf{Parameters}& \textbf{Values} \\\hline
				Number of reflecting elements, $N$ & $20$ \\ \hline  Minimum required computation bits,~$L_\text{min}$ &$20$ kbits\\ \hline
				Maximum CPU frequency, $f_{\text{max}}$ & $5 \times 10^5$ kHz \\ \hline   Maximum transmit power at the PB, $P_\text{max}$ &$1$ W\\ \hline  
				Effective capacitance coefficient, $\epsilon_k$ & $10^{-26}$\\ \hline   Number of CPU cycles & $1000$ \\ \hline
				Direct link path loss exponent& $3$ \\ \hline  Reflected link path loss exponent & $2.2$ \\ \hline	Offloading circuit power consumption,
				& $5$ mW \\ \hline  BC circuit power consumption, $P_{c,k}$& $0.1$ mW	\\ \hline Power amplifier efficiency, $\delta$ & $1$ \\ \hline  Primary energy for all WDs,~$Q$& $[1,~1,~0,~0]$ \\ \hline
				Maximum curvature, $l$& $2.5\times 10^{-16}$\\ \hline  Non-linear EH parameters&		\begin{tabular}[c]{@{}l@{}}$a_k=2.463,~b_k= 1.635,$ \\ \qquad\quad	$c_k=0.826$\end{tabular} \\ \hline
				Number of WDs& $4$ \\ \hline  Penalty factor,~$\Delta$ & $5\times 10^{5}$ \\ \hline
				Entire time block, $T$ & $1$ s \\ \hline  	Noise power,~$\sigma$& $-120$ dBm \\ \hline
				System bandwidth & $100$ kHz \\ \hline   Performance gap,~$\zeta$& $0.0316$\\ \hline
		\end{tabular}}
	\end{table}
	\begin{figure}[t]
		\centering
		\includegraphics[width=3.5in]{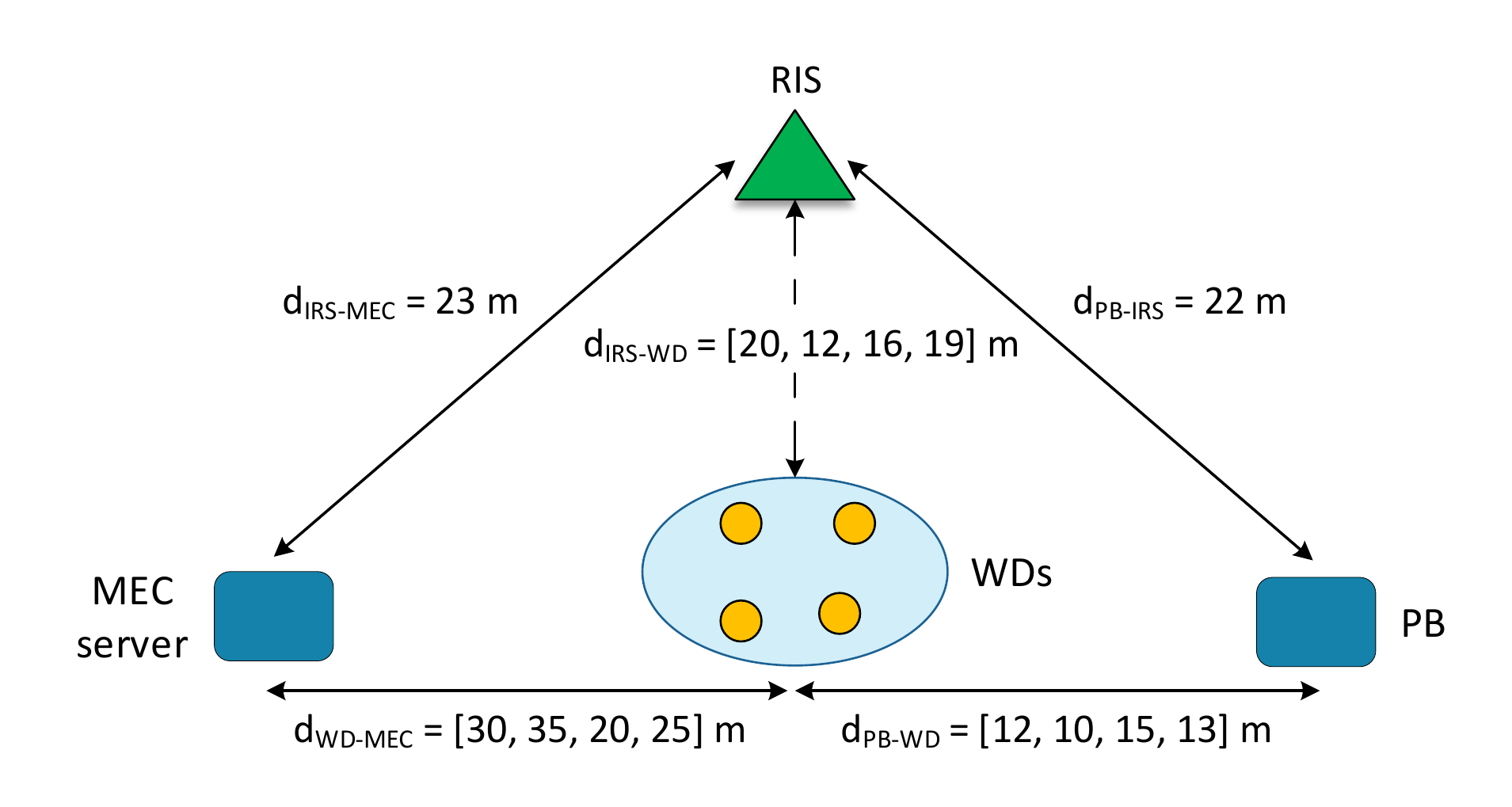}
		\caption{ {Simulation setup for the RIS-BC-MEC.}}\label{sim}	\vspace{-5mm}
	\end{figure} 
	
		\begin{figure}
		
		\begin{minipage}{.45\textwidth}
			\centering
			\includegraphics[width=3.5in]{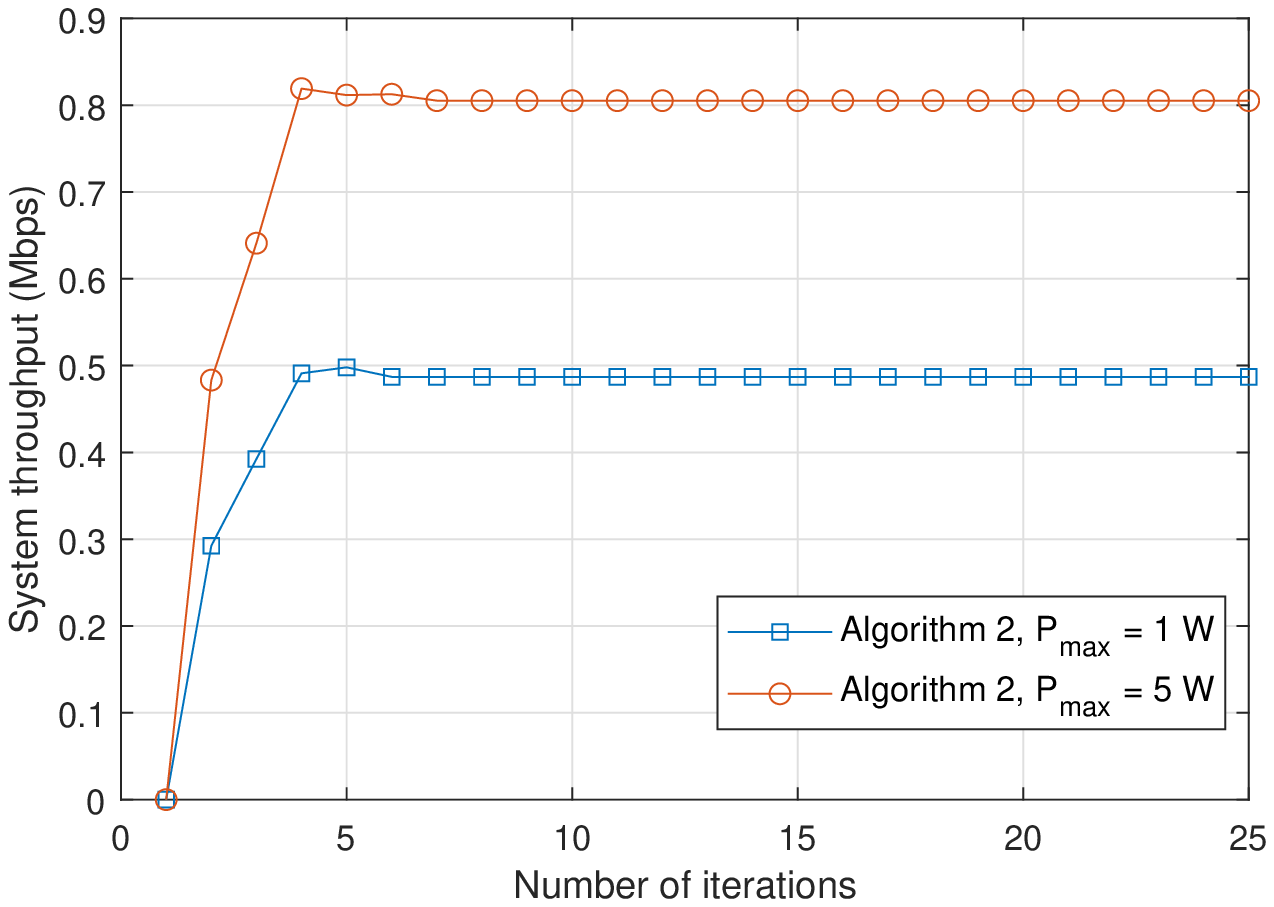}
			\caption{ Convergence behavior of Algorithm 2.}\label{fig66}
		\end{minipage}\hfill
		\begin{minipage}{.45\textwidth}
			\centering
			\includegraphics[width=3.5in]{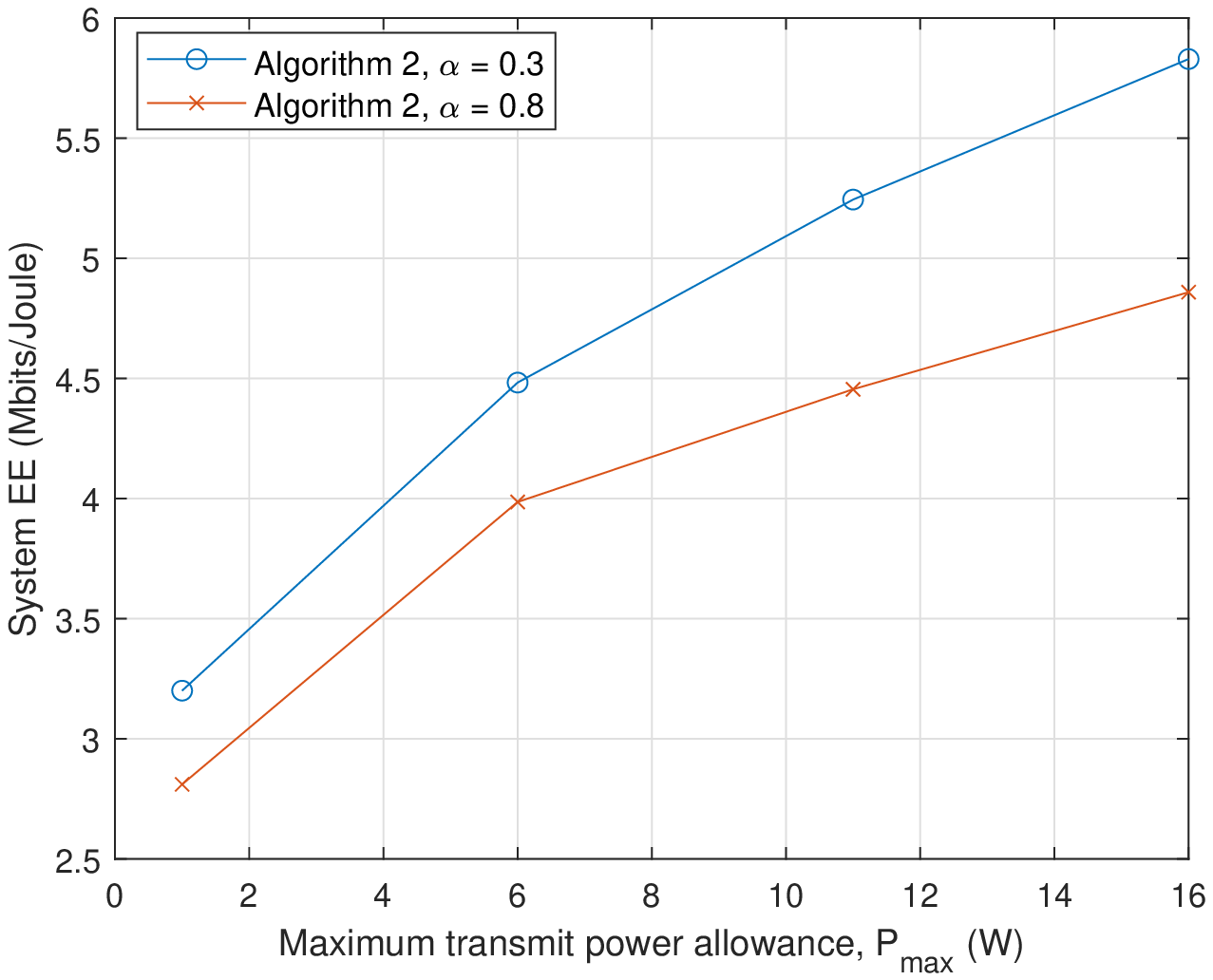}
			\caption{ System EE versus maximum transmit power.}\label{fig55}
		\end{minipage}
		\vspace{-6mm}
	\end{figure}
		\begin{figure}[!tbp]
		\centering
		\subfloat[ System EE versus weighting coefficient $\alpha$.]{\includegraphics[width=3.5in]{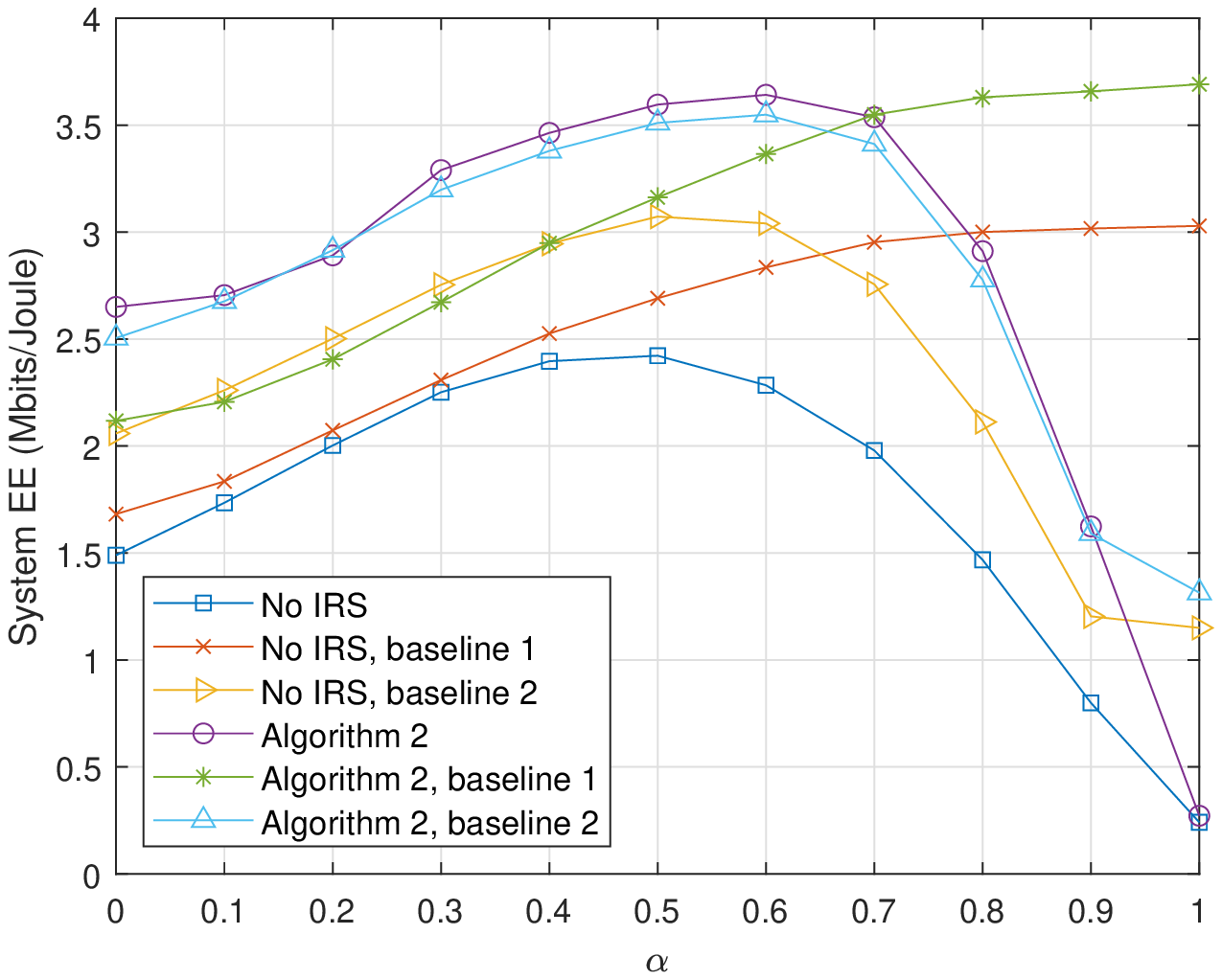}\label{fig1}}
		\hfill
		\subfloat[Throughput and energy consumption versus $\alpha$.]{\includegraphics[width=3.5in]{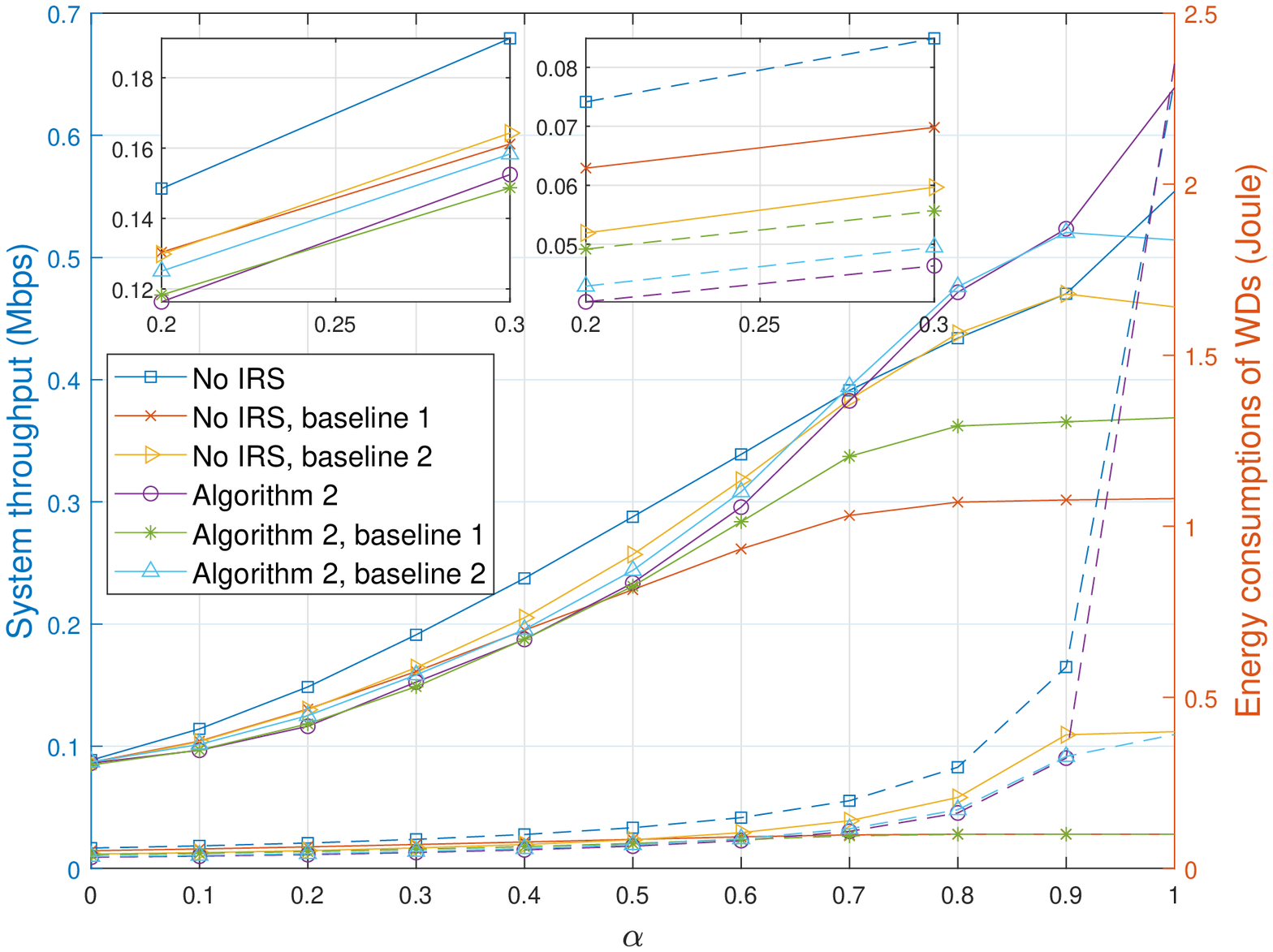}\label{fig5}}
		\caption{\small {The effect of weighting coefficient, $\alpha$, on system performance where the dash lines represent the energy consumption of WDs corresponding to the system throughput.}}\label{fig6}
		\vspace{-5mm}
	\end{figure}
	
	\subsection{Convergence Behavior}
	Fig. \ref{fig66} illustrates the convergence behavior of Algorithm 2 for different PB  transmit power values. All the curves converge to a stationary point within less than eight iterations on average. More than  $90\%$ of the system EE is reached within five iterations. This figure also reveals that the sum data rate is monotonically increasing at each iteration.
				\begin{figure}[tbp]
		\centering
		\subfloat[ System EE versus system throughput.]{\includegraphics[width=3.5in]{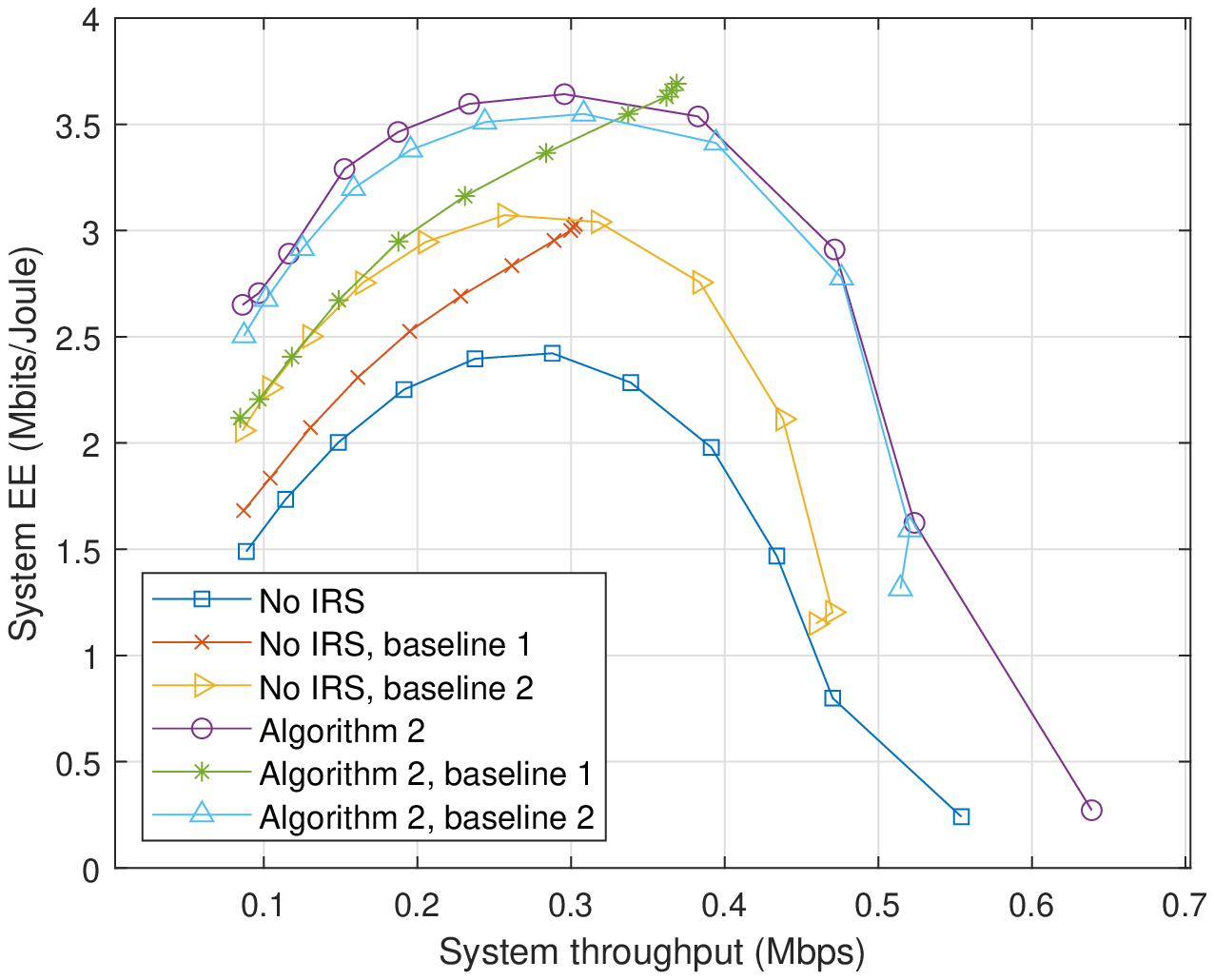}\label{fig2}}
		\hfill
		\subfloat[PFs of System throughput  and  energy consumption.]{\includegraphics[width=3.5in]{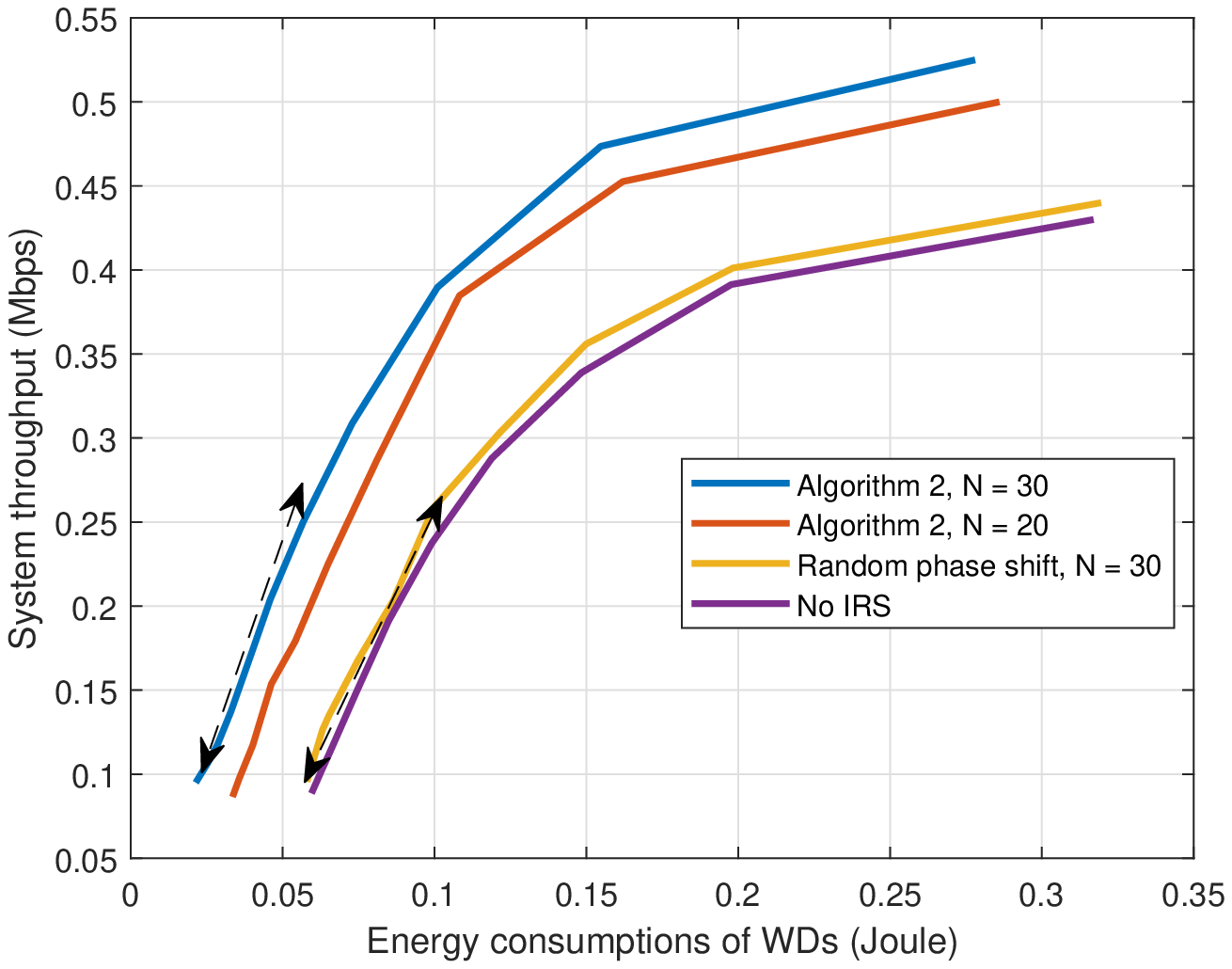}\label{fig3}}
		\caption{\small Trade-off regions.}\label{fig4}
			\vspace{-5mm}
	\end{figure}
	\subsection{System EE vs. Maximum Transmit Power}
	Figure \ref{fig55} shows the achieved system EE of Algorithm 2. As seen, increasing maximum transmit power at the PB improves the system EE. As revealed in Proposition 2, the system achieves the maximum EE when the PB transmits with maximum power. Thus, in the first phase, the achievable offloading throughput improves by increasing PB's transmit power as the throughput  is an increasing function of $P_{\max}$, equation (\ref{1}). 
	Additionally, each WD can harvest more energy since it enjoys more received power, which allows it to transmit information to the MEC in the second phase with increased transmission power.

In conclusion, the system throughput increases significantly at the high transmit power region than the consumed energy, leading to a higher system EE. To better illustrate the uptrend of the system EE, we have also considered two cases, i.e., $\alpha = 0.3$ and $\alpha = 0.8$. When the value of $\alpha$ is large, a stringent data rate is imposed on the system, leading to a higher aggregate energy consumption and, accordingly, lower system EE than the smaller $\alpha$ case.

	\subsection{System EE vs. Weighting Coefficient $\alpha$}
	Fig. \ref{fig1} shows system EE versus the weighting coefficient of the system throughput, $\alpha$, for different baseline schemes.  This figure shows that with increasing $\alpha$, the system EE first increases and then decreases for all the schemes. This anomaly arises because the increase of  $\alpha $ imposes a stringent data rate, increasing aggregate energy consumption, and thus reducing system EE. In contrast, for a smaller  $\alpha$, a substantial increase in system throughput requires slight energy consumption. Therefore,  the choice of $ \alpha $ has a critical effect on the system EE.   The maximum value of system EE changes from one case to another. More specifically, for the proposed scheme, the system EE is higher than the baseline schemes up to $0.7$, primarily when no-RIS is employed. This case arises because the RIS  improves WPT and data transfers in the first and second phases. In the first phase, the RIS assists the PB signal for WPT  and WDs to backscatter their bits.   In the second phase, it helps each WD in bit offloading to the MEC server in the AT mode.  We can also see that baseline two performs better than baseline one up to $0.6$ since the former includes local computing, which leads to higher system EE. However, we observe a slight increase in system EE for baseline one since baseline one considers only static energy consumption corresponding to BC. 	{For a low value of $ \alpha$, more  transmission time is allocated to a WD with a better channel condition,  enhancing the system EE. However,  the high $\alpha$ region imposes  a stringent data rate on the system. Therefore, regardless of the change in energy consumption, the rate of change in system throughput for baseline scheme 1 becomes relatively subtle, achieving the maximum system EE. This is caused by resource allocation budget limitations.}  In contrast, for baseline two, the energy consumption of the calculation bits, i.e., local computing, is considered besides the energy consumption of BC. 
	
	Fig. \ref{fig5} shows the trade-off between system throughput and energy consumption by changing the value of $\alpha$. This figure illustrates Fig. \ref{fig1} in more detail. As observed, the system throughput improves when the value of $\alpha$ increases. However, this results in more energy consumption, which thus decreases the system EE. Interestingly, the RIS leads to much less energy consumption in Algorithm 2, resulting in high system EE.

	\subsection{Trade-Off Regions}	
	{ Fig. \ref{fig4} reveals interesting results for trade-off regions with the  parametric plots, i.e., $[ \eta_{EE}(\alpha), R_{sum}(\alpha) ]$, where $\alpha$ varies from  0 to 1. } Fig. \ref{fig2} depicts the system EE versus the system throughput for the different baseline schemes. As the system throughput grows, the EE increases and then decreases. As the system EE is the ratio of the throughput to the total energy consumption and the system throughput is itself a function of transmit power, increasing the throughput leads to higher energy consumption. {As the system sum rate increases, WDs must compete for time resources more fiercely and thus increase their transmit powers during the  active transmission to meet their throughput requirements, causing a faster decay in the system EE.} Fig. \ref{fig2}  also shows that the proposed schemes with the RIS perform better than the no-RIS case.
	Another interesting phenomenon is that baseline one outperforms baseline two. This trend occurs because baseline one considers only the energy consumption of the BC mode, which shows that the harvested energy compensates for the energy consumption of this mode and leads to a better performance gain than baseline two.
	
	Fig. \ref{fig3} highlights the PFs of the system throughput and energy consumption for the cases with and without RIS. The PF includes all Pareto optimal solutions. Recall that this means there is no other policy that can improve one objective without detriment to the other objective. Along with the two PFs, any slight improvement in energy consumption for all cases increases the system throughput, which is desirable. The derivative of the PF  shows the marginal benefit of increased system throughput per unit increase in energy use.  Notice that this gradient increases with the number of RIS elements in comparison to the no-RIS case. In addition, we also study the effect of random phase shifts at the RIS by setting the components in $\boldsymbol{\theta}$ randomly in  $[0,2\pi]$ and then optimize the resource allocations. By applying the random phase shift, the obtained PF is closer to the no-RIS case since without optimizing phase shifts, the average signal power of the reflected signal is comparable to that of the signal from the direct link. Significantly,  the  RIS deployment improves the PF by reducing the total energy consumption and thus increasing the system throughput compared to the no-RIS and random-phase-shift cases.
	
		\begin{figure}
			\centering
			\includegraphics[width=3.5in]{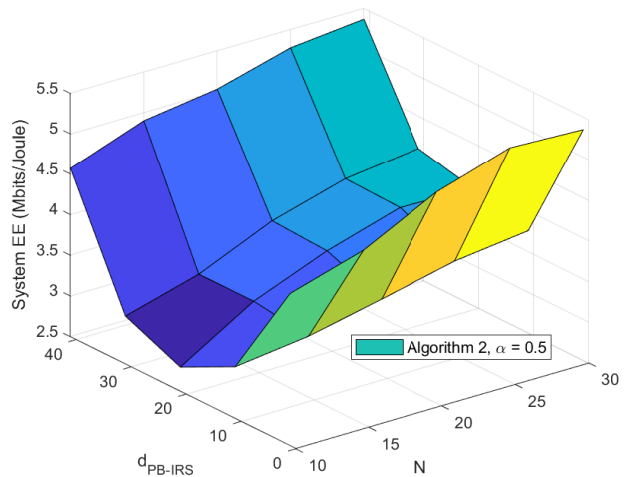}
			\caption{\small System EE versus $N$ and $d_{\text{PB-RIS}}.$ }\label{surf}
			\vspace{-6mm}
	\end{figure}
	
		\begin{figure}
			\centering
			\includegraphics[width=3.5in]{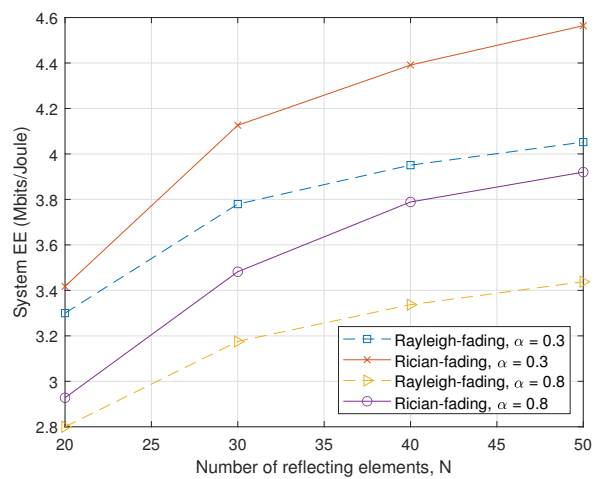}
			\caption{\small System EE versus number of reflecting elements, $N$.}\label{N}
			\vspace{-5mm}
	\end{figure}

	\subsection{Impact of Reflecting Elements and Distance}
	In Fig. \ref{surf}, the impact of reflecting elements, $N$, and distance, $d$, on the system EE is investigated for Algorithm 2  with $\alpha=0.5$. This surface is interesting since it provides the resource allocator with the downward and ascending trend of the system EE in terms of distance and number of reflecting elements. As can be seen, increasing the number of reflecting elements can increase system performance regardless of RIS distance with the PB and the MEC due to the contribution of the reflected link. On the other hand, the effect of distance between the RIS and the PB is U-shaped, indicating that the system performance increases when the RIS is close to either the MEC or the PB. However, if RIS is placed in the middle between the PB and the MEC, significant performance degradation is observed. This can be interpreted as follows: consider $d_{\text{PI}}$ and $d_{\text{IU},k}$ as the distances between the PB-RIS and RIS-WD $k$, respectively. When these two distances are equal, i.e., $d_{\text{PI}}=d_{\text{IU},k}$, the overall path loss that includes the PB and the MEC, $d_{\text{PI}}d_{\text{IU},k}$ is maximized, leading  to the largest path loss. As can be observed for $ N=10 $, moving the RIS away from $30$ m to $40$ m of the PB exhibits approximately $3$ (Mbits/Joule) to $4.5$ (Mbits/Joule) improvement in the system EE. This is because the RIS moves closer to the MEC,  resulting in a  lower path loss between the RIS and MEC server. As a summary:
	\begin{enumerate}
	    \item {Fig. 7a is similar to 6a. In Fig. 6a, we plot the $\eta_{EE}=\frac{R_{\text{sum}}}{E_{\text{total}}}$ versus $\alpha$. In Fig. 7a, we plot $\eta_{EE}(\alpha)=\frac{R_{\text{sum}}(\alpha)}{E_{\text{total}}(\alpha)}$ versus $R_{\text{sum}}(\alpha)$ for different values of $0\leq \alpha \leq 1$. As the value of $\alpha$ increases for green and red curves in Fig. 6a,  the EE saturates and reaches its maximum point. This is because increasing  $\alpha$  imposes a stringent data rate on the system. Further increasing  $\alpha$ then increases the EE marginally because of the resource limitations, i.e., time allocation and maximum transmit power of the PB. 
	    Thus, the red and green curves reach their maximum value for $0.7 \leq \alpha \leq 1$.}

	    \item {In Fig. 7a, for a high value of $\alpha$, similar to Fig 6a, the system EE and system throughput reach their maximum value. That is why the red and green curves end before $0.4$ Mbps. As observed, the maximum value of system EE for the red and green curves are $3$ (Mbps/joule) and $3.7$ (Mbps/joule), respectively, which is the same as Fig. 6a. Further, the maximum value of the system throughout for the red and green curves are $0.3$ (Mbps) and $0.38$ (Mbps), respectively.}
 
	\end{enumerate}

	\subsection{The impact of the number of  reflecting elements,  $N$}	
	Finally, Fig. \ref{N} reveals system EE versus the number of reflecting elements at the RIS for different values of $\alpha$. For comparison, we consider the Ricain fading model with LoS components for the channel links between the PB-RIS and the RIS-MEC  with a 10 dB  Rician factor  \cite{Zargariii}. We observe that the system EE for all schemes improves monotonically by increasing the number of reflecting elements at RIS, $N$.  Large $N$  provides a higher achievable data rate and much lower aggregated energy consumption, leading to a higher system EE. For significant values of $N$, the proposed scheme with the Ricain fading model has a high impact on the performance gain compared to the Rayleigh fading model since more reflecting elements with the LoS links contribute to a higher achievable data rate. These observations show that the RIS with no RF  chains provides a solid reflective channel link using lower-cost passive elements.
	
	\section{Conclusion}	
	
	This paper studied the joint time/power allocations, local computing frequency, execution time, and RIS phase shifts for a multi-WD WPT-based BC-MEC network supported by RIS, where a non-linear EH model was considered at each WD. In particular, the system EE was maximized subject to the following constraints: minimum required computation task bits, EH, total transmission time, maximum CPU frequency, backscattering coefficients, total transmit power, and phase shifts at the RIS. We considered the EE maximization problem as a MOOP and exploited the Tchebycheff method to transform it into two  SOOPs. To solve the SOOPs, we derived optimal closed-form solutions for the resource allocations,  applied SDR/MM techniques,  and introduced a penalty function to optimize the reflecting elements at the RIS. Finally, simulation results demonstrated that the RIS offers reliable performance even for a few reflecting elements and confirmed the effectiveness of the proposed algorithm. Furthermore, since this is the first manuscript to design resource allocations for WPT-based RIS-BC-MEC networks, it opens up many future research directions. The first extension would be to investigate  WDs that are equipped with multiple antennas. That opens up the possibility of beamforming and other multiple antenna techniques to enhance the system performance.  The second extension could be to maximize the EE under the imperfect CSI scenarios and to develop robust algorithms against CSI imperfections. There are many more future directions, and this list is not exhaustive.

\end{document}